\newtheorem{Thm}{Theorem}[section]
\newtheorem{Assumption}{Assumption}
\newtheorem{Lem}[Thm]{Lemma}
\newcommand*\X{\mathbf{X}}
\newcommand*\XX{\X(-u)}
\newcommand*\D{\mathbf{D}}
\newcommand*\Y{\mathbf{Y}}
\newcommand*\x{\mathbf{x}}
\newcommand*\PP{\mathcal{P}}
\newcommand*\E{\mathbb{E}}
\newcommand{\argmin}{\operatornamewithlimits{arg\,min}}
\newcommand\blfootnote[1]{%
  \begingroup
  \renewcommand\thefootnote{}\footnote{#1}%
  \addtocounter{footnote}{-1}%
  \endgroup
}
\title[Dynamic Networks with Multi-scale Temporal Structure]{Dynamic Networks with Multi-scale Temporal Structure}
\author[Author 1 {\it et al.}]{Xinyu Kang}
\address{Boston University,
Boston,
USA}
\author{Apratim Ganguly}
\address{Boston University,
Boston,
USA}
\author{Eric D. Kolaczyk}
\address{Boston University,
Boston,
USA\blfootnote{\textit{Address for correspondence:} Eric Kolaczyk, Department of Mathematics \& Statistics, Boston University, 111 Cummington Mall, Boston MA, 02215, USA. E-mail: kolaczyk@bu.edu}\\}
\begin{document}
\begin{abstract}
We describe a novel method for modeling non-stationary multivariate time series, with time-varying conditional dependencies represented through dynamic networks. Our proposed approach combines traditional multi-scale modeling and network based neighborhood selection, aiming at capturing temporally local structure in the data while maintaining  sparsity of the potential interactions. Our multi-scale framework is based on recursive dyadic partitioning, which recursively partitions the temporal axis into finer intervals and allows us to detect local network structural changes at varying temporal resolutions. The dynamic neighborhood selection is achieved through penalized likelihood estimation, where the penalty seeks to limit the number of neighbors used to model the data. We present theoretical and numerical results describing the performance of our method, which is motivated and illustrated using task-based magnetoencephalography (MEG) data in neuroscience.\par
Keywords: Dynamic network; Multiscale modeling; Vector autoregressive model
\end{abstract}
\section{Introduction}
\noindent
The automated, simultaneous monitoring of each unit in a large complex system has become commonplace. Frequently the data observed in such a system is in the form of a high dimensional multivariate time series. Domain areas where such a paradigm is particularly pertinent include computational neuroscience (e.g., temporal imaging across voxels or brain regions) and finance (e.g., investment returns across stocks or levels of lending among central banks).  The combination of system and time series in these settings suggests a role for dynamic network modeling, a quickly developing area of study in the field of network analysis.

As the basic object of treatment in this paper we consider a multivariate time series, $\left(X_t(1),\cdots, X_t(N)\right)$, observed at each of $N$ units at times $t=1,\ldots, T$, as a set of measurements from across a system.  We will use a graph $G=(V,E)$ to describe the conditional dependencies among the time series across the system. Here $V=\{1,\ldots,N\}$ are vertices corresponding to the $N$ units in the system, and $E$ is the collection of vertex pairs joined by edges.   Given data, we seek to select an appropriate choice of $G$ that best characterizes the system, using techniques of statistical modeling and inference. This task is known as network topology inference~\cite[Ch 7.3]{Kolaczyk:2009:SAN:1593430}.  The notion of association used in this paper is a type of partial correlation, analagous to that underlying so-called Granger causality (\cite{granger1969investigating}).  Granger causal types of models have been widely utilized in financial economics -- see \cite{hamilton1983oil}, \cite{hiemstra1994testing} and \cite{sims1972money}, for example -- and in biological studies -- see \cite{mukhopadhyay2007causality}, \cite{bullmore2009complex} for instance.

Granger causal models traditionally assume a stationary time series and take a vector-autoregressive (VAR) form.  Here we adopt a restricted-VAR(p) model, defined as a VAR model without the self driven components:
\begin{align*}
X_t(u) = \sum_{v \in V\backslash\{u\}}\sum_{\ell=1}^p X_{t-\ell}(v)\theta^{(\ell)}(u,v) + \epsilon_t(u),
\end{align*}
where $\theta^{(\ell)}(u,v)$ collects the influence of the node $v$ on node $u$ at lag $\ell$ and $\epsilon_t(u)$ is independent Gaussian white noise.  It is said that $X(v)$ Granger causes $X(u)$ if and only if $\theta^{(\ell)}(u,v) \neq 0$ for some $\ell = 1,\cdots, p$. We use the term `restricted' in describing this model because we restrict $\theta^{(\ell)}(u,u)$ to be $0$ for all $u$, $\ell$.  This requirement is made for notational convenience, and without loss of generality, in that it essentially assumes the self-driven component has been removed and that our network characterizes only relationships between distinct nodes.  The notion of `network' in this framework is made precise through graphs defined as a function of the underlying graphical model. That is, through conditional independence relations, coded in one-to-one correspondence with patterns of non-zero elements among the
$\theta^{(\ell)}(u,v)$.  Specifically, $G = (V,E)$ is a directed graph with an edge from $v$ to $u$ if and only if $\|\boldsymbol\theta(u,v)\|_2 \neq 0$, where $\boldsymbol\theta(u,v) = \left(\theta^{(1)}(u,v),\cdots,\theta^{(p)}(u,v) \right)^\prime$.

Multivariate time series data is often non-stationary.  Furthermore, it is not uncommon to expect changes in a system across multiple time scales.  For example, it is widely recognized that financial time series of quantities like equity, interest, and credit can exhibit volatility across multiple scales (e.g., \cite{fouque2011multiscale}).  Similarly, it is believed that neuronal dynamics within the cerebral cortex in the brain interact with anatomical connectivity in such a way as to produce functional connectivity relationships between brain regions at multiple time scales (\cite{honey2007network}). These observations suggest the need for a notion of multi-scale analysis when doing network-based modeling of multivariate time series in systems like these.  However, while temporal multi-scale analysis is a concept well-established in time series analysis, it does not appear to have yet emerged in network modeling.

Motivated by the elements of the above discussion, we focus in this paper on the problem of detecting dynamic connectivity changes across multiple time scales in a network-centric representation of a system, based on multivariate time series observations. Our approach combines the traditional Granger causal type of modeling with partition-based multi-scale modeling.  We adopt a change point perspective, so that our model class consists of concatenations of restricted-VAR(p) models, each with its own $\boldsymbol\theta$ constant over a given interval of time. The result is then a time-indexed directed graphical model, from which we define a dynamic network $G_t = (V, E_t)$, in analogy to the stationary case. Our goal is then to infer the change points distinguishing the stationary intervals and the corresponding edge sets $E_t$.

A number of works in recent years have focused on modeling multivariate time series using causal network types of models. A common theme among these is to generalize the work of \cite{meinshausen2006high}, who show that the Lasso can consistently recover the neighborhood structure of a Gaussian graphical model in high-dimensional settings under appropriate assumptions.  Seminal examples of such extensions include~\cite{bolstad2011causal}, where they assume the time series are stationary and carry out variable selection using group-lasso principles; and \cite{JMLR:v16:basu15a}, where they estimate the network Granger causality for panel data using the group-lasso. Similarly, in the work by \cite{barigozzi2014nets}, networks are defined and inferred through use of the long-run partial correlation matrix between multiple time series. For non-stationary multivariate time series processes,  \cite{long2005nonstationary} use time-varying auto-regressive models with adaptively chosen -- but fixed -- windows.  These latter are applied to functional MRI data.

While we make use of ideas similar to those above, our approach is significantly different from those proposed previously in the sense that we incorporate them within a multi-scale framework. Multi-resolution analysis was formally proposed by \cite{mallat1989theory} and others in the late 1980's and has been known for mathematically elegant, computationally efficient and often domain-specific representations of data that are inhomogeneous in their support.  While there is by now a vast literature on the topic of multiscale statistical modeling, with literally scores of representations for standard signal and image analysis applications alone, a key representation is that of recursive dyadic partitioning.  A fundamental result from \cite{donoho1997cart} relates the method of recursive dyadic partitioning and the selection of a best-orthonormal basis, where the basis is selected from a class of unbalanced Haar wavelets. The partition-based multi-scale method has proven to be particularly natural and useful in extending wavelet-like ideas to nontraditional settings, for example, in the context of generalized linear models, irregular spatial domains, etc. -- see \cite{kolaczyk2005multiscale}, \cite{louie2006multiscale}, and \cite{willett2007multiscale}, for instance.  For a recent survey of statistical methods for network inference from time series, in general, see~\cite[Sec 4.2]{betancourt2017bayesian}.

Our main contribution in this paper is to present a partition-based multi-scale dynamic causal network model, and a corresponding method of network topology inference, that captures the dynamics of a system in a manner sensitive to changes at multiple time scales, while encouraging sparsity of network connectivity. There are three key elements in the framework: (i) we partition the non-stationary time axis into blocks at various scales, with independent, stationary VAR models indexed by blocks; (ii) to prevent overfitting, we impose a counting penalty to penalize the number of blocks used; and (iii) we do neighborhood selection within each block using a group-lasso type of estimator.\par
This paper is organized as follows. In Section 2, we provide the details our partition-based dynamic multi-scale network model and methodology. In Section 3, we present several characterizations of theoretical properties of our estimator. The broad potential impact of our method is demonstrated in Section 4, through the use of both simulated data and a magnetoencephalography (MEG) data set. Technical proofs are provided in the appendix. Code implementing the methodology proposed in this paper is available from \url{https://github.com/KolaczykResearch/MS-Dyn-Networks-Code}.

\section{Partition-based multi-scale dynamic network models}
\noindent
In this section we define the class of dynamic network models developed in this paper, we describe our proposed approach to network inference within this class, and we summarize the implementation of this approach in the form of an algorithm.
\subsection{Piecewise vector autoregressive models}
\noindent
We are interested in non-stationary multivariate time series, as the stationarity assumption required by traditional vector autoregressive modeling is overly restrictive in the types of financial and biological applications motivating our work.  Accordingly, we define a class of restricted piece-wise vector autoregressive models.  These models are of order $p$ [rP-VAR(p)] and break the non-stationary time series into an unknown number of $M$ stationary blocks, with a stationary restricted VAR(p) model within each block.

More specifically, we equip the parameters in our previously defined restricted VAR(p) model with a time index:  
\begin{equation}
X_{t}(u) = \sum_{v\in V\backslash\{u\}}\sum_{\ell=1}^p X_{t-\ell}(v)\, \theta^{(\ell)}_{t}(u,v) + \epsilon_{t}(u) \enskip .
\label{eq:time.ind.rVarp}
\end{equation}
Next we restrict the coefficient vectors $\boldsymbol\theta_t(u,v) = \left(\theta^{(1)}_t(u,v),\cdots,\theta^{(p)}_t(u,v) \right)^\prime$ to be constant within each of $M$ blocks defined by change points $\tau_0 = 0$ and $\tau_{M+1} = T$.  Finally, we assume independence of the multivariate time series across blocks.  We then capture the evolving dependency structure of the data using a time-varying directed graph $G = (V, E_t)$ with an edge from $v \rightarrow u$ if and only if $\|\boldsymbol\theta_t(u, v)\|_2 \neq 0$. 

Certain of these choices could be relaxed, at the expense of a nontrivial increase in complexity of both computation and exposition.  The assumption of independence between blocks could be relaxed to allow for weak dependence over $p$ time steps just prior to and after each changepoint, following the suggestion in~\cite[Remark 1]{davis2008break}.  Additionally, we assume the number of lags $p$ is fixed and known.  In contrast, an unknown value of $p$ in principle could be incorporated into our framework, with selection made through an additional penalty term. 

To organize the collection of blocks defining our class of rP-VAR(p) models, we use the notion of recursive partitioning.  This choice is both consistent with our goal of capturing multi-scale structure (as described above) and facilitates the development of sensible algorithms for computational purposes.  We will consider two types of partitioning: recursive dyadic partitioning and (general) recursive partitioning.  Without loss of generality, we consider partitioning restricted to the unit interval $(0, 1]$ interchangeably with partitioning of the interval $(0,T]$.  A partition $\PP$ of $(0,1]$ is a decomposition of the latter into a collection of disjoint subintervals whose union is the unit interval.  In our treatment we restrict attention to partitions of finite cardinality.

Both recursive dyadic partitioning and recursive partitioning produce partitions $\PP$ by recursively partitioning the unit interval.  They differ only in the rule defining the choice of partitions that may be produced at each iteration, with that for the former being more restrictive than that for the latter.  Under recursive dyadic partitioning, starting with the unit interval, we recursively split some previously resulting interval into two sub-intervals of equal length.  Under recursive partitioning more generally, the restriction to dyadic subintervals is removed.  Under both approaches, partitioning is done only up to the resolution of the data.  Therefore, with $T$ observation times, partitioning is done only at the points $\{i/T\}_{i=1}^{T-1}$, and only up to a total of $T$ subintervals.  Under recursive dyadic partitioning, we require that the number of observations $T = 2^J$ be a power of two.

Let $\PP^*_{D_y}$ denote the complete recursive dyadic partition (with the dependence on $T$ suppressed for notational convenience), and $\PP^*$, a complete recursive partition.  Additionally, denote by $\PP\preceq \PP^*_{D_y}$ (respectively, $\PP\preceq \PP^*$) a subpartition of $\PP^*_{D_y}$ (respectively, $\PP^*$), i.e., as one of the partitions defined through the process of successive refinement from $(0,1]$ to $\PP^*_{D_y}$ (respectively, $\PP^*$).  This notation helps emphasize one of the key advantages of the partition-based perspective, i.e., that algorithms to search efficiently over model spaces indexed by these partition classes can be designed to do so in $\mathcal{O}(T)$ and $\mathcal{O}(T^3)$ computational complexity, respectively, using dynamic programming principles.  See~\cite{kolaczyk2005multiscale}.  The advantage of recursive dyadic partitioning over recursive partitioning therefore typically is in computational cost.  We will define a class of rp-VAR(p) models indexed by these partition classes and propose algorithms for model selection that exploit the accompanying dynamic programming principles.

\subsection{Network Inference}
\noindent
The graphs $G$ corresponding to the restricted piece-wise $VAR(p)$ class of models we have introduced can be thought of as a union of the neighborhoods surrounding each node $u$.  And, in fact, we will infer the topology of the network $G$ neighborhood by neighborhood. 

Consider, for example, the cartoon illustration in Figure~\ref{fig:cartoon} where, without loss of generality, the focus is on the local neighborhood of a node/series $u$ and $T=160$ for illustration. From time $[0, 60)$, each of the four other nodes $B, D, C,$ and $E$ Granger causes $u$. From time $[60, 80)$, only node $B$ Granger causes $u$, and for the rest of the time, $B$ and $D$ Granger cause $u$.  Under our proposed approach, we estimate the times $\tau_m$ at which the changes happened. Given the estimated change points, we then infer the neighborhood structure during the time interval $[0,\hat{\tau_1})$, and then $[\hat\tau_1, \hat\tau_2)$, and so on. Put simply, our approach is to estimate the change-points and the neighborhood structures within each stationary time-interval defined by those change-points, where the changepoints are defined through either a recursive dyadic partition or a recursive partition.  We describe each of these two cases in turn below.
\begin{figure}[!htb]
\centering
\begin{tikzpicture}[scale=.6]
\draw [<->] (0,0) -- (6,0)
node[pos=0,above] {$\tau_0=0$};
\draw [<->] (6,0) -- (8,0)
node[pos=0,above] {$\tau_1=60$};
\draw [<->] (8,0) -- (16,0)
node[pos=0,above right] {$\tau_2=80$}
node[pos=1,above] {$\tau_{max}=160$};
\end{tikzpicture}

\rule{0ex}{10ex}
\hspace{-0.5in}
\begin{tikzpicture}[scale=.5, transform shape]
\tikzstyle{every node} = [circle, fill=gray!30]
\node (a) at (0, 0) {u};
\node (b) at +(0: 1.5) {B};
\node (c) at +(60: 1.5) {C};
\node (d) at +(120: 1.5) {D};
\node (e) at +(180: 1.5) {E};
\foreach \from/\to in {b/a, c/a, d/a, e/a}
\draw [->] (\from) -- (\to);
\end{tikzpicture}
\hspace{0.2in}
\begin{tikzpicture}[scale=.5, transform shape]
\tikzstyle{every node} = [circle, fill=gray!30]
\node (a) at (0, 0) {u};
\node (b) at +(0: 1.5) {B};
\node (c) at +(60: 1.5) {C};
\node (d) at +(120: 1.5) {D};
\node (e) at +(180: 1.5) {E};
\foreach \from/\to in {b/a}
\draw [->] (\from) -- (\to);
\end{tikzpicture}
\hspace{0.3in}
\begin{tikzpicture}[scale=.5, transform shape]
\tikzstyle{every node} = [circle, fill=gray!30]
\node (a) at (0, 0) {u};
\node (b) at +(0: 1.5) {B};
\node (c) at +(60: 1.5) {C};
\node (d) at +(120: 1.5) {D};
\node (e) at +(180: 1.5) {E};
\foreach \from/\to in {b/a, d/a}
\draw [->] (\from) -- (\to);
\end{tikzpicture}\\[5pt]
\caption{Cartoon version of the underlying network structure.}
\label{fig:cartoon}
\end{figure}
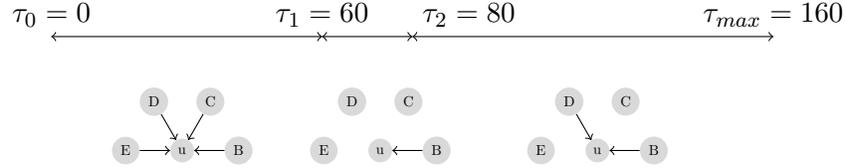

%\subsubsection{Multi-scale modeling and inference using RDP}
%\noindent
Suppose that our changepoints $\tau_i$ are restricted to correspond to the boundaries of some recursive dyadic partition.  For a given node $u$, we estimate the vector $\boldsymbol\theta\equiv \left( \theta_{t_i}^{(\ell)}(u,v)\right)$, defined for all nodes $v\in V\setminus \{u\}$ and at all times $t_i=i/T$ where $i=1,\ldots, T$, by choosing some optimal member from the classes rP-VAR(p) defined by all possible partitions $\PP\preceq \PP^*_{D_y}$ of the unit interval. 
Formally, we define the space of all possible values of $\boldsymbol{\theta}$
\small
\begin{align}
\Gamma_{RDP}^{(N-1)p} \equiv 
&\left\{\boldsymbol{\theta} \left | \theta_t^{(\ell)}(u, v) = \beta_0^{(\ell)}(u,v) + \sum_{I \in \ell_{NT}(\PP)}\beta_I^{(\ell)}(u,v) h_I(t) \right. 
 \quad\forall \, \ell, v, \text{for some }  \PP \preceq \PP_{D_y}^*  \right\}\enskip ,
\label{Space_RDP}
\end{align}
\normalsize\noindent
where $\PP$ is a partition common to all coefficient functions $\theta^{(\ell)}_t(u,v)$ across nodes $v$ and lags $\ell$, for each fixed $u$. In this expression, $\ell_{NT}(\PP)$ is the set of all non-terminal (NT) intervals encountered in the construction of $\PP$, while $\beta_0^{(\ell)}(u,v)$ and $ \beta_I^{(\ell)}(u, v)$ are the (non-zero) coefficients in a reparameterization of $\theta_t^{(\ell)}(u, v)$ with respect to the unique (dyadic) Haar wavelet basis  $\{h_I\}_{I \in \ell_{NT}(\PP^*_{Dy})}$ associated with the complete recursive dyadic partition $\mathcal{P}_{Dy}$.  In particular, a wavelet $h_I$ has as its support the interval $I$, and is proportional to the values $1$ and $-1$ on the two subintervals defined by a split at the midpoint of $I$.  See~\cite{donoho1997cart} or~\cite{kolaczyk2005multiscale}, for example, for details on this correspondence between recursive dyadic partitions and classical Haar wavelet bases.  It is this correspondence that makes explicit the multiscale nature of our approach.

Based on this model class, we define a complexity-penalized estimator $\boldsymbol{\hat\theta}_{RDP}$ of $\boldsymbol{\theta}$ as follows:
\begin{align}
\hat{\boldsymbol{\theta}}_{RDP} \equiv \argmin_{\boldsymbol{\tilde\theta} \in \Gamma_{RDP}^{(N-1)p}}\left\{ -\log p\left(\X(u)|\X(-u),\boldsymbol{\tilde\theta}\right) + 2\sum_{v \in V \backslash \{u\}}\text{Pen}_{RDP}(\boldsymbol{\tilde\theta}(u,v))\right\}\enskip .
\label{rdp}
\end{align}
Here $\X(-u)$ is the lagged design matrix of dimension $T \times (N-1)p$ based on the observed time series information for all nodes except $u$.  That is, we define $\X(-u) = (\X(1), \cdots, X(u-1),X(u+1),\cdots, \X(N))$, with each $\X(\cdot)$ a $T \times p$ matrix defined as $\X(\cdot) = (\X_{-1}(\cdot),\cdots,\X_{-p}(\cdot))$, where $\X_{-\ell}(\cdot)$ contains the lagged observations $\X_{-\ell}(\cdot) = (X_{T-\ell}(\cdot),\cdots,X_{-\ell+1}(\cdot))^\prime$. The function $\text{Pen}_{RDP}(\boldsymbol{\tilde\theta}(u,v))$ is the penalty imposed for incorporating node $v$ into the model.

%\subsubsection{Multi-scale modeling and inference using RP}
%\noindent
Now consider the case where the network changepoints $\tau_i$ are restricted to correspond to the boundaries of some arbitrary (i.e., non-dyadic) recursive partition.  Define $\mathcal{L}$ to be the library of all $(T-1)!$ possible complete recursive partitions $\PP^*$, and let
\small
\begin{align}
& \Gamma_{RP}^{(N-1)p} \equiv 
& \left\{\boldsymbol{\theta} \left| \theta_t^{(\ell)}(u, v) = \beta_0^{(\ell)}(u,v) + \sum_{I \in \ell_{NT}(\PP)}\beta_I^{(\ell)}(u,v) h_I(t) \,\, \forall \ell,v, \text{ for some } \PP \preceq \PP^*, \PP^* \in \mathcal{L} \right.\right\} \enskip .
\label{Space_RP}
\end{align}
\normalsize
Here $\{h_I\}_{I \in \ell_{NT}(\PP^*)}$ is the unique (unbalanced) Haar wavelet basis corresponding to a given complete recursive partition $\PP$.  As in the case of the classical dyadic Haar basis, there will be $T$ piecewise constant basis functions for $T$ time points, each indexed according to its support interval $I$ and proportional in value to $1$ or $-1$ on two subintervals (except for one `father' wavelet, defined to capture the average of $\theta_t^{(\ell)}(u,v)$ over $(0,T]$).  But, unlike before, the subintervals defining these wavelets are not necessarily of equal length.  This definition allows, for example, for the representation of non-dyadic changepoints in a  potentially more efficient manner (i.e., using fewer recursive splits).  See~\cite{kolaczyk2005multiscale} for details.

Analogous to the dyadic case, our estimator defined under recursive partitioning is given by:
\begin{align}
\hat{\boldsymbol{\theta}}_{RP} \equiv \argmin_{\boldsymbol{\tilde\theta} \in \Gamma_{RP}^{(N-1)p}}\left\{ -\log p\left(\X(u)|\X(-u),\boldsymbol{\tilde\theta}\right) + 2\sum_{v \in V\backslash \{u\}} \text{Pen}_{RP}(\boldsymbol{\tilde\theta}(u,v))\right\} \enskip .
\label{rp}
\end{align}
This is a maximum complexity-penalized likelihood estimator of $\boldsymbol{\theta}$ defined on a much broader space. It includes all possible partitions that divide the unit interval into $M\leq T$ blocks, where sub-intervals need not necessarily be of equal size. This increase in richness of representation, however, will be seen to come at a computational cost.

%\subsubsection{Penalty functions for estimator using RDP and RP}
%\noindent
The penalty function used to define these two estimators is described as follows.  Define the $p$-length vector $\boldsymbol{\theta}_I(u,v)$ to be the collection of (fixed) values $\theta^{(\ell)}_t(u,v)$ over all lags $\ell=1,\ldots, p$ for $t\in I$.  For recursive partitioning, we then define the penalty of incorporating a given node $v$ into the model to be
\begin{align}
\text{Pen}_{RP}(\boldsymbol{\theta}(u, v)) = \frac{3}{2}\#\{\mathcal{P}(\boldsymbol{\theta})\} \log T + \lambda\sum_{I \in \mathcal{P}(\boldsymbol{\theta})} \|\boldsymbol{\theta}_I(u, v)\|_2 \enskip .
\label{penaltyFunction}
\end{align}
For recursive dyadic partitioning, we replace the value $3/2$ by $1/2$, indicating that we penalize less severely in the simpler model class.

Note that this penalty is composed of two parts. In the first part, $\#\{\PP(\boldsymbol{\theta})\}$ is the cardinality of the partition $\PP(\boldsymbol{\theta})$ corresponding to a given value $\boldsymbol{\theta}$ in $\Gamma^{(N-1)p}_{RDP}$ or $\Gamma^{(N-1)p}_{RP}$.  Because this partition is assumed common across lags $\ell$ and for all $v\in V\setminus \{u\}$, it may be thought of as a union, i.e., 
$\PP(\boldsymbol{\theta}) = \bigcup\limits_v \PP(\boldsymbol{\theta}(u,v))$, where $\PP(\boldsymbol{\theta}(u,v))$ is a partition corresponding specifically to the dynamic behavior of the coefficients $\theta^{(\ell)}_t(u,v)$ collectively over all lags $\ell$.  Thus the contribution of $\#\{\PP(\boldsymbol{\theta})\}$ to the penalty may be thought of as counting the number of times there is a need to insert a changepoint due to a change in the relation of node $u$ with any other node $v$ at any lag $\ell$.  That is, it controls the number of partitions for the entire neighborhood.  

The second part of the penalty in (\ref{penaltyFunction}) is a sum, over intervals $I$ in the relevant partition $\PP$, of the $\ell_2$ norms of the corresponding coefficient lag vectors. It is essentially a group lasso type penalty, in the spirit of that originally proposed by \cite{yuan2006model}, with tuning parameter $\lambda$. The purpose of introducing this term is to encourage sparseness in the connectivity of each neighborhood, and hence of the network as a whole.  Our use of the group lasso here derives from the definition of our network $G$, where an edge is present regardless of in which lag there is a causal effect of a node $v$ on the node $u$. 
The choice of tuning parameter controls the amount of shrinkage of the group of coefficients. Large $\lambda$  results in sparser coefficient vectors.   We describe a method for choosing the tuning parameter in Section~3.

\subsection{Implementation}
\noindent
In this section, we discuss the implementation of our proposed methods of inference. For both the recursive dyadic partitioning estimator in (\ref{rdp}) and the recursive partitioning estimator in (\ref{rp}), the general structure of the algorithm is similar.  We describe the latter here and, for the sake of completeness, provide the former in the appendix.

\begin{algorithm}[bht]
 \KwData{$\X(u)$, $\X(-u)$, $p$}
 \KwResult{$\boldsymbol{\hat\theta}_{RP}$}
 \For{j = 1:p}{
 	\For{i = 1: T-j+1}{
  		Compute and store $pl_I$ on each interval $I$ using:\\
  		$pl_I = \sum_I(\X_I(u))^2$ for $I = \{t: t \in [i, i+j)\}$\;
  		optimumModel $\gets pl_I$\;
  }
 }
  \For{j = p+1:T}{
 	\For{i = 1: T-j+1}{
  		Fit restricted VAR(p) model for $\X_I(u), I = \{t: t \in [i, i+j)\}$\;
  		Compute and store $pl_I$ on each interval $I$\;
  		\eIf{$pl_I \leq pl_{I_l^i} + pl_{I_r^i} + \text{Penalty}$}{
  		 optimumModel $\gets pl_I$\;
  		 Update changePoint\;
   		}{
   		optimumModel $\gets pl_l$ and $pl_r$\;
   		Update changePoint\;
  	}
  }
 }
 \caption{Multiscale dynamic causal network inference using recursive partitioning.}
 \label{algorithm.RP}
\end{algorithm}

Calculation of the estimator (\ref{rp}) can be accomplished as detailed in Algorithm~\ref{algorithm.RP}.   The required inputs are the time series $\X(u)$ for node $u$, the lagged time series $\X(-u)$ for all other nodes, and a prespecified number of lags $p$. Note that  $p+1$ is the minimum number of observations necessary to fit a model of $p$ lags. Initially we set the penalized likelihood to be the sum of squares of the data in the intervals $I$ that contain less than the minimum required number of observations. There are $(T-1)!$ possible ways of partitioning (i.e., complete recursive partitions $\PP^*$) in the library $\mathcal{L}$. Each partition, however, is composed only of subsets of ${T+1}\choose{2}$ unique intervals, given that each interval is defined between two endpoints. The algorithm begins by fitting group lasso penalized models on intervals $I$ that contain more than $p+1$ observations. Therefore we have $\mathcal{O}(T^2)$ calls for fitting the group lasso type of models. (Because solving the group lasso regression generally requires iterative convex optimization, we do not quantify specifically the corresponding time complexity of this step.) We then consider intervals that contains $2(p+1)$ observations and compare the penalized likelihood $pl_I$ in those intervals to the sum of the penalized likelihoods of the optimal sub intervals containing $p+1$ observations and retain the one with smaller value. The procedure is repeated for intervals containing $k $ observations, with $k = 2(p+1)+1, \cdots, T$. There are $(k-1)$ ways of partitioning an interval of length $k$ into two. Let $\{I_l^i, I_r^i\}_{i=1}^{k=1}$ be all possible pairs of subintervals of $I$ such that $I_l^i \bigcup I_r^i = I$. We compare the penalized likelihood $pl_I$, defined in (\ref{rp}) but restricted to $I$, versus $\min_i\{ pl_{I_l^i} + pl_{I_r^i} + \text{Penalty}\}$, and select the optimal model to be the one which has smallest value. The comparison is of order $\mathcal{O}(T^3)$ and thus the total computational cost is $\mathcal{O}(T^2)$ calls to group lasso type of fitting and $\mathcal{O}(T^3)$ comparisons.

\section{Theoretical properties}
\noindent
In the previous section, we introduced our partition-based approach to modeling dynamical changes in the dependency relational structure among multiple time series, defined two estimators of the time-varying parameters underlying our models, and described an appropriate algorithm for calculations.
 In this section, we first show that the proposed approach can estimate a change point consistently.  We then present an empirically-based choice of the penalty parameter $\lambda$ in equation (\ref{penaltyFunction}) and show that through this choice we can control the Type I error rate in recovering the true neighborhood structure of a node $u$ within a given stationary time block.  Finally, we quantify the overall risk behavior of our estimators.

\subsection{Consistency of changepoint estimation}
\noindent
Suppose that there is a single change point at time $\tau$, with $1 < \tau < T$. Then under our approach the time series $\X(u)$ can be written as a concatenation of two parts of length $\tau$ and $T-\tau$.  We use $L$ to denote the set of all observations in the pre-$\tau$ period and use $R$ to denote the set of all observations in the post-$\tau$ period. Then we have:
\begin{align*}
X_t(u) = 
\left\{
\begin{array}{l}
\sum\limits_{v\in V\backslash \{u\}}\sum\limits_{\ell = 1}^pX_{t-\ell}(v)\theta_L^{(\ell)}(u,v) + \epsilon_t(u),\quad t \in [1, \tau]\\
\sum\limits_{v\in V\backslash \{u\}}\sum\limits_{\ell = 1}^pX_{t-\ell}(v)\theta_R^{(\ell)}(u,v) + \epsilon_t(u),\quad t \in (\tau, T] \enskip .
\end{array}
\right.
\end{align*}
Our change point selection consistency result extends the result of  \cite{bach2008consistency}, where the estimation consistency of the group lasso regression is established. The assumptions needed are the same as in that  previous work, which we briefly restate here.
\begin{Assumption}
$X_t(u)$ and $\X_t(-u)$ have finite fourth order moments: $\mathbb{E}(X_t(u))^4 < \infty$, and $\mathbb{E}\|\X_t(-u)\|^4 < \infty$.
\label{A1}
\end{Assumption}

\begin{Assumption}
Invertibility of the joint covariance matrix, defined as: \\
$
\Sigma_{\X_t(-u)\X_t(-u)} := \E (\X_t(-u)^\prime \X_t(-u)) - \left(\E \X_t(-u)\right)^\prime\left(\E \X_t(-u)\right) \in \mathbb{R}^{(N-1)p \times (N-1)p}
$
\label{A2}
\end{Assumption}\par
\begin{Assumption}
We denote $\boldsymbol{\hat{\theta}_t}$ any minimizer of $\E\left(X_t(u)-\X_t(-u)\boldsymbol\theta_t \right)^2$. We assume that $\E\left(\left(X_t(u)-\X_t(-u)\boldsymbol{\hat\theta}_t \right)^2|\X_t(-u)\right)$ is almost surely greater than some $\sigma_{\min}^2 > 0$.
\label{A3}
\end{Assumption}\par
\begin{Assumption}
$\max\limits_{v\in S^c} \frac{1}{p} \left\| \Sigma_{\X(v)\X(S)} \Sigma_{\X(S)\X(S)}^{-1} \text{Diag}(1/\|\boldsymbol{\theta}_t(u,v)\|_2)\boldsymbol{\theta}_t(u,S)\right\|_2 < 1$,
where $S$ is the set of nodes in the neighborhood of the $u$ where  $(\|\boldsymbol\theta_t(u, v)\|_2 \neq 0)$ and Diag$(1/\|\boldsymbol{\theta}_t(u,v)\|_2)$ denotes the block-diagonal matrix of size $|S|p$ in which each diagonal block equals to $\frac{1}{\|\boldsymbol\theta_t(u,v)\|_2}\mathbf{I}_{|S|p}$ with $\mathbf{I}_{|S|p}$ the identity matrix of size $|S|p$. $\boldsymbol{\theta}_t(u,S)$ denotes the concatenation of the coefficient vectors indexed by $S$.
\label{A4}
\end{Assumption}\par

Note that when $p=1$, Assumption \ref{A4} is referred to as the strong irrepresentable condition in \cite{zhao2006model}.
\begin{Assumption}
The size of the network increases no faster than the square root of the length of the time series: $\exists\, \gamma > 0$, such that $N= \mathcal{O}(T^\gamma)$ as $T \rightarrow \infty$ for $\gamma < 1/2$.
\label{A5}
\end{Assumption}

Consider the local test of
\begin{align*}
H_0 : \PP = [1, T]\quad vs \quad H_1: \PP = [1, \tau] \cup (\tau, T],
\end{align*}
using group lasso penalized least squares.  This test corresponds to the basic step of comparing models for two adjacent intervals at the heart of Algorithm~1 (i.e., one model for the union versus a separate model for each interval), where the penalty is simply the second component of $\text{Pen}_{RP}$ in (\ref{penaltyFunction}).  We have the following theorem:
\begin{Thm}
Assume that Assumptions \ref{A1} to \ref{A5} are satisfied, where $\lambda$ varies such that $\lambda \rightarrow 0$, $\lambda N \rightarrow 0$ and $\lambda T^{1/2} \rightarrow \infty$, as $T\rightarrow\infty$.  Then we have that
\begin{align}
&\mathbb{P}_{H_0}\left(\text{Decide } \PP = [1, T] \right) \longrightarrow 1
\label{equa1}\\
&\mathbb{P}_{H_1}\left(|\hat{\tau} - \tau| > \epsilon \right) \longrightarrow 0,\quad \forall \epsilon > 0 \enskip .
\label{equa2}
\end{align}
\label{splitting}
\end{Thm}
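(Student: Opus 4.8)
The plan is to reduce both statements to the behaviour of the \emph{oracle} least‑squares problem on the true neighbourhood of $u$, by first invoking the group‑lasso estimation consistency of \cite{bach2008consistency} block by block, and then running a classical change‑point analysis of the resulting profiled residual sum of squares. Fix $u$, let $S$ be its true (fixed‑size) neighbourhood, with true coefficients $\boldsymbol\theta^0$ under $H_0$ and $\boldsymbol\theta_L,\boldsymbol\theta_R$ on the two blocks under $H_1$; write $\Delta=\boldsymbol\theta_L(u,\cdot)-\boldsymbol\theta_R(u,\cdot)$ and $\Sigma=\Sigma_{\X_t(-u)\X_t(-u)}$. First I would establish the reduction: for an interval $I$ with $n_I$ observations and $\lambda n_I^{1/2}\to\infty$ (which holds whenever $n_I\asymp T$), Assumptions~\ref{A1}--\ref{A4} and \cite{bach2008consistency} give, on an event of probability tending to one, that the group‑lasso fit $\hat{\boldsymbol\theta}_I$ selects exactly $S$ and is $n_I^{1/2}$‑consistent; here Assumption~\ref{A5} is exactly what makes the prescribed window $T^{-1/2}\ll\lambda\ll N^{-1}$ non‑empty, while $\lambda\to0$ kills the estimation bias and $\lambda T^{1/2}\to\infty$ keeps the true coefficient blocks above the group‑lasso threshold. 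On this event $pl_I$ equals an oracle least‑squares $\mathrm{RSS}$ on $\mathrm{col}(\X_I(S))$ plus $2\lambda\sum_{v\in S}\|\hat{\boldsymbol\theta}_I(u,v)\|_2=o_p(1)$. I would then make this simultaneous over the $O(T)$ intervals $[1,s],(s,T],[1,T]$ with $s$ in the bulk $\eta_T\le s\le T-\eta_T$ (for some slowly growing $\eta_T$) by a union bound, which requires sharpening the deviation bounds in \cite{bach2008consistency} to $o(1/T)$ using only the fourth‑moment condition~\ref{A1}, the invertibility~\ref{A2}, and~\ref{A5}.

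For (\ref{equa1}): since $\tau$ is interior, $[1,\tau],(\tau,T],[1,T]$ all have length $\asymp T$, so the reduction applies to each. The quantity driving the decision, $G:=\bigl(pl_{[1,\tau]}+pl_{(\tau,T]}\bigr)-pl_{[1,T]}$, then splits on the good event into (i) minus the drop in oracle $\mathrm{RSS}$ obtained by fitting two least‑squares models rather than one to homogeneous data, which equals $-\|Q\boldsymbol\epsilon(u)\|_2^2$ for an orthogonal projection $Q$ (a nonnegative quantity is subtracted, because $\mathrm{col}(\X_{[1,T]}(S))$ embeds into $\mathrm{col}(\X_{[1,\tau]}(S))\oplus\mathrm{col}(\X_{(\tau,T]}(S))$, and $\mathrm{rank}\,Q=|S|p$ so $\|Q\boldsymbol\epsilon(u)\|_2^2=O_p(1)$ by Markov using~\ref{A1}); plus (ii) the change in the group‑lasso penalty, $2\lambda\bigl(\sum_{v\in S}\|\hat{\boldsymbol\theta}_{[1,\tau]}(u,v)\|_2+\|\hat{\boldsymbol\theta}_{(\tau,T]}(u,v)\|_2-\|\hat{\boldsymbol\theta}_{[1,T]}(u,v)\|_2\bigr)=o_p(1)$. (Plugging $\hat{\boldsymbol\theta}_{[1,T]}$ into both sub‑problems and using additivity of $\mathrm{RSS}$ over a partition also gives $G\le o_p(1)$.) Hence $G=-O_p(1)$; since inserting a change point incurs the additional counting penalty $\tfrac{3}{2}\log T$ (the first term of $\mathrm{Pen}_{RP}$), and $-O_p(1)>-\tfrac{3}{2}\log T$ with probability tending to one, the comparison keeps $\PP=[1,T]$, which is (\ref{equa1}).

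For (\ref{equa2}): under $H_1$ let $\hat\tau\in\argmin_s\{pl_{[1,s]}+pl_{(s,T]}\}$ be the change point returned by the basic step. On the good event this objective equals $\sum_t\epsilon_t(u)^2+E(s)+R(s)$, where $\sum_t\epsilon_t(u)^2$ does not depend on $s$, and the deterministic skeleton $E(s)\ge0$ is the excess oracle $\mathrm{RSS}$ caused by one fitted block straddling $\tau$: $E(s)=\tfrac{(\tau-s)(T-\tau)}{T-s}\,\Delta^\top\Sigma\Delta$ for $s<\tau$ (mirror expression for $s>\tau$), so $E$ vanishes only at $s=\tau$, grows with slope $\asymp\Delta^\top\Sigma\Delta>0$ near $\tau$ (note $\Delta^\top\Sigma\Delta>0$ by~\ref{A2}), and equals $\Theta(T)$ at the endpoints. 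Two ingredients finish the argument. First, the step genuinely splits: at $s=\tau$ both blocks are homogeneous so $pl_{[1,\tau]}+pl_{(\tau,T]}=\sum_t\epsilon_t(u)^2-O_p(1)+o_p(1)$, whereas $pl_{[1,T]}$ is a misspecified single fit with excess $\Theta(T)$; the gain $\Theta(T)\gg\tfrac{3}{2}\log T$ forces $\hat\tau$ to be a real split. Second, $\sup_s|R(s)|=o_p(T)$ by elementary moment‑based maximal bounds (and $O_p(\log T)$ under sub‑exponential tails), where the cross terms between noise and signal are $\lesssim\sqrt{E(s)\log T}+\log T$ and the group‑lasso parts are $o_p(1)$; moreover splits with $s\le\eta_T$ or $s\ge T-\eta_T$, where the reduction may fail and the group‑lasso objective can be depressed by $O_p(\eta_T)=o_p(T)$, cannot be the argmin, because there the change point sits strictly inside the long block, forcing $E(s)=\Theta(T)$, so the profiled objective there exceeds its value at $\tau$ with probability tending to one. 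Hence $\hat\tau$ is a bulk point, and comparing the objective at $\hat\tau$ and at $\tau$ gives $E(\hat\tau)\le R(\tau)-R(\hat\tau)\le2\sup_s|R(s)|=o_p(T)$; since $E(\hat\tau)\gtrsim\min\bigl(|\hat\tau-\tau|\,\Delta^\top\Sigma\Delta,\ \Theta(T)\bigr)$, this forces $|\hat\tau-\tau|=o_p(T)$, i.e.\ $|\hat\tau-\tau|/T\to0$ in probability and so $\mathbb{P}_{H_1}(|\hat\tau-\tau|>\epsilon)\to0$ for every fixed $\epsilon>0$ in rescaled time, which is (\ref{equa2}).

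The step I expect to be the main obstacle is the uniform‑in‑$s$ control. The result of \cite{bach2008consistency} is pointwise, so running the argument over all $\approx T$ candidate change points needs (a) an upgrade of its consistency to deviation probabilities $o(1/T)$ under only the moment and dimension assumptions~\ref{A1},~\ref{A2},~\ref{A5}, and (b) a separate treatment of short intervals, where the group lasso is a non‑oracle, potentially overfitting procedure and the reduction to oracle least squares is unavailable---there one must argue crudely that such splits are never selected (under $H_1$ via the $\Theta(T)$ cost of a misplaced change point; and, if ``Decide $\PP=[1,T]$'' is read as a maximum over all candidate splits rather than the single comparison against the stated $H_1$, under $H_0$ via a more careful lower bound on the group‑lasso objective on short intervals, which is where the $\lambda$‑scaling is used most delicately). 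Everything else is bookkeeping around \cite{bach2008consistency} and classical change‑point maximal inequalities.
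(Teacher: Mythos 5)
Your proposal is correct in its overall strategy and reaches both conclusions, but it executes the argument along a genuinely different route from the paper. For (\ref{equa1}) the paper does not pass through an oracle reduction or a projection-rank bound: it works directly with the $|I|^{-1}$-normalized penalized likelihoods of equation (\ref{pl}), expands $\hat{PL}_{1:\hat\tau}+\hat{PL}_{\hat\tau:T}-\hat{PL}_{1:T}+C_3\log T$ by adding and subtracting the fitted means, applies Cauchy--Schwarz to the cross terms, and uses the estimation consistency of \cite{bach2008consistency} only to show that every potentially negative term (quadratic forms in $\hat{\boldsymbol\theta}_{1:\hat\tau}-\hat{\boldsymbol\theta}_{1:T}$, the cross terms, and the group-lasso penalties under $\lambda\to0$, $\lambda N\to0$) is $o_p(1)$, so that the $C_3\log T$ term dominates; your version instead identifies the unpenalized RSS drop as $\|Q\boldsymbol\epsilon(u)\|_2^2$ with $\mathrm{rank}\,Q=|S|p$, hence $O_p(1)\ll\tfrac32\log T$, which is more quantitative and also closer to the unnormalized objective actually used in (\ref{rp}) and Algorithm 1. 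For (\ref{equa2}) the paper fixes a single competing split $s\neq\tau$, decomposes $\hat{PL}_{s:T}$ across $\tau$, and argues via the inconsistency of the straddling estimator $\hat{\boldsymbol\theta}_{s:T}$ that the objective at $s$ exceeds that at $\tau$ with probability tending to one; it never introduces your deterministic skeleton $E(s)=\tfrac{(\tau-s)(T-\tau)}{T-s}\Delta^\top\Sigma\Delta$, nor does it carry out the uniform-in-$s$ maximal-inequality step or the separate treatment of boundary intervals that you correctly flag as the main technical obstacle. What your approach buys is precisely that missing uniformity (the argmin over $O(T)$ candidate splits is controlled, not just each split pointwise) and an explicit signal-strength lower bound forcing $|\hat\tau-\tau|=o_p(T)$; what the paper's approach buys is brevity, since it leans entirely on the pointwise consistency statement of \cite{bach2008consistency} and avoids sharpening its deviation bounds to $o(1/T)$. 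Your reading of the conclusion in rescaled time (so that $o_p(T)$ localization yields $\mathbb{P}_{H_1}(|\hat\tau-\tau|>\epsilon)\to0$) matches the paper's convention of identifying $(0,T]$ with $(0,1]$.
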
\par
Theorem (\ref{splitting}) contains two parts. The first part states that when the null hypothesis is true --  that is, when the time series contains no change point -- our method favors the model with no change point. The second part states that under the alternative hypothesis, where there is a change point at $\tau$, our method favors the model with one estimated change point $\hat\tau$ and, furthermore, the probability that $\hat\tau$ differs from $\tau$ by an arbitrary amount $\epsilon$ tends to zero. The proof can be found in the appendix.   The proof technique can be generalized for the case of multiple change points, although it would require appropriate conditions on the number of change points $M$ and the number of data points $T$.
%Our algorithm is essentially running in a bottom-up fashion and if we have infinite number of observations within each stationary interval, the above result could be generalized to multiple change points detection problem.

\subsection{Finite sample control of Type I error rate in neighborhood selection}
\noindent 
We see that consistent splitting and change point estimation is possible to achieve with the group lasso type of estimation. However, our asymptotic result offers little advice on how to choose a specific penalty parameter for a given problem. We propose a way to adaptively choose the penalty parameters $\lambda$, given a stationary time interval. For a specific $\lambda$, we guarantee that the probability of committing a certain notion of Type I error in recovering the connected component corresponding to the fixed node $u$ is less than some user specified level $\alpha$. The connected component $C_u \in G$ of a node $u \in V$ is defined as the set of nodes which are connected to node $u$ by a chain of edges. We denote the neighborhood of node $u$ as $ne_u$. The neighborhood $ne_u$ is clearly part of the connected component $C_u$. To guarantee the accuracy of the neighborhood selection, we need the following additional assumption:
\begin{Assumption}
Denote by  $\Theta = BV(C)$ the ball of functions of bounded variation for some constant $C$.  We assume that is $\theta_{(\cdot)}^{(\ell)}(u,v)\in \Theta$, for all $\ell = 1,\cdots,p$ and all $v \in V\backslash \{u\}$:
\begin{align*}
\sup_{J \geq 2}\sup_{t_1 \leq \cdots \leq t_J}\sum_{j = p}^J\left|\theta_{t_j}^{(\ell)}(u,\cdot) - \theta_{t_{j-1}}^{(\ell)}(u,\cdot) \right| < C
\end{align*} 
This assumption indicates that $\|\boldsymbol\theta_t(u,v)\|_2$ is bounded.
\label{A6}
\end{Assumption}
In the case where $X(u)$ is stationary on a given interval $[1, T]$, we have the following theorem regarding the estimated connected component $\hat{C_u}$:
\begin{Thm}
Assume Assumptions \ref{A1} to \ref{A6} hold, and fix $\alpha\in (0,1)$.  If $\X(u)$ is stationary on $[1, T]$ and the penalty parameter $\lambda(\alpha)$ is chosen such that
\begin{align*}
\lambda(\alpha) = 2\hat\sigma(u)\sqrt{pQ\left(1 - \frac{\alpha}{N(N-1)}\right)},
\end{align*}
where $\hat\sigma^2(u) = \|\X(u)\|_2^2/T$ and $Q(\cdot)$ is the quantile function of $\chi^2(p)$ distribution, then
\begin{align*}
\mathbb{P}\left(\exists u\in V: \hat{C_u} \nsubseteq C_u \right) \leq \alpha \enskip .
\end{align*}
\label{finite}
\end{Thm}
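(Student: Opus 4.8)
The plan is to follow the neighbourhood-wise strategy of \cite{meinshausen2006high}, adapted to the group-lasso estimator and specialised to a single stationary block, so that the change-point-counting term in (\ref{penaltyFunction}) is constant and the estimator (\ref{rp}) reduces to an ordinary group-lasso fit of $\X(u)$ on $\X(-u)$. The first, purely combinatorial, step is the reduction $\{\exists u:\hat{C}_u\nsubseteq C_u\}\subseteq\{\exists u:\hat{ne}_u\nsubseteq ne_u\}$: if every estimated neighbourhood is contained in the true one, then any chain of edges in $\hat G$ issuing from $u$ is also a chain in $G$, hence $\hat C_u\subseteq C_u$. So it suffices to bound $\mathbb{P}(\exists u:\hat{ne}_u\nsubseteq ne_u)$, and by a union bound over the $N(N-1)$ ordered pairs $(u,v)$ with $u\ne v$ it suffices to show that for every such pair with $v\notin ne_u$ one has $\mathbb{P}(\hat{\boldsymbol\theta}(u,v)\neq 0)\le \alpha/(N(N-1))$. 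This Bonferroni step is exactly what forces the particular probability level $1-\alpha/(N(N-1))$ appearing in $\lambda(\alpha)$.

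For a fixed pair $(u,v)$ with $v\notin ne_u$ I would use the subgradient (KKT) optimality conditions of the group lasso. Write $S=ne_u$, let $\tilde{\boldsymbol\theta}$ be the group-lasso fit of $\X(u)$ on the submatrix $\X(S)$ alone, and let $\tilde R=\X(u)-\X(S)\tilde{\boldsymbol\theta}(u,S)$ be its residual. The zero-extension of $\tilde{\boldsymbol\theta}$ solves the full problem — so in particular $\hat{\boldsymbol\theta}(u,v)=0$ — whenever every block score $\|\X(w)^\prime\tilde R\|_2$ with $w\notin S$ stays below the threshold $\tfrac12\mu$, where $\tfrac12\mu$ is a fixed multiple, determined by the normalisation conventions in (\ref{rp}) (the factor $2$ in front of the penalty, and the plug-in of $\hat\sigma^2(u)$ into the Gaussian log-likelihood), of $\lambda$ and the plug-in scale $\hat\sigma(u)$. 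Taking the contrapositive, $\{\hat{\boldsymbol\theta}(u,v)\neq 0\}\subseteq\{\|\X(v)^\prime\tilde R\|_2>\tfrac12\mu\}$. Substituting the restricted KKT equations for $\tilde{\boldsymbol\theta}$ back into $\tilde R$ yields the decomposition $\X(v)^\prime\tilde R=\tfrac12\mu\,\X(v)^\prime\X(S)\big(\X(S)^\prime\X(S)\big)^{-1}\tilde s_S+\X(v)^\prime(I-\Pi_S)\boldsymbol\epsilon$, with $\Pi_S$ the projection onto the column space of $\X(S)$ and $\tilde s_S$ the stacked subgradient vector. The first, deterministic, term is, up to sample-versus-population error, precisely the quantity whose norm Assumption \ref{A4} forces to be strictly below $\tfrac12\mu$; Assumption \ref{A6} (bounded coefficients) together with Assumptions \ref{A1}–\ref{A3} (fourth moments and invertibility of the Gram matrix) are what permit passing from the population irrepresentability statement to its sample counterpart and absorbing the remainder into the slack built into $\lambda(\alpha)$.

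It then remains to bound the probability that the stochastic term $\X(v)^\prime(I-\Pi_S)\boldsymbol\epsilon$ exceeds its share of $\tfrac12\mu$. Conditionally on $\X(-u)$ the vector $(I-\Pi_S)\boldsymbol\epsilon$ is Gaussian with covariance $\sigma^2(I-\Pi_S)$, so $\X(v)^\prime(I-\Pi_S)\boldsymbol\epsilon$ is a $p$-dimensional centred Gaussian whose covariance is dominated, in the Loewner order, by $\sigma^2$ times the $v$-block Gram matrix; after the appropriate column normalisation this is of order $\sigma^2$ times the identity, so its squared $\ell_2$ norm is stochastically dominated by $\sigma^2\,\chi^2(p)$, and using $\hat\sigma^2(u)=\|\X(u)\|_2^2/T\ge\sigma^2$ — the sample marginal variance of $X(u)$ dominates the innovation variance, which is also why the procedure is conservative — by $\hat\sigma^2(u)\,\chi^2(p)$. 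Recalling that $Q$ is the $\chi^2(p)$ quantile function and comparing with the threshold, the choice $\lambda(\alpha)=2\hat\sigma(u)\sqrt{p\,Q(1-\alpha/(N(N-1)))}$ makes this probability at most $\alpha/(N(N-1))$; the explicit factor $2$ and the $\sqrt p$ are exactly the margin needed to accommodate the bias term and to pass from an operator-norm/coordinatewise control to a bound on the $\ell_2$ norm of the $p$-vector. Summing over the $N(N-1)$ pairs gives $\mathbb{P}(\exists u:\hat C_u\nsubseteq C_u)\le\alpha$.

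The hard part will be making the last two steps rigorous and \emph{exact} rather than merely asymptotic. One must control the bias term non-asymptotically — replacing the population irrepresentability of Assumption \ref{A4} by its empirical version with an explicit error that is genuinely smaller than the available slack — handle the randomness of the plug-in scale $\hat\sigma^2(u)$ and of $\tilde R$ (the restricted residual is a non-linear function of $\boldsymbol\epsilon$, so one really argues on a high-probability event on which $\tilde{\boldsymbol\theta}$ behaves like its oracle target), and deal with the non-orthogonality of the $p$ lag-columns within a group, all while keeping the final constant exactly $\alpha$. This is the step that dictates the somewhat generous constants appearing in $\lambda(\alpha)$.
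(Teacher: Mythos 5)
Your outline gets the union bound, the KKT characterization, and the $\chi^2(p)$ tail calculation that produces the form of $\lambda(\alpha)$, but it misses the structural device on which the paper's finite-sample claim actually rests, and as a result the central step of your argument does not close. The paper does \emph{not} reduce the event $\{\hat C_u\nsubseteq C_u\}$ to $\{\hat{ne}_u\nsubseteq ne_u\}$; it reduces it to $\{\exists\, v\in V\setminus C_u:\ v\in\hat{ne}_u\}$, i.e.\ only to false edges that cross between distinct connected components. This is a weaker event, and it is chosen precisely because for $v\notin C_u$ the series $\X(v)$ is \emph{exactly} independent of $\X(C_u)$. Conditionally on $\X(C_u)$, the block score $G(\hat{\boldsymbol\theta}(u,v))=-2T^{-1}\X(v)^\prime\bigl(\X(u)-\sum_{i\in C_u}\X(i)\hat{\boldsymbol\theta}(u,i)\bigr)$ is then an exactly centred $p$-variate Gaussian: there is no deterministic bias term at all, hence no irrepresentability condition to verify empirically. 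The paper bounds $\mathbf{tr}(\boldsymbol\Omega)$ of its conditional covariance by $Tp\|\X(u)\|_2^2$ via Cauchy--Schwarz, dominates the vector in the convex order by $N(\mathbf{0},\nu_{max}\mathbf{I}_p)$, and reads off an exact $\chi^2(p)$ tail bound, from which $\lambda(\alpha)=2\hat\sigma(u)\sqrt{pQ(1-\alpha/(N(N-1)))}$ follows with the stated level $\alpha$. Your reduction to full neighbourhood consistency forces you to also control $v\in C_u\setminus ne_u$, where $\X(v)$ is correlated with the regressors in $S$; that is exactly why your decomposition acquires the term $\tfrac12\mu\,\X(v)^\prime\X(S)(\X(S)^\prime\X(S))^{-1}\tilde s_S$ and why you need a non-asymptotic empirical version of Assumption~\ref{A4}. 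You correctly flag this as ``the hard part,'' but it is not a technicality to be absorbed into slack: Assumption~\ref{A4} is a population statement with no quantified gap below $1$, so no finite-sample, exact-level-$\alpha$ bound can be extracted from it as stated. The theorem is provable only because the paper never needs that step.

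Two further points. First, your domination of the noise term by $\hat\sigma^2(u)\chi^2(p)$ relies on the claim $\hat\sigma^2(u)=\|\X(u)\|_2^2/T\ge\sigma^2$, which holds in expectation but not almost surely, so the stochastic-domination step is not valid as written; in the paper $\hat\sigma^2(u)$ enters deterministically through the trace bound computed conditionally on $\X(C_u)$, not through a comparison with the innovation variance. Second, your conditioning is on $\X(-u)$ with the randomness carried by $\boldsymbol\epsilon$, whereas the paper conditions on $\X(C_u)$ and lets $\X(v)$ carry the randomness; the latter choice is what makes the residual weights fixed and the score exactly Gaussian. The combinatorial inclusion you state at the outset is correct but lossy, and the loss is what ultimately makes your route unworkable at an exact finite-sample level.
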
\par
Theorem (\ref{finite}) says that by choosing the penalty parameter at $\lambda = \lambda(\alpha)$, the probability of falsely joining two distinct connected components with the estimate of the edge set is bounded above by the level of $\alpha$. The proof of the theorem is provided in the appendix.

\subsection{Risk analysis}
\noindent
We now provide a theorem that gives an upper bound on the risk of the estimators $\boldsymbol{\hat{\theta}}_{RDP}$ and $\boldsymbol{\hat{\theta}}_{RP}$.  Through this approach we provide a certain measure of quality for the overall dynamic network inference procedure.  Following the perspective of~\cite{li2000mixture}, as implemented in~\cite{kolaczyk2005multiscale}, we measure the loss of estimating $\boldsymbol\theta$ by $\hat{\boldsymbol\theta}$ in terms of the squared Hellinger distance between the two corresponding conditional densities:
\begin{align*}
L(\hat{\boldsymbol\theta}, \boldsymbol\theta) &\equiv H^2(p_{\hat{\boldsymbol\theta}},p_{\boldsymbol\theta}) \\
& = \int \left[\sqrt{p_{\hat{\boldsymbol\theta}}(\x|\X(-u))} - \sqrt{p_{\boldsymbol\theta}(\x|\X(-u))}\right]^2d\nu(\x) 
\end{align*}
with respect to some dominating measure $\nu(\x)$. Additionally, define the Kullback-Leibler divergence between two densities of $\X(u)$, conditional on the past of all the neighborhood time series:
\begin{align*}
K(p_{\boldsymbol\theta^1}, p_{\boldsymbol\theta^2}) \equiv \int \log \frac{p(\x|\X(-u),\boldsymbol\theta^1)}{p(\x|\X(-u),\boldsymbol\theta^2)}p(\x|\X(-u),\boldsymbol\theta^1)d\nu(\x).
\end{align*}

\begin{Thm}
Denote the loss function of estimating $\boldsymbol{\theta}$ by $\boldsymbol{\hat\theta}$ by $L(\boldsymbol{\hat\theta}, \boldsymbol{\theta})$ and the corresponding risk, by $R(\boldsymbol{\hat\theta}, \boldsymbol{\theta}) = T^{-1}\mathbb{E}_{\X(u)|\X(-u)}
\left[ L(\boldsymbol{\hat\theta}, \boldsymbol{\theta}) \right]$. Let $\Lambda = {\alpha_{max}/T}$, where $\alpha_{max}$ is the largest eigenvalue of $\X(-u)^\prime \X(-u)$. Assume each $\theta^{(\ell)}_t (u,v)$ is of bounded variation on $(0, 1]$ for some constant $C$. Then for any $\lambda$ of the same order as in Theorem~\ref{splitting} and for $T > \lceil e^{2p/3} \rceil$, our risk is bounded as
$$R(\boldsymbol{\hat\theta}_{RDP}, \boldsymbol{\theta}) \le \mathcal{O}\left(\left(
\frac{\Lambda \log^4T}{T}\right)^{1/3}\right) \enskip $$
for recursive dyadic partitioning and
$$R(\boldsymbol{\hat\theta}_{RP}, \boldsymbol{\theta}) \le \mathcal{O}\left(\left(
\frac{\Lambda \log^2T}{T}\right)^{1/3}\right) \enskip $$ for recursive partitioning.
\label{Riskbound}
\end{Thm}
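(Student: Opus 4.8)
The plan is to condition on the lagged design, reduce to a fixed-design Gaussian estimation problem, apply an index-of-resolvability bound for complexity-penalized maximum likelihood in the style of~\cite{li2000mixture} as used in~\cite{kolaczyk2005multiscale}, and then optimise a bias--variance decomposition using bounded-variation approximation theory.

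Since the risk is defined conditionally on $\X(-u)$, I work throughout given $\X(-u)$; then $\X(u)=\X(-u)\boldsymbol\theta+\boldsymbol\epsilon$ with $\boldsymbol\epsilon$ having i.i.d.\ mean-zero Gaussian entries of common variance $\sigma^2\ge\sigma_{\min}^2$, so $\hat{\boldsymbol\theta}_{RDP}$ (respectively $\hat{\boldsymbol\theta}_{RP}$) is an ordinary complexity-penalized maximum-likelihood estimator over the partition-indexed model class $\Gamma^{(N-1)p}_{RDP}$ (respectively $\Gamma^{(N-1)p}_{RP}$). The target of this first step is the resolvability inequality
\[
R(\hat{\boldsymbol\theta},\boldsymbol\theta)\ \le\ \frac{c_0}{T}\,\inf_{\tilde{\boldsymbol\theta}}\Bigl\{K\!\bigl(p_{\boldsymbol\theta},p_{\tilde{\boldsymbol\theta}}\bigr)+\operatorname{Pen}(\tilde{\boldsymbol\theta})\Bigr\},
\]
with the infimum over the relevant model class, which holds as soon as the penalty dominates the description length of a suitably discretized model, so that a Kraft-type summability condition is met. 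Checking this is what calibrates the constants $1/2$ and $3/2$: encoding the \emph{structure} of a partition $\PP$ costs $\mathcal{O}(\#\{\PP\})$ bits in the recursive dyadic case (a Catalan-type count over binary trees) but $\mathcal{O}(\#\{\PP\}\log T)$ bits in the general recursive case (choosing up to $\#\{\PP\}-1$ breakpoints among $T-1$), while in either case netting the $p(N-1)\#\{\PP\}$ piecewise-constant coefficient values at granularity $\mathcal{O}(T^{-1/2})$ adds $\mathcal{O}(\#\{\PP\}\log T)$ bits; summing over $v\in V\setminus\{u\}$ and inflating by the factor $2$ reproduces the penalty~(\ref{penaltyFunction}), the dyadic case being cheaper by the $O(1)$-versus-$\log T$ gap in the structural part, and the hypothesis $T>\lceil e^{2p/3}\rceil$ being the mild regularity condition (relating resolution depth to the lag order) that makes the counting term per cell dominate the $p$-dependent within-cell description length.

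Next I bound the two ingredients of the resolvability index. As the conditional densities are Gaussian with common variance, $K(p_{\boldsymbol\theta},p_{\tilde{\boldsymbol\theta}})=\tfrac{1}{2\sigma^2}\|\X(-u)(\boldsymbol\theta-\tilde{\boldsymbol\theta})\|_2^2\le\tfrac{\alpha_{\max}}{2\sigma^2}\|\boldsymbol\theta-\tilde{\boldsymbol\theta}\|_2^2=\tfrac{\Lambda T}{2\sigma^2}\|\boldsymbol\theta-\tilde{\boldsymbol\theta}\|_2^2$, which is exactly how $\Lambda$ enters. I restrict the infimum to $\tilde{\boldsymbol\theta}$ piecewise constant on a single $k$-cell partition common to all $v$ and $\ell$, taken as the best such approximation of the coefficient paths. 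For a bounded-variation target, free-knot piecewise-constant approximation gives squared $L^2$ error $\mathcal{O}(1/k^2)$, whereas the dyadic restriction forces a factor $\log T$ more cells for the same accuracy, i.e.\ $\mathcal{O}(\log^2 T/k^2)$ (the classical cost of non-adaptive knot placement, cf.~\cite{donoho1997cart}); summing over the $p(N-1)$ coordinates and over $t=1,\dots,T$ gives $\|\boldsymbol\theta-\tilde{\boldsymbol\theta}\|_2^2=\mathcal{O}(T/k^2)$ for recursive partitioning and $\mathcal{O}(T\log^2 T/k^2)$ for recursive dyadic partitioning, with implied constants polynomial in $N$, $p$ and the variation radius (held fixed for the rate to be informative, consistent with Assumption~\ref{A5}). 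The group-lasso component of $\operatorname{Pen}(\tilde{\boldsymbol\theta})$ is $\mathcal{O}(\lambda(N-1)k)$, of strictly smaller order than the $\mathcal{O}((N-1)k\log T)$ counting component whenever $\lambda=o(\log T)$ --- in particular for $\lambda$ of the order in Theorem~\ref{splitting} --- and is therefore absorbed.

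Finally I balance the terms. The above leaves $R(\hat{\boldsymbol\theta},\boldsymbol\theta)\le\mathcal{O}\!\bigl(\inf_k\{\Lambda T/k^2+k\log T/T\}\bigr)$ for recursive partitioning, and the same with $\Lambda T/k^2$ replaced by $\Lambda T\log^2 T/k^2$ for recursive dyadic partitioning (constants suppressed). Minimising over the integer $k$ --- a power of two in the dyadic case --- gives $k^\ast$ of order $(\Lambda T^2/\log T)^{1/3}$ up to logarithmic factors, and substituting back yields the two claimed bounds; the extra $\log^2 T$ carried by the dyadic approximation error is precisely what promotes the $\log^2 T$ inside the cube root for recursive partitioning to $\log^4 T$ for recursive dyadic partitioning. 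I expect the main obstacle to be the interaction of the approximation step with the description-length bookkeeping: one must (i) exhibit a single partition of controlled cardinality that simultaneously approximates all $p(N-1)$ bounded-variation coefficient paths --- so the governing variation is that of the vector-valued path and the resulting $N,p$ dependence has to be tracked --- and (ii) pin down the logarithmic factor in the dyadic approximation rate and reconcile it with the Kraft calibration fixing the constants $1/2$ and $3/2$. By contrast, once conditioned on the design the resolvability inequality is essentially that of~\cite{kolaczyk2005multiscale}, and the concluding optimisation over $k$ is routine.
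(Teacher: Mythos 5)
Your proposal follows essentially the same route as the paper's proof: a Li--Barron-style resolvability bound for the Hellinger risk conditional on the design, a Kraft-inequality calibration of the counting penalty (which is exactly where $T>\lceil e^{2p/3}\rceil$ and the $1/2$ versus $3/2$ constants arise), the Gaussian Kullback--Leibler divergence bounded by $\Lambda T\|\boldsymbol\theta-\tilde{\boldsymbol\theta}\|_2^2$ via the largest eigenvalue of $\X(-u)^\prime\X(-u)$, the Donoho $\mathcal{O}(d^{-1})$ bounded-variation approximation rate, and the final bias--variance balance over the partition cardinality. The only (immaterial) difference is bookkeeping: you charge the dyadic restriction to the approximation error as $\mathcal{O}(\log^2T/k^2)$, whereas the paper keeps the approximation error at $\mathcal{O}(d^{-2})$ and inflates the cell count to $d\log T$ in the penalty --- equivalent reparameterizations that yield the same optimized rates.
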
\par
Theorem \ref{Riskbound} shows that both estimators have risks that end to zero at rates slightly worse than $T^{-1/3}$. The asymptotic risk for recursive partitioning is smaller than the risk for recursive dyadic partitioning, albeit at the cost of increased computational complexity.  Proof of this result is in line with the work by \cite{kolaczyk2005multiscale} and can be found in the appendix.

\section{Simulation study}
\noindent
In this section, we illustrate the practical performance of our method through a series of simulation studies. In the first part, we simulate multivariate time series data under different settings, as dictated by models A - C below. In the second part, we scale up model B by increasing the size of the vertex set $V$ and include more irrelevant variables. Under each model, we simulate 100 datasets and the white noise is always set to be $\epsilon_t(\cdot) \sim N(0,1)$. In all models, we set $\alpha = 0.05$ and $p = 2$. These choices match that of the computational neuro-science example we present later, in Section~5.  We measure performance in three ways: (i) how many change points were detected, (ii) Out of the detected change points, how many specify the right location  (iii) whether the correct neighborhood structure was detected. The models we investigate are:
\begin{itemize}
\item Model A: VAR(2) process with no change point.\\
This scenario is designed to see the performance of the methods when there is no change point and the process is stationary.  Specifically,
\begin{align*}
X_t(1) = 0.5 X_{t-1}(2) + 0.25X_{t-2}(2) + 0.5 X_{t-1}(3) + 0.25X_{t-2}(3) + \epsilon_t(1)
\end{align*}
with sample size $T = 1024$.
\item Model B: piecewise stationary VAR(2) process with 2 change points.\\
Specifically,
\begin{align*}
X_t(1) = \left\{
\begin{array}{l}
0.5X_{t-1}(2) + 0.25X_{t-2}(2)+\epsilon_t(1) \quad \quad 0 < t \leq 512\\
0.5X_{t-1}(3) + 0.25X_{t-2}(3) + \epsilon_t(1) \quad \quad 512<t \leq 768\\
0.5X_{t-1}(2) - 0.5X_{t-1}(3)+\epsilon_t(1) \,\, \,\quad \quad 768 < t \leq 1024\\
\end{array}\right.
\end{align*}
\item Model C: change point close to the boundary.  \\
Specifically,
\begin{align*}
X_t(1) = \left\{
\begin{array}{l}
0.5X_{t-1}(2) + 0.25X_{t-2}(2) + \epsilon_t(1) \quad \quad 0 < t \leq 128\\
0.5X_{t-1}(3) + 0.25X_{t-2}(3) + \epsilon_t(1) \quad \quad 128<t \leq 1024
\end{array}\right.
\end{align*}
\item Model B with VAR(2) process in a larger vertex set $V$.\\
We use the same coefficients as used in Model B, but with the size of the vertex set ranging from $5$ to $15$.
\end{itemize}

The results for models A, B, and C are summarized in Table~\ref{tab:table1}. For some error measures, results under the truth are marked in blue. For example, under model A where there is no change point in the true model, positions corresponding to 0 change point and 0 exact detection are marked in blue, i.e., one should not detect anything where there is no change point. Under model B, where there are two change points, results corresponding to the case of two change points and two exact detections are marked in blue.  Note that in the case recursive partitioning (i.e., non-dyadic), we treat a detection as being 'exact' if an estimated change point is within $\pm 5$ time points of the true change point (i.e., less than $0.5 \%$ the length of the full time series).

\begin{table}
\centering
\resizebox{0.9\textwidth}{!}{\begin{minipage}{1.1\textwidth}
\begin{tabular}{cp{1.5cm}p{1.5cm}p{1.5cm}p{1.5cm}p{1.5cm}p{1.5cm}}
	\toprule
	 & \multicolumn{3}{c}{\textbf{RDP}} &\multicolumn{3}{c}{\textbf{RP}}\\
	\cmidrule(r){2-4} 
	\cmidrule(r){5-7} 
	\textbf{Model} & Model A & Model B& Model C& Model A & Model B & Model C\\
	\midrule
	\textbf{\# change point} & & & & & & \\
	$0$& \textcolor{blue}{100} & 0 & 100 & \textcolor{blue}{100} & 0 & 89\\
	$1$& 0 & 28 & \textcolor{blue}{0} & 0 & 0 & \textcolor{blue}{11}\\
	$ 2$& 0 & \textcolor{blue}{72} & 0 & 0 & \textcolor{blue}{100}& 0\\
	\midrule
	\textbf{\# exact detection} & & & & & &\\
	$0$& \textcolor{blue}{100} & 0 & 100 &\textcolor{blue}{100}  &0 & 89\\
	$1$& 0 & 28 &  \textcolor{blue}{0}  & 0& 11& \textcolor{blue}{11}\\
	$ 2$& 0& \textcolor{blue}{72}& 0 & 0& \textcolor{blue}{89}&0\\
	\midrule
	\textbf{\# false edge detection} & & & & & &\\
	$0$& {100} & {97} & {100} & {100} &{94} &{100}\\
	$1$& 0 & 3 & 0 & 0& 6 & 0 \\
	$2$& 0 & 0 & 0 & 0& 0& 0\\
	\bottomrule
		\end{tabular}
\caption{Simulation results under Model A, Model B and Model C, using RDP and RP.}
\label{tab:table1} 
\end{minipage} }
\end{table}

A few comments on these results are in order:
\begin{itemize}
\item From the results we see that our proposed estimators did not overestimate the number of change points, as they never detected more change points than the true number of change points.
\item Under Model B, in 72 out of 100 and in 89 out of 100 trials we correctly specified the number of positions of the change points using the recursive dyadic partition and the recursive partition estimators, respectively. Note that if we are less conservative and allow more tolerance in defining an `exact detection' under recursive partitioning, all change points identified in Model B using recursive partitioning are located within $[-13, 13]$ points of the true change points (i.e., within $1.5\%$ of the total length of the full time series).
\item Based on the results under model C, we conclude that our methods lose sensitivity to detection of change points as the location of the change points moves closer to the boundary, with recursive partitioning performing better than recursive dyadic partitioning.  These results are to be expected.
\item We have good control over the false detection of causal structures.
\end{itemize}
\begin{table}
\centering
\resizebox{0.9\textwidth}{!}{\begin{minipage}{1.2\textwidth}
\begin{tabular}{cp{1cm}p{1cm}p{1cm}p{1cm}p{1cm}p{1cm}p{1cm}p{1cm}p{1cm}p{1cm}}
	\toprule
	 & \multicolumn{5}{c}{\textbf{RDP}} &\multicolumn{5}{c}{\textbf{RP}}\\
	\cmidrule(r){2-6} 
	\cmidrule(r){7-11} 
	\textbf{Size of $V$} & 5 & 7& 11& 13 & 15 & 5 & 7& 11& 13 & 15 \\
	\midrule
	\# \textbf{change point} & & & & & & & & & & \\
	$0$&22 & 48 & 67 & 95&100& 0& 19& 52& 93 &100 \\
	$1$& 20 &20 & 29 &5 &0 & 0 & 4& 2& 0& 0\\
	$ 2$&\textcolor{blue}{58} &\textcolor{blue}{32} &\textcolor{blue}{4} &\textcolor{blue}{0} &\textcolor{blue}{0} &\textcolor{blue}{100} &\textcolor{blue}{77} & \textcolor{blue}{46}&\textcolor{blue}{7} & \textcolor{blue}{0}\\
	\midrule
	\textbf{\# exact detection}& 0& & & & & & & & &\\
	$0$&22 & 48 & 67 & 95&100& 0& 19&52 &93 &100 \\
	$1$& 20 &20 & 29 &5 &0 & 17& 19& 4 & 0& 0\\
	$ 2$&\textcolor{blue}{58} &\textcolor{blue}{32} &\textcolor{blue}{4} &\textcolor{blue}{0} &\textcolor{blue}{0} & \textcolor{blue}{83}& \textcolor{blue}{62}& \textcolor{blue}{44}&\textcolor{blue}{7} &\textcolor{blue}{0} \\
	\midrule
	\textbf{\# false edge detection} & & & & & & & & & &\\
	$0$& {98}& {100} & {100}& {100}&{100}& {98}&{97} & {100} & {100}& {100}\\
	$1$& 2 & 0& 0&0 &0 & 2& 3 &0 &0 &0\\
	$2$& 0&0 &0 &0 &0 &0 &0 &0 & 0&0\\
	\bottomrule
	\end{tabular}
\caption{Simulation results under Model B for vertex sets of increasing cardinality.}
\label{tab:table2} 
\end{minipage} }
\end{table}
The performance of the proposed estimators upon increasing the size $N$ of the vertex set $V$, under model B, is summarized in Table~\ref{tab:table2}. As $N$ increases, we see the performance decreases, due to the fact that in this setting the variables we are adding are irrelevant and thus induce additional uncertainty. Note that under our proposed approach there is a tendency to underfit the number of change points rather than over fit.  This trait will be relevant to the real data application we describe next.

\section{Illustration: Inference of a task-based MEG network}
\noindent
Neuroscientists are interested in understanding the interactions among cortical areas that allow subjects to detect the motion of objects. In~\cite{calabro2012interaction}, fMRI was used to study subjects who were asked to perform visual search tasks and it was found that the monitored regions of interest (ROIs) formed four clusters.  However, fMRI does not have good temporal resolution for more detailed investigation of the interaction between these clusters. \cite{rana2014functional} studied the 10 Hz Alpha-band power extracted from MEG signals under a similar multiple-trial visual motion search experiment. They found evidence showing that regions of interest within the identified clusters have similar temporal activation profiles. Specifically, they found significant inhibition of 10Hz alpha power in the visual processing region after 300ms relative to the stimulus, and longer and sustained alpha power in the frontoparietal region. Other evidence of co-activations among regions of interest have been reported by other studies under different experimental set up. For example, see \cite{braddick2000form}, \cite{amano2012human} and \cite{bettencourt2016decoding}.

To demonstrate the application of our method, we examined the same 10 Hz Alpha-band power data used by \cite{rana2014functional}. MEG data has excellent temporal resolution, but the spatial resolution is less good than that of fMRI. As a result, it is typical that functional connectivity analyses with MEG data incorporate coarsely defined brain regions and hence networks with only a handful of vertices. We therefore chose three regions of interest each from the two clusters known to have similar activation profiles. The regions of interest are V3a, MT+ and VIP from the visual processing region, and FEF, SPL and DLPFC from the frontoparietal region. This choice corresponds to a network of six nodes, which is consistent with studies of this type.
%We found that the 10Hz alpha power is synchronized among ROIs within the two clusters and is independent between the two. 

Details of the experiment and the data are as follows. In the experiment, a participant was asked to perform a visual search task of a moving object, repeated over $160$ trials. Each trial began with a 300 ms blank screen. Then, 9 spheres fade in over a 1000 ms period and these 9 spheres remained static for another 1000 ms. A 1000 ms motion display period then follows, where 8 of the spheres move forward (simulating forward motion of the obeserver) and the target sphere moves independently from the others. The beginning of the motion display period defined the 0 ms marker for each trial. Finally, in the 3000 ms response period, the 9 spheres remained static, four (including the target) were grayed out, and the participant was asked to identify the target sphere. 

The MEG signal of the participant was recorded throughout the experiment. The data we used is the 10 Hz Alpha-band power, truncated in a uniform manner across trials, to focus upon the period just prior to the appearance and movement of the spheres. It starts from the second half of the static period and the length of the data is $T = 1502$, corresponding to a time interval of length $2500$ ms. The time series we used for our analyses contains the last $500$ ms of the static period, the entire motion display period, and the first $1000$ ms of the response period, where most of the correct responses occurred. The timeline of our data is illustrated by Fig~\ref{experiment}. For a more detailed description of the experiment, please refer to \cite{rana2014functional}.
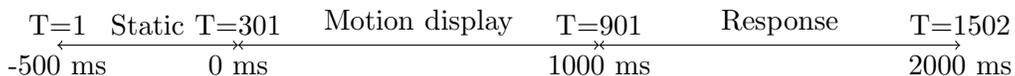
\begin{figure}[!htb]
\centering
\begin{tikzpicture}[scale=0.8]
\draw [<->] (0,0) -- (3,0)
node[pos=0.5,above] {Static}
node[pos=0,above] {T=1}
node[pos=0,below] {-500 ms};	
\draw [<->] (3,0) -- (9,0)
node[pos=0,above] {T=301}
node[pos=0,below] {0 ms}
node[pos=0.5,above] {Motion display};
\draw [<->] (9,0) -- (15,0)
node[pos=0,below] {1000 ms}
node[pos=0,above] {T=901}
node[pos=0.5,above] {Response}
node[pos=1,below] {2000 ms}
node[pos=1,above] {T=1502};
\end{tikzpicture}
\caption{Visual search experiment time line.}
\label{experiment}
\end{figure}

Each time series has been pre-processed by taking the first order difference to remove the self-driven component. We then use the recursive partition based method with lag $p = 7$ (chosen in preliminary analysis using the Akaike information criterion). We set the level $\alpha$ in Theorem~\ref{finite} to $0.05$. The recursive dyadic method does not apply here because the length of the data is not a power of $2$. 
 \begin{figure}[H]
        \centering
        \begin{subfigure}[b]{0.475\textwidth}
            \centering
            \includegraphics[width = 60mm, height = 60mm]{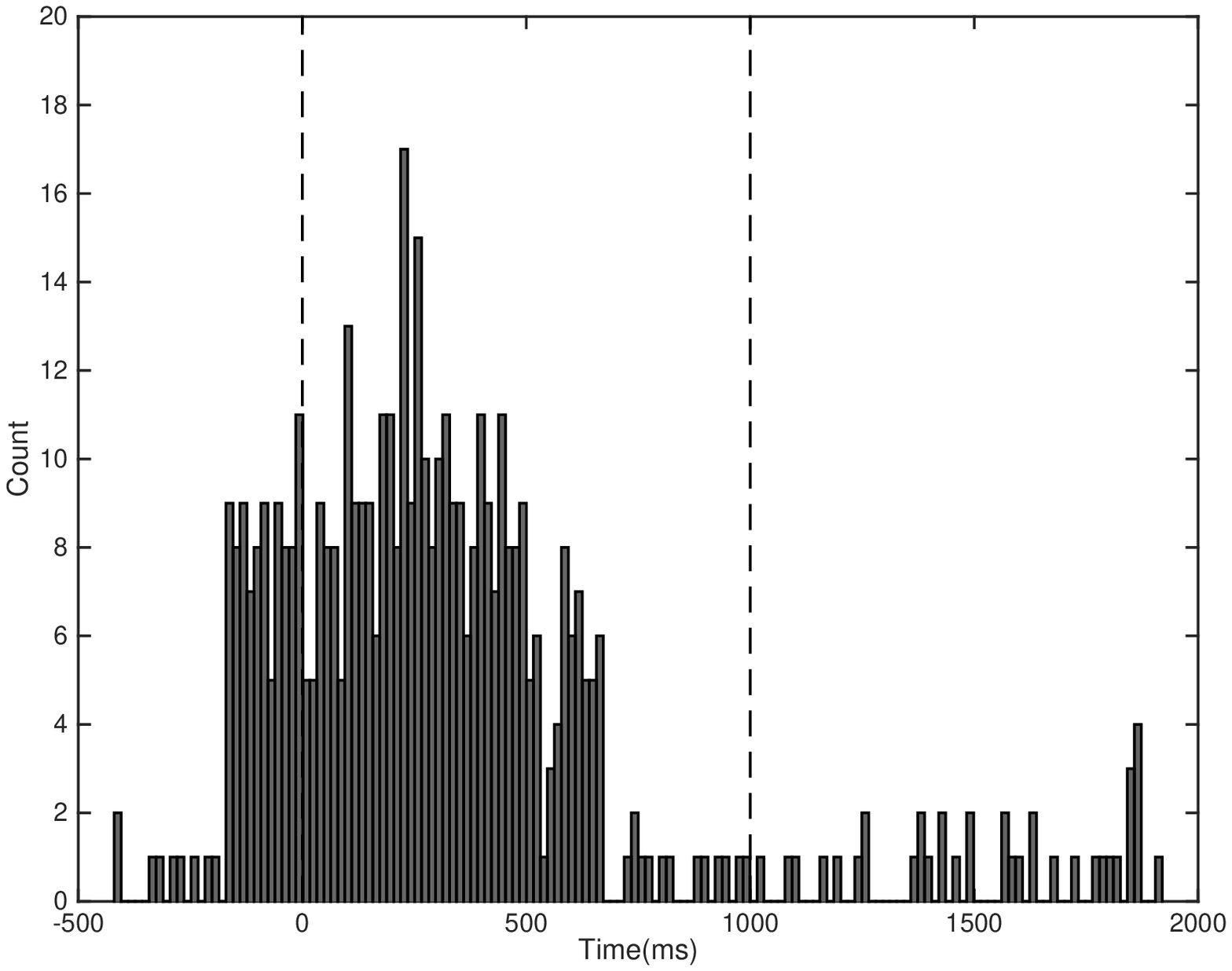}
            \caption[Network2]%
            {{\small Distribution of change points among the visual processing region.}}    
            \label{fig:CP_VP}
        \end{subfigure}
        \hfill
        \begin{subfigure}[b]{0.475\textwidth}  
            \centering 
            \includegraphics[width = 60mm, height = 60mm]{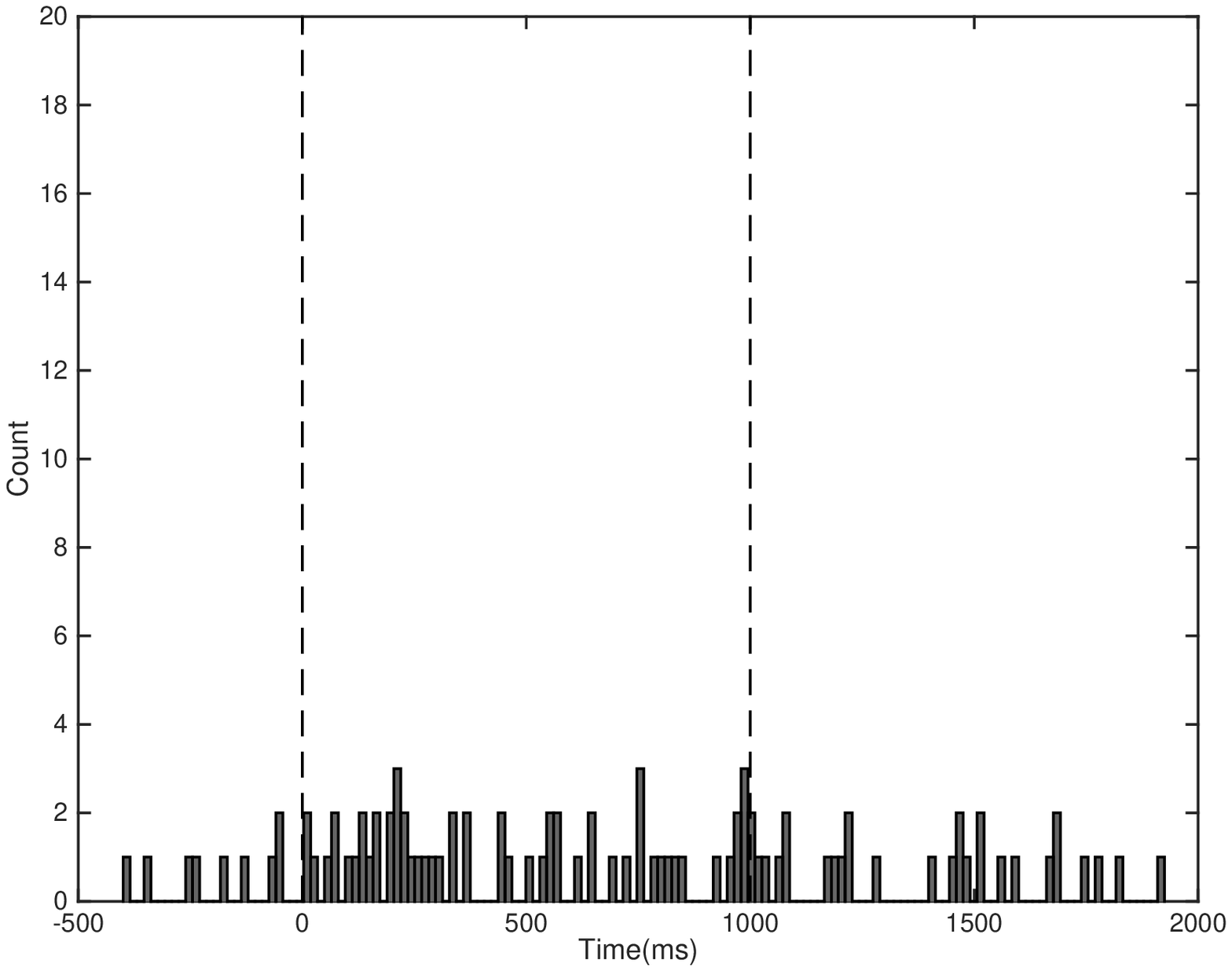}
            \caption[]%
            {{\small Distribution of change points among the frontoparietal region.}}    
            \label{fig:CP_FP}
        \end{subfigure}
        \caption[ ]
        {\small The change point distribution among the visual processing region and the frontoparietal region.} 
        \label{fig:CP}
   \end{figure}
Fig \ref{fig:CP} shows the distribution of the detected change points among each of the two clusters we examined. The two dashed vertical lines indicate the time of the two phase changes. There are $497$ change points detected across the $160$ trials in the visual search region, of which $427$ lie between -150 ms and 750 ms, relative to the stimulus onset. Compared with the visual processing regions, there are much fewer change points detected among the frontoparietal regions, where the Alpha-band power is more sustained.

Strength of the connections between regions of interest, within each of the two clusters, is shown in Figures~\ref{fig:Coef_VP} and~\ref{fig:Coef_FP}, where we have plotted the pointwise means and one standard deviation error bars of the $\ell_2$ norm of the coefficients across the $160$ trials. The inhibitive role of the Alpha-band power in the visual processing region (i..e, the creation of a common co-deactivation pattern), in response to the stimulus, is understood to be the reason for the significant increase in the $\ell_2$ norms of the coefficients among V3a, MT+ and VIP from -150 ms to 750 ms.  And, in fact, most of the changepoints in this time interval among these three regions of interest correspond to an increase in the $\ell_2$ norm of the pair-wise regression coefficients.  In contrast, the changes of the $\ell_2$ norms of the coefficients in the frontoparietal region are much more gradual.
\begin{center}
 \begin{figure}[H]
        \centering
        \begin{subfigure}[b]{0.3\textwidth}
            \centering
            \includegraphics[width = 50mm, height = 50mm]{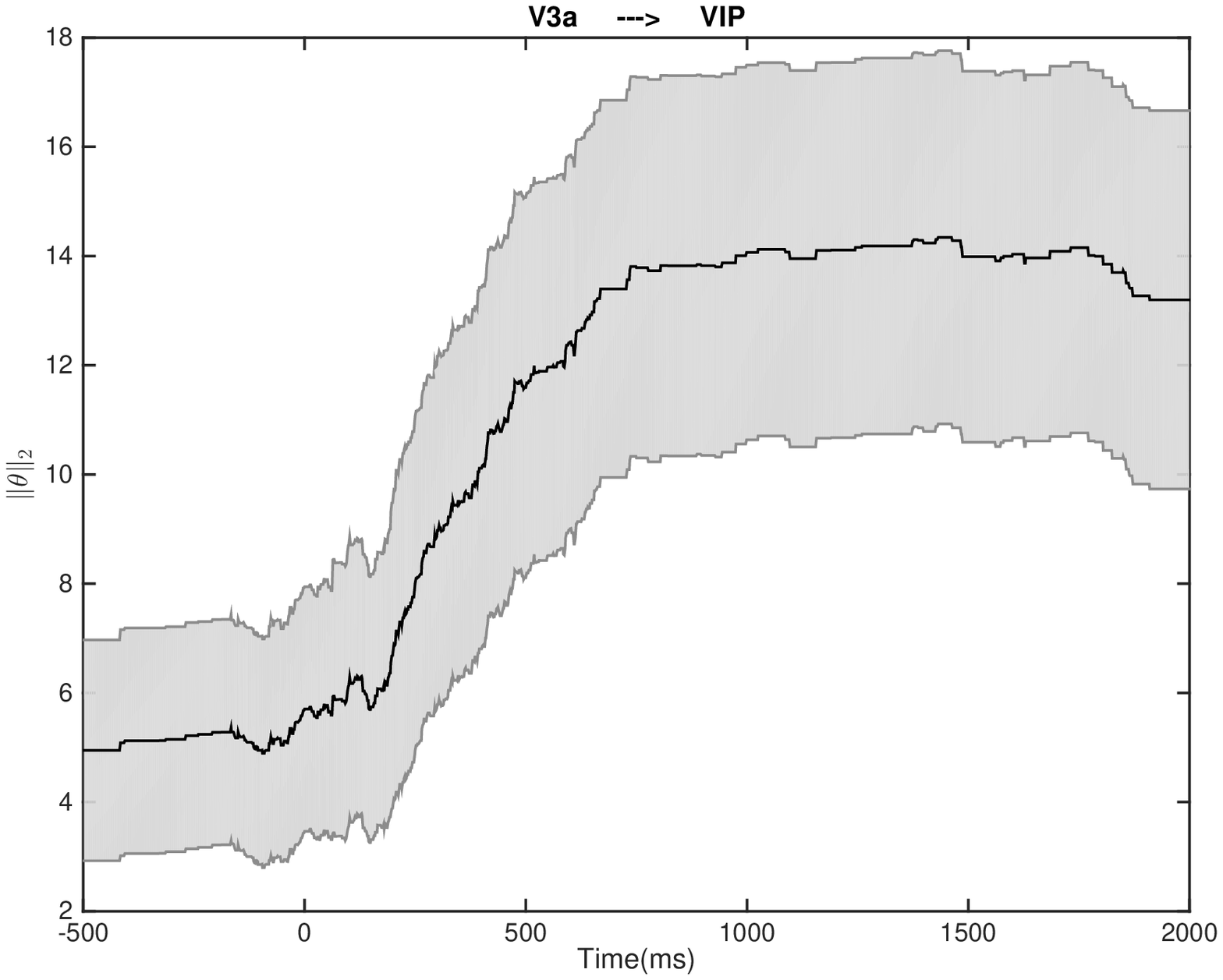}
            \caption[Network2]%
            {{\small $\ell_2$ norm of edge V3a $\rightarrow$ VIP.}}    
            \label{fig:v3a-vip}
        \end{subfigure}
        \hfill
        \begin{subfigure}[b]{0.3\textwidth}  
            \centering 
            \includegraphics[width = 50mm, height = 50mm]{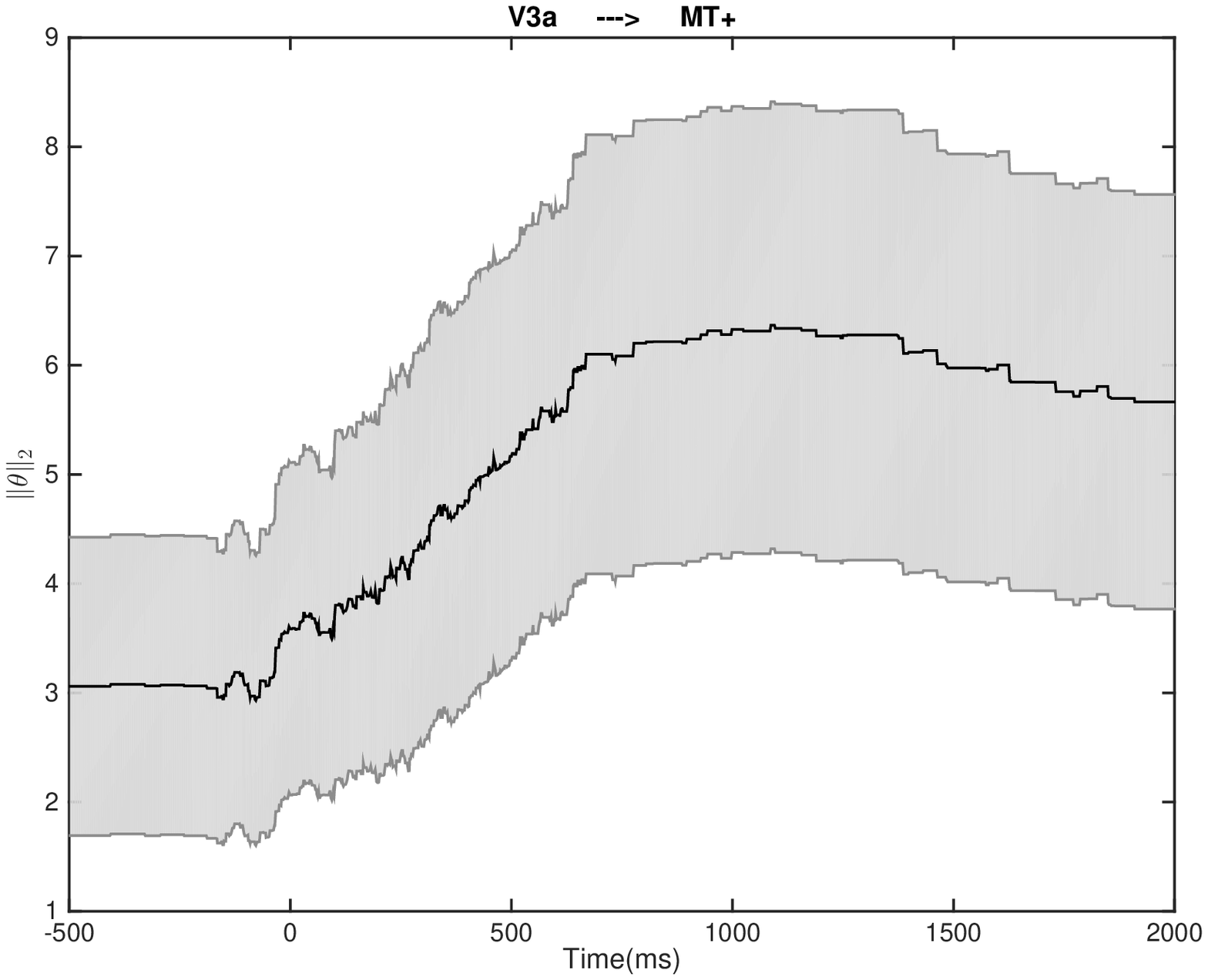}
            \caption[]%
            {{\small$\ell_2$ norm of edge V3a $\rightarrow$ MT+.}}    
            \label{fig:v3a-mtp}
        \end{subfigure}
        \hfill
            \begin{subfigure}[b]{0.3\textwidth}  
            \centering 
            \includegraphics[width = 50mm, height = 50mm]{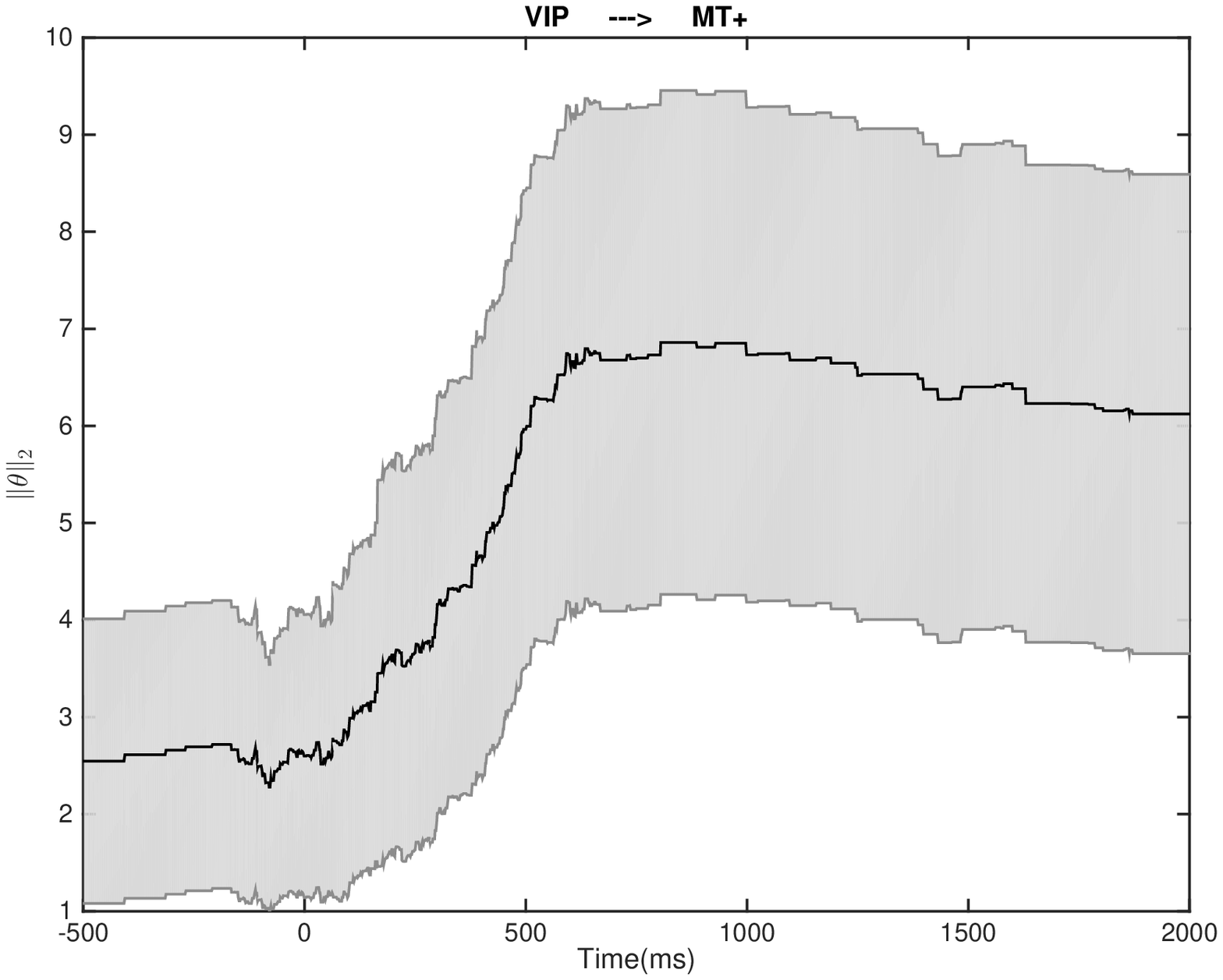}
            \caption[]%
            {{\small $\ell_2$ norm of edge VIP $\rightarrow$ MT+.}}    
            \label{fig:vip-mtp}
        \end{subfigure}
        \newline
                \begin{subfigure}[b]{0.3\textwidth}
            \centering
            \includegraphics[width = 50mm, height = 50mm]{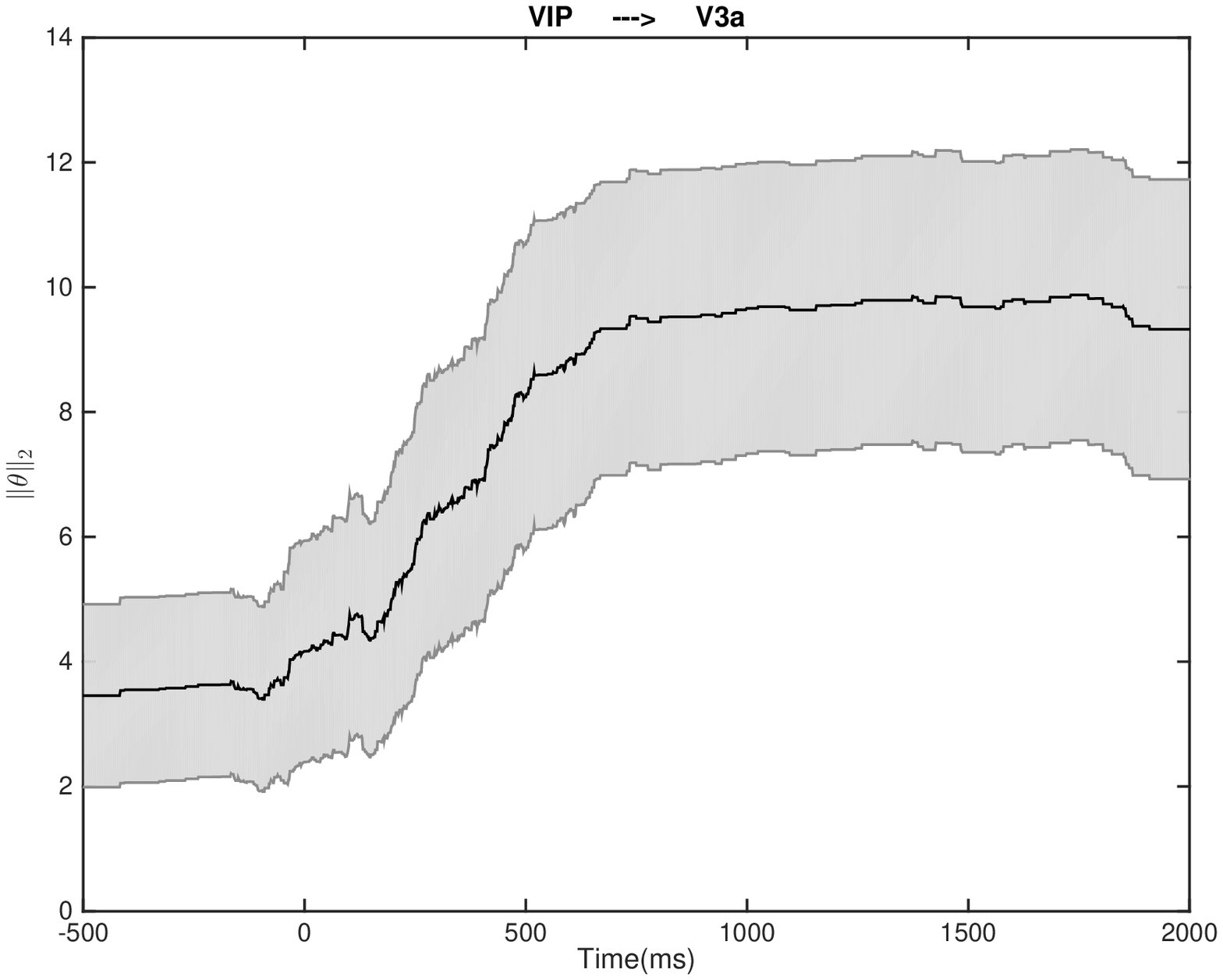}
            \caption[Network2]%
            {{\small $\ell_2$ norm of edge VIP $\rightarrow$ V3a.}}    
            \label{fig:vip-v3a}
        \end{subfigure}
        \hfill
        \begin{subfigure}[b]{0.3\textwidth}  
            \centering 
            \includegraphics[width = 50mm, height = 50mm]{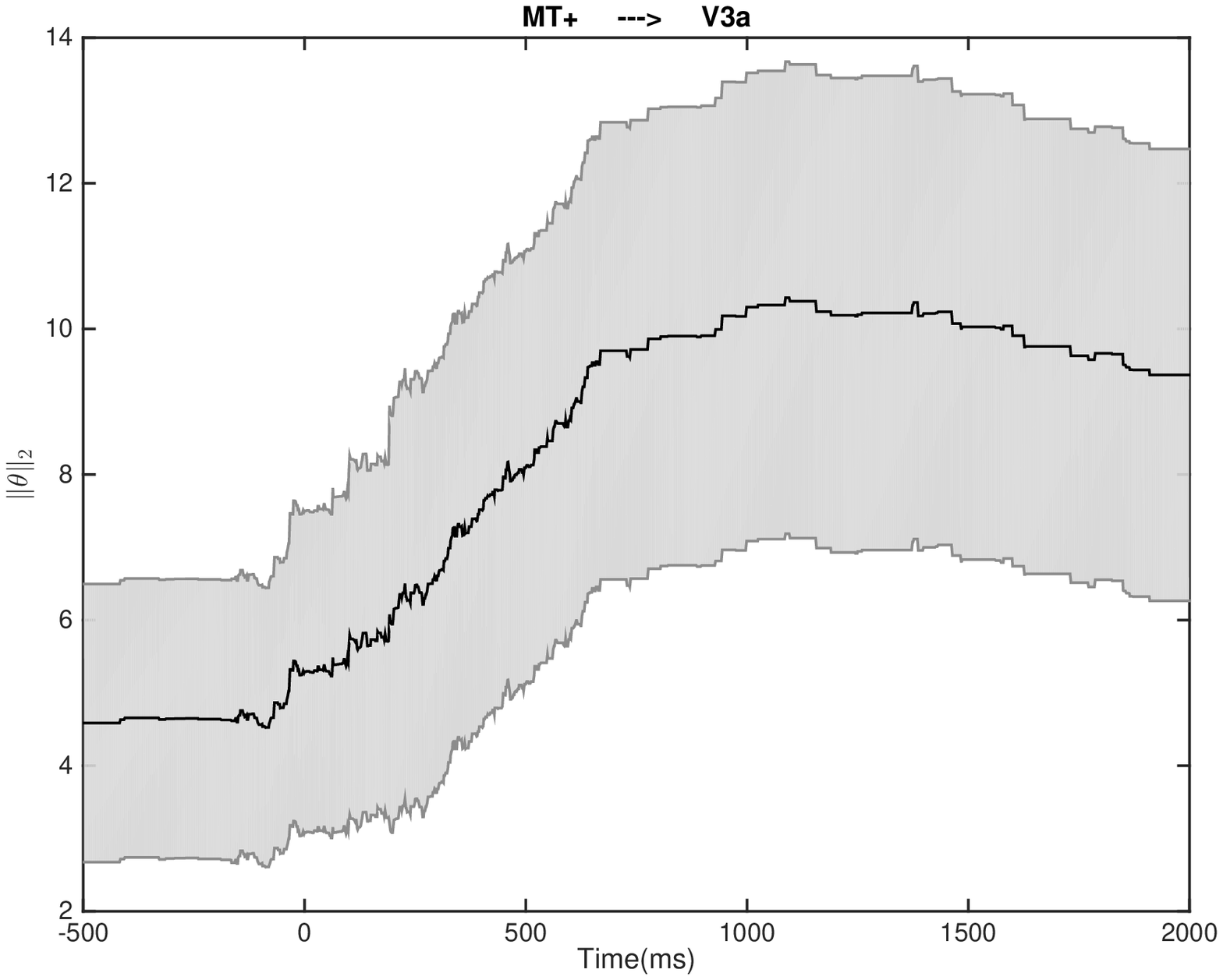}
            \caption[]%
            {{\small $\ell_2$ norm of edge MT+ $\rightarrow$ V3a.}}    
            \label{fig:mtp-v3a}
        \end{subfigure}
        \hfill
            \begin{subfigure}[b]{0.3\textwidth}  
            \centering 
            \includegraphics[width = 50mm, height = 50mm]{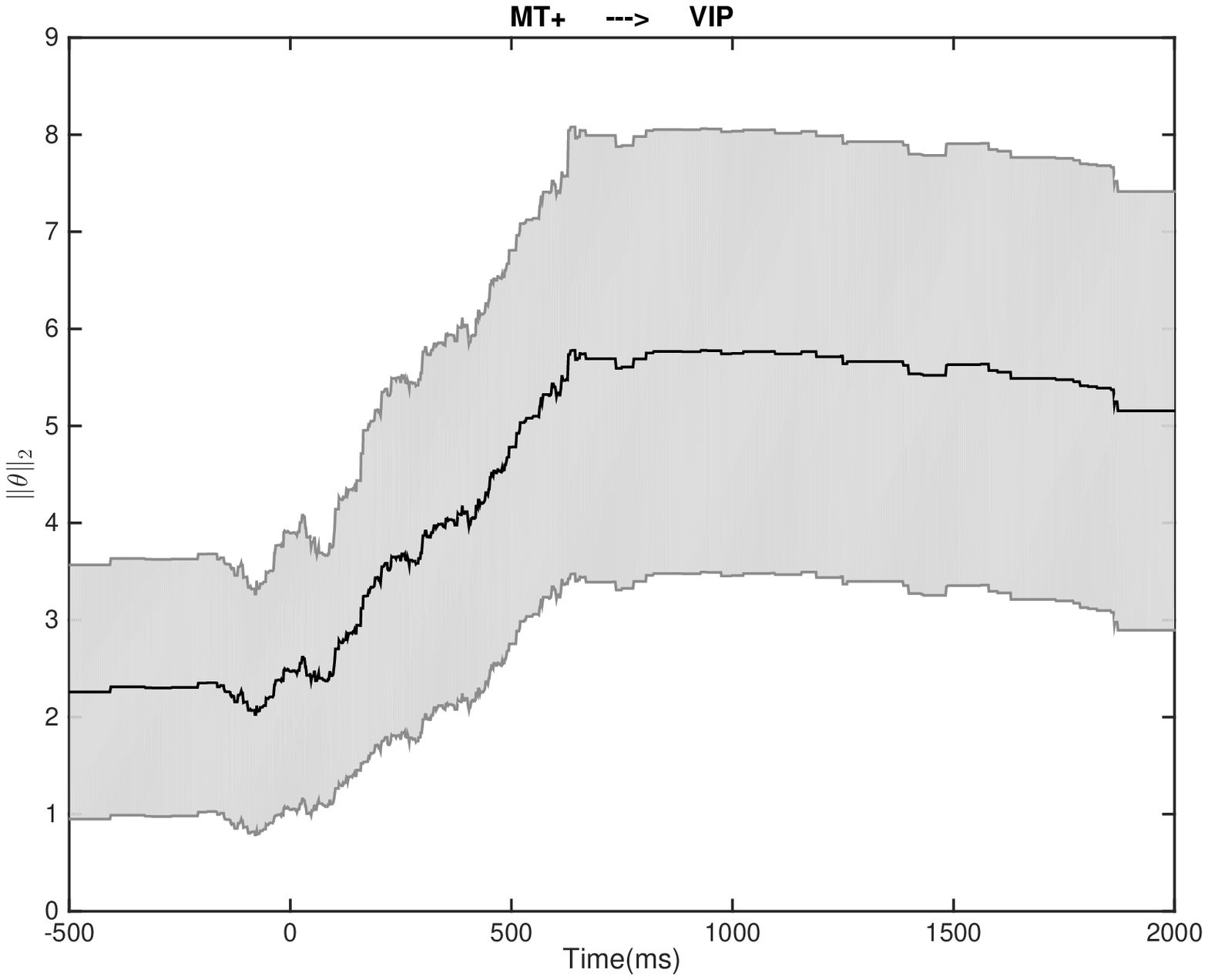}
            \caption[]%
            {{\small $\ell_2$ norm of edge MT+ $\rightarrow$ VIP.}}    
            \label{fig:mtp-vip}
        \end{subfigure}
        \caption[ ]
        {\small $\ell_2$ norms of coefficients between pairs of time series in the visual processing region.} 
        \label{fig:Coef_VP}
    \end{figure}
\end{center}
\begin{center}
 \begin{figure}[H]
        \centering
        \begin{subfigure}[b]{0.3\textwidth}
            \centering
            \includegraphics[width = 50mm, height = 50mm]{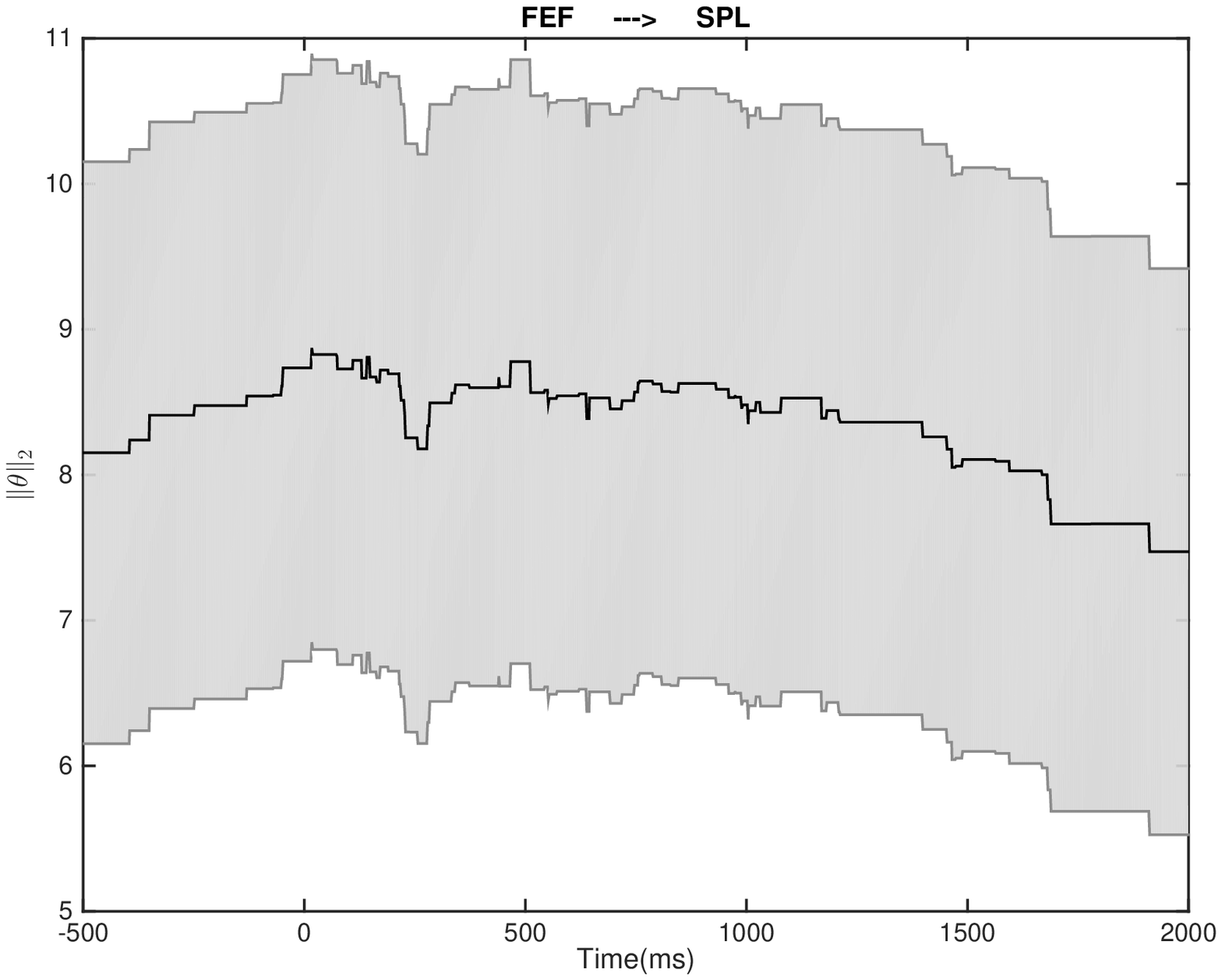}
            \caption[Network2]%
            {{\small $\ell_2$ norm of edge FEF $\rightarrow$ SPL.}}    
            \label{fig:fef-spl}
        \end{subfigure}
        \hfill
        \begin{subfigure}[b]{0.3\textwidth}  
            \centering 
            \includegraphics[width = 50mm, height = 50mm]{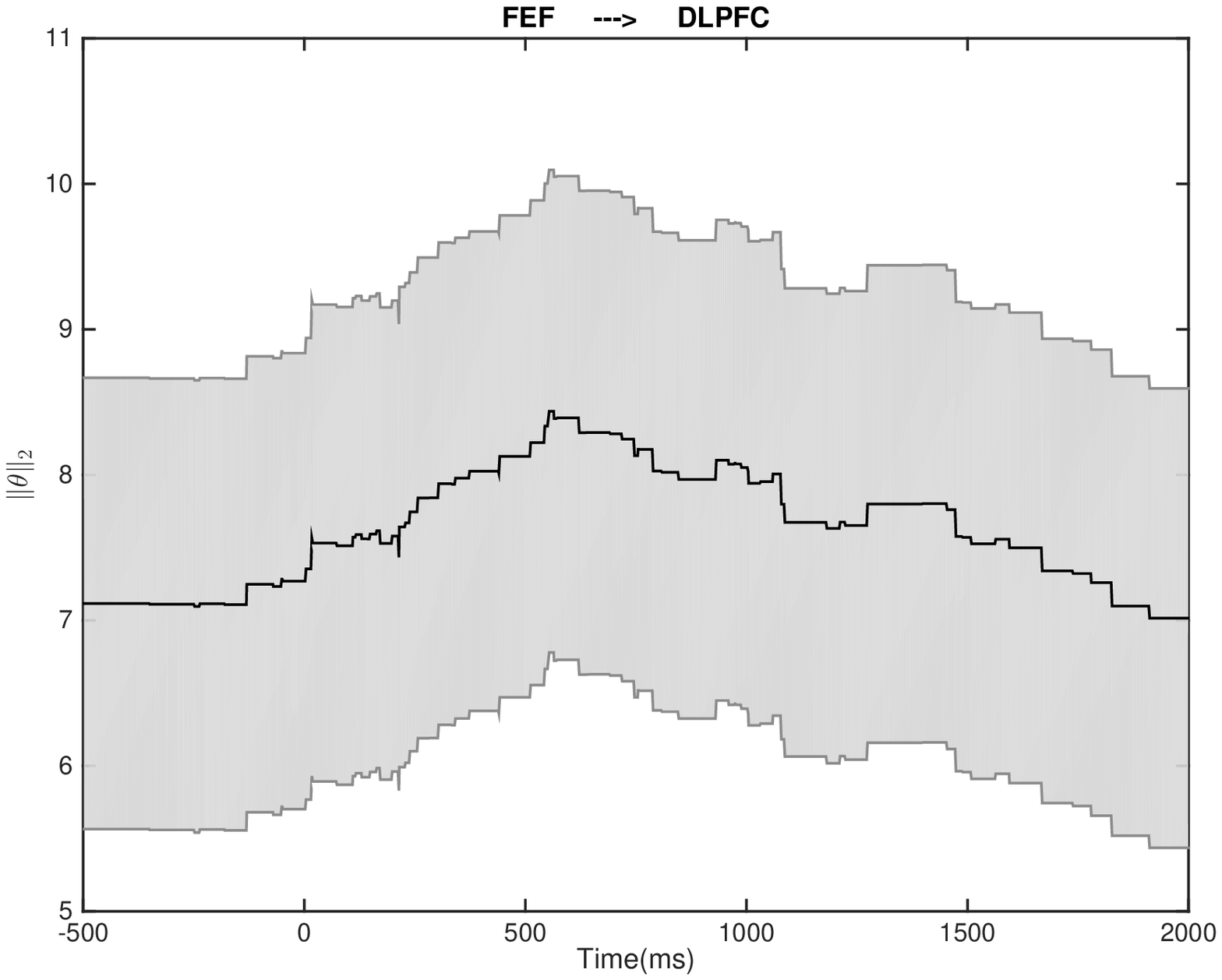}
            \caption[]%
            {{\small $\ell_2$ norm of edge FEF $\rightarrow$ DLPFC.}}    
            \label{fig:fef-dlpfc}
        \end{subfigure}
        \hfill
            \begin{subfigure}[b]{0.3\textwidth}  
            \centering 
            \includegraphics[width = 50mm, height = 50mm]{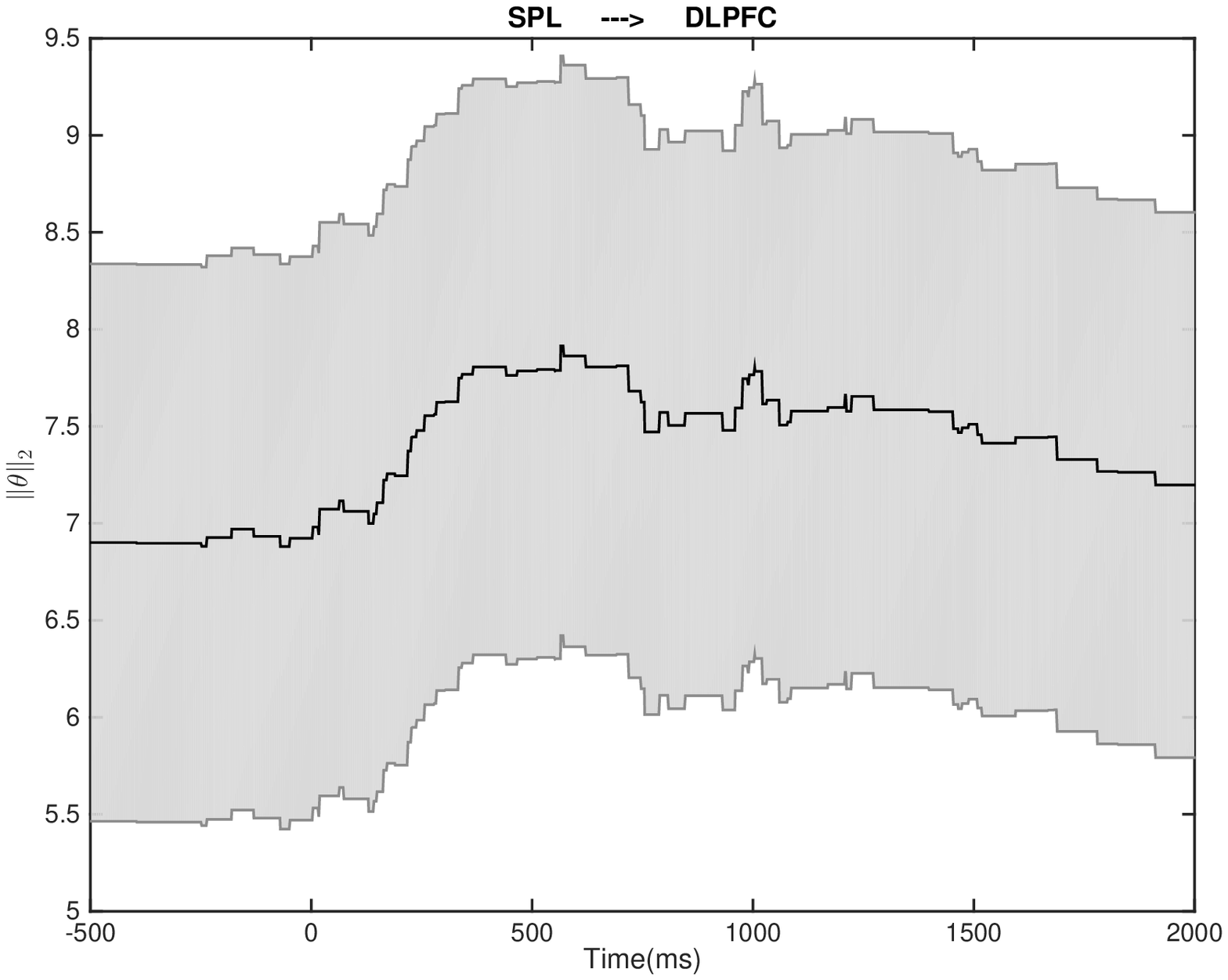}
            \caption[]%
            {{\small $\ell_2$ norm of edge SPL $\rightarrow$ DLPFC.}}    
            \label{fig:spl-dlpfc}
        \end{subfigure}
        \newline
                \begin{subfigure}[b]{0.3\textwidth}
            \centering
            \includegraphics[width = 50mm, height = 50mm]{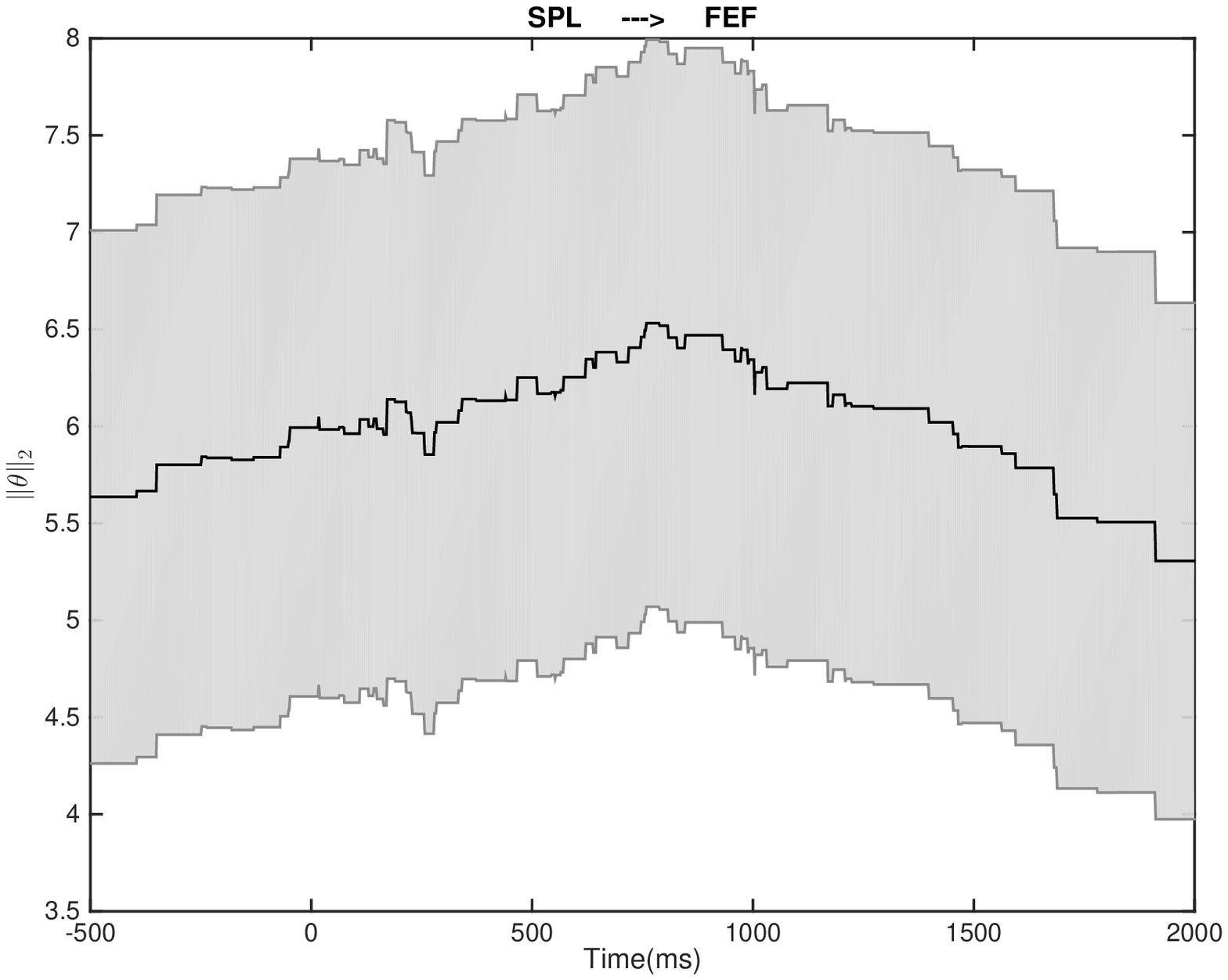}
            \caption[Network2]%
            {{\small $\ell_2$ norm of edge SPL $\rightarrow$ FEF.}}    
            \label{fig:spl-fef}
        \end{subfigure}
        \hfill
        \begin{subfigure}[b]{0.3\textwidth}  
            \centering 
            \includegraphics[width = 50mm, height = 50mm]{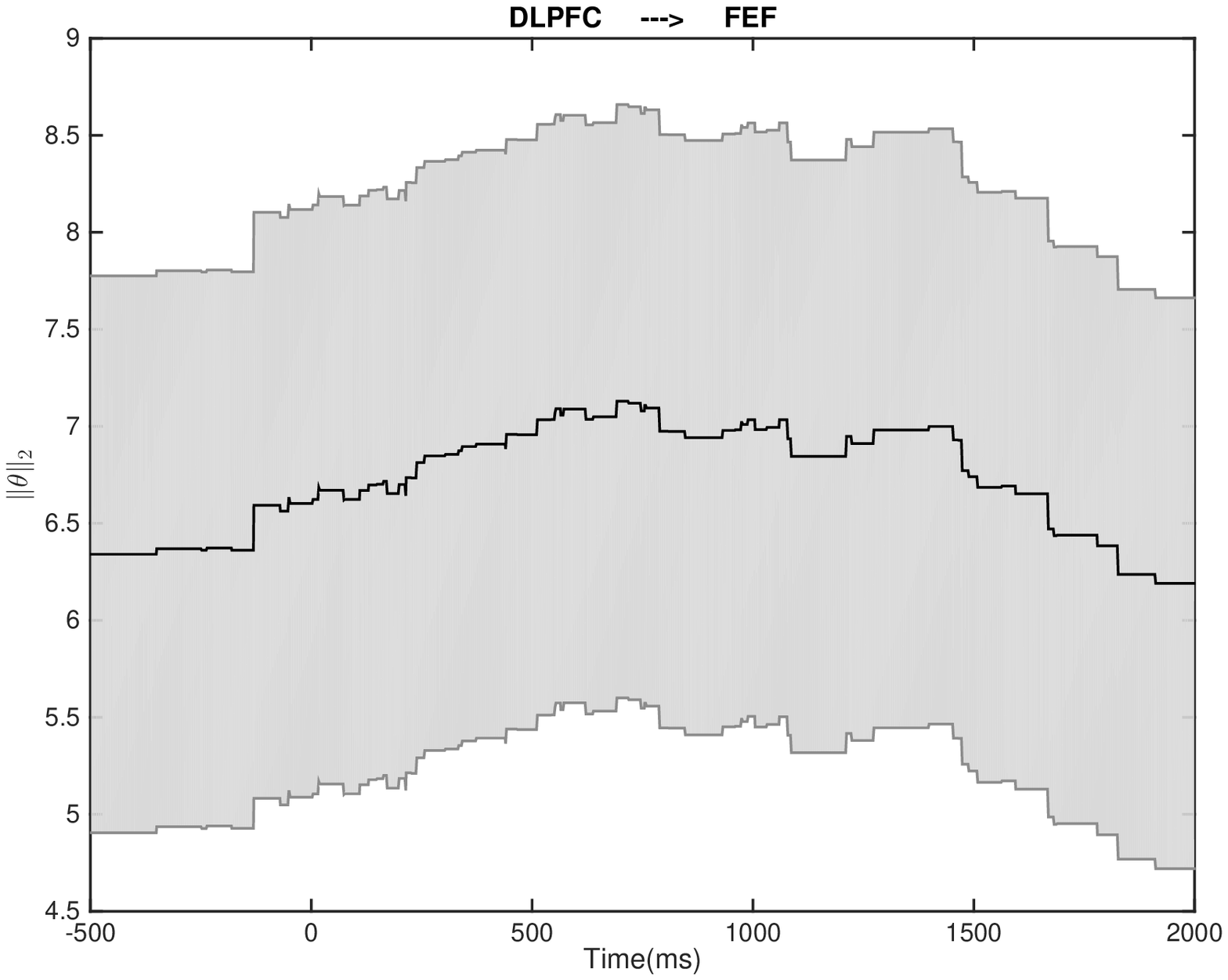}
            \caption[]%
            {{\small $\ell_2$ norm of edge DLPFC $\rightarrow$ FEF.}}    
            \label{fig:dlpfc-fef}
        \end{subfigure}
        \hfill
            \begin{subfigure}[b]{0.3\textwidth}  
            \centering 
            \includegraphics[width = 50mm, height = 50mm]{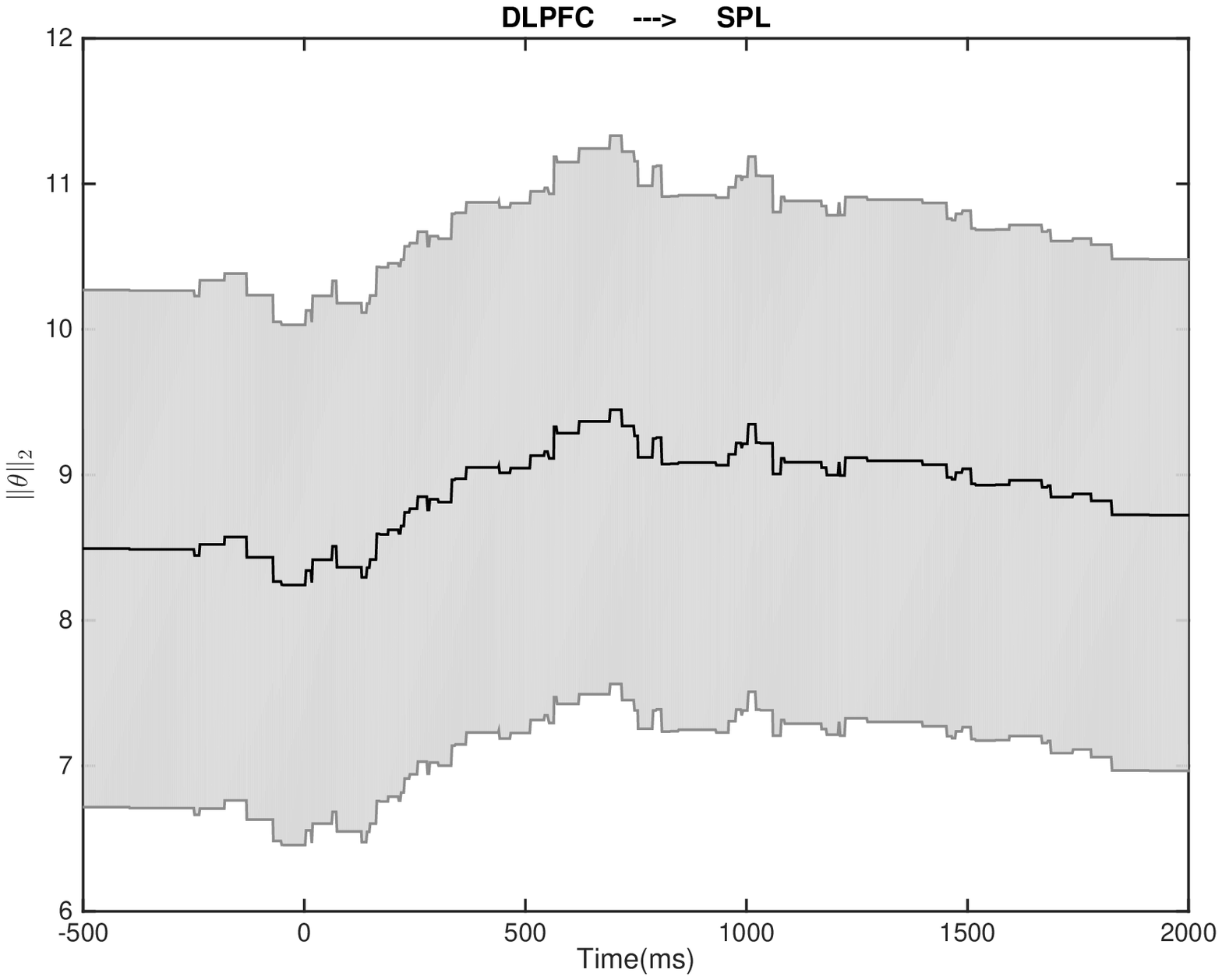}
            \caption[]%
            {{\small $\ell_2$ norm of edge DLPFC $\rightarrow$ SPL.}}    
            \label{fig:dlpfc-spl}
        \end{subfigure}
        \caption[ ]
        {\small $\ell_2$ norms of coefficients between pairs of time series in the frontoparietal region.} 
        \label{fig:Coef_FP}
    \end{figure}
\end{center}

As an aside, we note that comparatively few interactions were found between the visual processing region and the frontoparietal region using our method (results not shown).
    
\section{Conclusion}
\noindent
Motivated by the types of questions arising in task-based neuroscience -- particularly using imaging modalities with fine-scale temporal resolution -- we proposed a novel method for simulataneous network inference and change point detection.  Various extensions are possible.  For example, a penalty in the spirit of the fused-lasso would be of interest here, to encourage a certain notion of temporal contiguity.  In addition, a speed-up of the implementation (particularly for the non-dyadic case) would be desirable -- and, indeed, necessary for larger networks than those studied here -- adopting, for example, ideas like those underlying the PELT algorithm presented by \cite{killick2012optimal}.  Finally, it would be natural to explore the utility of our proposed method in the context of  financial economics. 

%\textcolor{red}{\it [EK:  Additional discussion topics:  Larger changepoint literature?  Particularly with connections to multiscale methods, such as in the spirit of what Piotr and colleagues have done.  Financial applications, but challenges with data.  Fused-lasso possibility for inducing a sense of smoothness across time.  Also, point to multiscale and wavelets on graph, starting with work with Crovella, and the now quite active field of graph signal processing.  Ideally one might do a network by time multiscale representation if desired.}
\section{Acknowledgements}
\noindent
We would like to thank Lucia Vaina and Kunjan Rana for providing the MEG data and offering helpful discussion throughout. This work was supported in part by funding under AFOSR award 12RSL042 and NIH award 1R01NS095369-01.
%\newpage
\bibliographystyle{rss}
\bibliography{xinyu24}

\begin{thebibliography}{35}
\expandafter\ifx\csname natexlab\endcsname\relax\def\natexlab#1{#1}\fi
\expandafter\ifx\csname url\endcsname\relax
  \def\url#1{\texttt{#1}}\fi
\expandafter\ifx\csname urlprefix\endcsname\relax\def\urlprefix{URL: }\fi

\bibitem[{Amano et~al.(2012)Amano, Takeda, Haji, Terao, Maruya, Matsumoto,
  Murakami and Nishida}]{amano2012human}
Amano, K., Takeda, T., Haji, T., Terao, M., Maruya, K., Matsumoto, K.,
  Murakami, I. and Nishida, S. (2012) Human neural responses involved in
  spatial pooling of locally ambiguous motion signals.
\newblock \textit{Journal of neurophysiology}, \textbf{107}, 3493--3508.

\bibitem[{Bach(2008)}]{bach2008consistency}
Bach, F.~R. (2008) Consistency of the group lasso and multiple kernel learning.
\newblock \textit{The Journal of Machine Learning Research}, \textbf{9},
  1179--1225.

\bibitem[{Barigozzi and Brownlees(2014)}]{barigozzi2014nets}
Barigozzi, M. and Brownlees, C.~T. (2014) Nets: network estimation for time
  series.
\newblock \textit{Available at SSRN 2249909}.

\bibitem[{Basu et~al.(2015)Basu, Shojaie and Michailidis}]{JMLR:v16:basu15a}
Basu, S., Shojaie, A. and Michailidis, G. (2015) Network granger causality with
  inherent grouping structure.
\newblock \textit{Journal of Machine Learning Research}, \textbf{16}, 417--453.
\newblock \urlprefix\url{http://jmlr.org/papers/v16/basu15a.html}.

\bibitem[{Betancourt et~al.(2017)Betancourt, Rodr{\'\i}guez and
  Boyd}]{betancourt2017bayesian}
Betancourt, B., Rodr{\'\i}guez, A. and Boyd, N. (2017) Bayesian fused lasso
  regression for dynamic binary networks.
\newblock \textit{Journal of Computational and Graphical Statistics}.

\bibitem[{Bettencourt and Xu(2016)}]{bettencourt2016decoding}
Bettencourt, K.~C. and Xu, Y. (2016) Decoding the content of visual short-term
  memory under distraction in occipital and parietal areas.
\newblock \textit{Nature neuroscience}, \textbf{19}, 150--157.

\bibitem[{Bolstad et~al.(2011)Bolstad, Van~Veen and Nowak}]{bolstad2011causal}
Bolstad, A., Van~Veen, B.~D. and Nowak, R. (2011) Causal network inference via
  group sparse regularization.
\newblock \textit{Signal Processing, IEEE Transactions on}, \textbf{59},
  2628--2641.

\bibitem[{Boyd et~al.(2011)Boyd, Parikh, Chu, Peleato and
  Eckstein}]{boyd2011distributed}
Boyd, S., Parikh, N., Chu, E., Peleato, B. and Eckstein, J. (2011) Distributed
  optimization and statistical learning via the alternating direction method of
  multipliers.
\newblock \textit{Foundations and Trends{\textregistered} in Machine Learning},
  \textbf{3}, 1--122.

\bibitem[{Braddick et~al.(2000)Braddick, O?Brien, Wattam-Bell, Atkinson and
  Turner}]{braddick2000form}
Braddick, O., O?Brien, J., Wattam-Bell, J., Atkinson, J. and Turner, R. (2000)
  Form and motion coherence activate independent, but not dorsal/ventral
  segregated, networks in the human brain.
\newblock \textit{Current Biology}, \textbf{10}, 731--734.

\bibitem[{Bullmore and Sporns(2009)}]{bullmore2009complex}
Bullmore, E. and Sporns, O. (2009) Complex brain networks: graph theoretical
  analysis of structural and functional systems.
\newblock \textit{Nature Reviews Neuroscience}, \textbf{10}, 186--198.

\bibitem[{Calabro and Vaina(2012)}]{calabro2012interaction}
Calabro, F. and Vaina, L. (2012) Interaction of cortical networks mediating
  object motion detection by moving observers.
\newblock \textit{Experimental brain research}, \textbf{221}, 177--189.

\bibitem[{Davis et~al.(2008)Davis, Lee and Rodriguez-Yam}]{davis2008break}
Davis, R.~A., Lee, T. and Rodriguez-Yam, G.~A. (2008) Break detection for a
  class of nonlinear time series models.
\newblock \textit{Journal of Time Series Analysis}, \textbf{29}, 834--867.

\bibitem[{DeVore(1998)}]{devore1998nonlinear}
DeVore, R.~A. (1998) Nonlinear approximation.
\newblock \textit{Acta numerica}, \textbf{7}, 51--150.

\bibitem[{Donoho(1993)}]{donoho1993unconditional}
Donoho, D.~L. (1993) Unconditional bases are optimal bases for data compression
  and for statistical estimation.
\newblock \textit{Applied and computational harmonic analysis}, \textbf{1},
  100--115.

\bibitem[{Donoho(1997)}]{donoho1997cart}
--- (1997) Cart and best-ortho-basis: a connection.
\newblock \textit{Ann. Statist.}, \textbf{25}, 1870--1911.

\bibitem[{Fouque et~al.(2011)Fouque, Papanicolaou, Sircar and
  S{\o}lna}]{fouque2011multiscale}
Fouque, J.-P., Papanicolaou, G., Sircar, R. and S{\o}lna, K. (2011)
  \textit{Multiscale stochastic volatility for equity, interest rate, and
  credit derivatives}.
\newblock Cambridge University Press.

\bibitem[{Granger(1969)}]{granger1969investigating}
Granger, C.~W. (1969) Investigating causal relations by econometric models and
  cross-spectral methods.
\newblock \textit{Econometrica}, 424--438.

\bibitem[{Hamilton(1983)}]{hamilton1983oil}
Hamilton, J.~D. (1983) Oil and the macroeconomy since world war ii.
\newblock \textit{The Journal of Political Economy}, 228--248.

\bibitem[{Hiemstra and Jones(1994)}]{hiemstra1994testing}
Hiemstra, C. and Jones, J.~D. (1994) Testing for linear and nonlinear granger
  causality in the stock price-volume relation.
\newblock \textit{The Journal of Finance}, \textbf{49}, 1639--1664.

\bibitem[{Honey et~al.(2007)Honey, K{\"o}tter, Breakspear and
  Sporns}]{honey2007network}
Honey, C.~J., K{\"o}tter, R., Breakspear, M. and Sporns, O. (2007) Network
  structure of cerebral cortex shapes functional connectivity on multiple time
  scales.
\newblock \textit{Proc. Natn. Acad. Sci. USA}, \textbf{104}, 10240--10245.

\bibitem[{Killick et~al.(2012)Killick, Fearnhead and
  Eckley}]{killick2012optimal}
Killick, R., Fearnhead, P. and Eckley, I.~A. (2012) Optimal detection of
  changepoints with a linear computational cost.
\newblock \textit{Journal of the American Statistical Association},
  \textbf{107}, 1590--1598.

\bibitem[{Kolaczyk(2009)}]{Kolaczyk:2009:SAN:1593430}
Kolaczyk, E.~D. (2009) \textit{Statistical Analysis of Network Data: Methods
  and Models}.
\newblock Springer Publishing Company, Incorporated, 1st edn.

\bibitem[{Kolaczyk and Nowak(2005)}]{kolaczyk2005multiscale}
Kolaczyk, E.~D. and Nowak, R.~D. (2005) Multiscale generalised linear models
  for nonparametric function estimation.
\newblock \textit{Biometrika}, \textbf{92}, 119--133.

\bibitem[{Li and Barron(2000)}]{li2000mixture}
Li, J.~Q. and Barron, A.~R. (2000) Mixture density estimation.
\newblock In \textit{Advances in neural information processing systems},
  279--285.

\bibitem[{Long et~al.(2005)Long, Brown, Triantafyllou, Aharon, Wald and
  Solo}]{long2005nonstationary}
Long, C., Brown, E., Triantafyllou, C., Aharon, I., Wald, L. and Solo, V.
  (2005) Nonstationary noise estimation in functional mri.
\newblock \textit{NeuroImage}, \textbf{28}, 890--903.

\bibitem[{Louie and Kolaczyk(2006)}]{louie2006multiscale}
Louie, M.~M. and Kolaczyk, E.~D. (2006) A multiscale method for disease mapping
  in spatial epidemiology.
\newblock \textit{Statistics in medicine}, \textbf{25}, 1287--1306.

\bibitem[{Mallat(1989)}]{mallat1989theory}
Mallat, S.~G. (1989) A theory for multiresolution signal decomposition: the
  wavelet representation.
\newblock \textit{Pattern Analysis and Machine Intelligence, IEEE Transactions
  on}, \textbf{11}, 674--693.

\bibitem[{Meinshausen and B{\"u}hlmann(2006)}]{meinshausen2006high}
Meinshausen, N. and B{\"u}hlmann, P. (2006) High-dimensional graphs and
  variable selection with the lasso.
\newblock \textit{Ann. Statist.}, 1436--1462.

\bibitem[{Mukhopadhyay and Chatterjee(2007)}]{mukhopadhyay2007causality}
Mukhopadhyay, N.~D. and Chatterjee, S. (2007) Causality and pathway search in
  microarray time series experiment.
\newblock \textit{Bioinformatics}, \textbf{23}, 442--449.

\bibitem[{M{\"u}ller(2001)}]{muller2001stochastic}
M{\"u}ller, A. (2001) Stochastic ordering of multivariate normal distributions.
\newblock \textit{Annals of the Institute of Statistical Mathematics},
  \textbf{53}, 567--575.

\bibitem[{Rana and Vaina(2014)}]{rana2014functional}
Rana, K.~D. and Vaina, L.~M. (2014) Functional roles of 10 hz alpha-band power
  modulating engagement and disengagement of cortical networks in a complex
  visual motion task.
\newblock \textit{PloS one}, \textbf{9}, e107715.

\bibitem[{Sims(1972)}]{sims1972money}
Sims, C.~A. (1972) Money, income, and causality.
\newblock \textit{The American Economic Review}, \textbf{62}, 540--552.

\bibitem[{Willett and Nowak(2007)}]{willett2007multiscale}
Willett, R.~M. and Nowak, R.~D. (2007) Multiscale poisson intensity and density
  estimation.
\newblock \textit{IEEE Transactions on Information Theory}, \textbf{53},
  3171--3187.

\bibitem[{Yuan and Lin(2006)}]{yuan2006model}
Yuan, M. and Lin, Y. (2006) Model selection and estimation in regression with
  grouped variables.
\newblock \textit{J. R. Statist. Soc. B}, \textbf{68}, 49--67.

\bibitem[{Zhao and Yu(2006)}]{zhao2006model}
Zhao, P. and Yu, B. (2006) On model selection consistency of lasso.
\newblock \textit{The Journal of Machine Learning Research}, \textbf{7},
  2541--2563.

\end{thebibliography}
\nocite{*}
\newpage
\section{Appendix}
\subsection{Algorithm using RDP}
\noindent
Here we provide the algorithm for implementation based on recursive dyadic partitions. Assume the length of the time series equals $T = 2^J$ and $j_{min} = \min_j$ such that $2^j > p+1$. Note that $p+1$ is the minimum required number of observations to fit the restricted VAR(p) model. Assume $J > j_{min}$,\\
\begin{algorithm}[H]
 \KwData{$\X(u)$, $\X(-u)$, $p$}
 \KwResult{$\boldsymbol{\hat\theta}_{RDP}$}
 \For{i = $0:2^{(J-j_{min})}-1$}{
  		Fit restricted VAR(p) model for $x_I(u)$,  for $I = \{t: t \in [2^{j_{min}}*i +1, 2^{j_{min}} *(i+1) ]\}$
  		Compute and store $pl_I$ on each interval $I$\;
  		optimumModel $\gets pl_I$\;
 }
  \For{j = $J-j_{min}-1:0$}{
 	\For{i = $0:2^j-1$}{
  		Fit restricted VAR(p) model for $\X_I(u)$, for $I = \{t: t \in [2^{(J-j)}*i+1, 2^{(J-j)}*(i+1)] \}$\;
  		Compute and store $pl_I$ on each interval $I$\;
  		\eIf{$pl_I \leq pl_{I_l^i} + pl_{I_r^i} + \text{Penalty}$}{
  		 optimumModel $\gets pl_I$\;
  		 Update changePoint\;
   		}{
   		optimumModel $\gets pl_l$ and $pl_r$\;
   		Update changePoint\;
  	}
  }
 }
 \caption{Multiscale dynamic causal network using RDP}
 \label{algorithm.RDP}
\end{algorithm}

Algorithm \ref{algorithm.RDP} splits only at dyadic positions. The candidate partitions $\PP \preceq \PP_{D_y}^*$ can be represented as subtrees of a binary tree of depth $\log_2 T$. Given a dataset of length $T = 2^J$, we have $2^0$ root node, $2^1$ nodes at level $1$, $2^2$ nodes, $2^3$ nodes, and so on, at the following levels, until we reach the leaf level, which has $2^{(J-1)}$ nodes. The complexity of the algorithm is then of order $\mathcal{O}(T)$ calls to fit the group lasso regression and $\mathcal{O}(T)$ calls for comparisons.

\subsection{Proof of theorem \ref{splitting}}
\begin{proof} Theorem \ref{splitting}\\
The proof contains two parts. In the first part, we show that equation (\ref{equa1}) holds, under $H_0$. In the second part, we show that equation (\ref{equa2}) holds, under $H_1$.

\noindent \textit{Part 1}\\
We begin by defining the group lasso penalized likelihood on an interval $I$:
\begin{align}
PL_I = \frac{1}{|I|}\left\|\X_I(u) -\X_I(-u)\boldsymbol\theta_I(u,v) \right\|_2^2 + \lambda_I  \sum_{v \in V\backslash \{u\}}\left\|\boldsymbol\theta_I(u,v)\right\|_2.
\label{pl}
\end{align}
Let $\boldsymbol{\hat\theta}_{1:T}$ be the $\boldsymbol\theta$ that minimizes the penalized likelihood (\ref{pl}) on the interval from $1$ to $T$ and $\hat{PL}_{1:T}$ be the quantity upon substituting $\boldsymbol{\hat\theta}_{1:T}$ in equation (\ref{pl}). Consider any alternative model with a change point detected at point $\hat\tau\in (1, T)$. Denote by $\boldsymbol{\hat\theta}_{1:\hat\tau}$ and $\boldsymbol{\hat\theta}_{\hat\tau:T}$ the coefficients $\boldsymbol\theta$ that minimize equation (\ref{pl}) over intervals  $[1, \hat\tau]$ and $(\hat\tau, T]$, respectively.
Given our model, equation (\ref{equa1}) in theorem \ref{splitting} is equivalent to
\begin{align*}
\mathbb{P}_{H_0}(\hat{PL}_{1:T} \leq \hat{PL}_{1:\hat\tau} + \hat{PL}_{\hat\tau:T} + C_3\log T) \longrightarrow 1.
\end{align*}
The additional term $C_3\log T$ comes from the fact that the alternative model has 1 more partition than the null model, with $C_3 = 1/2$ using RDP and $C_3 = 3/2$ using RP.  We expand $\hat{PL}_{1:\hat\tau} + \hat{PL}_{\hat\tau:T} - \hat{PL}_{1:T} +C_3\log T$ and get:
\begin{align}
& \frac{1}{\hat\tau}\left\|\X_{1:\hat\tau}(u) - \sum_{v\in V\backslash\{u\}}\X_{1:\hat\tau}(v)\boldsymbol{\hat{\theta}}_{1:\hat\tau}(u,v) \right\|_2^2 + \lambda_{1:\hat\tau}\sum_{v\in V\backslash\{u\}}\left\|\boldsymbol{\hat{\theta}}_{1:\hat\tau}(u,v)\right\|_2 \nonumber \\
+ \quad & \frac{1}{T- \hat\tau}\left\|\X_{\hat\tau:T}(u) - \sum_{v\in V\backslash\{u\}}\X_{\hat\tau:T}(v)\boldsymbol{\hat{\theta}}_{\hat\tau:T}(u,v) \right\|_2^2 + \lambda_{\hat\tau:T}\sum_{v\in V\backslash\{u\}}\left\|\boldsymbol{\hat{\theta}}_{\hat\tau:T}(u,v)\right\|_2 \nonumber \\
- \quad & \frac{1}{T}\left\|\X_{1:T}(u) - \sum_{v\in V\backslash\{u\}}\X_{1:T}(v)\boldsymbol{\hat{\theta}}_{1:T}(u,v) \right\|_2^2 - \lambda_{1:T}\sum_{v\in V\backslash\{u\}}\left\|\boldsymbol{\hat{\theta}}_{1:T}(u,v)\right\|_2 + C_3\log T.
\label{form1}
\end{align}
By rewriting the last line of equation (\ref{form1}), we have
\begin{align}
& \frac{1}{\hat\tau}\left\|\X_{1:\hat\tau}(u) - \sum_{v\in V\backslash\{u\}}\X_{1:\hat\tau}(v)\boldsymbol{\hat{\theta}}_{1:\hat\tau}(u,v) \right\|_2^2 + \lambda_{1:\hat\tau}\sum_{v\in V\backslash\{u\}}\left\|\boldsymbol{\hat{\theta}}_{1:\hat\tau}(u,v)\right\|_2 \nonumber \\
+ \quad & \frac{1}{T- \hat\tau}\left\|\X_{\hat\tau:T}(u) - \sum_{v\in V\backslash\{u\}}\X_{\hat\tau:T}(v)\boldsymbol{\hat{\theta}}_{\hat\tau:T}(u,v) \right\|_2^2 + \lambda_{\hat\tau:T}\sum_{v\in V\backslash\{u\}}\left\|\boldsymbol{\hat{\theta}}_{\hat\tau:T}(u,v)\right\|_2 \nonumber \\
- \quad & \frac{1}{T}\left\|\X_{1:\hat\tau}(u) - \sum_{v\in V\backslash\{u\}}\X_{1:\hat\tau}(v)\boldsymbol{\hat{\theta}}_{1:T}(u,v) \right\|_2^2   \nonumber\\
- \quad&  \frac{1}{T}\left\|\X_{\hat\tau :T}(u) - \sum_{v\in V\backslash\{u\}}\X_{\hat\tau :T}(v)\boldsymbol{\hat{\theta}}_{1:T}(u,v) \right\|_2^2\nonumber \\
- \quad & \lambda_{1:T}\sum_{v\in V\backslash\{u\}}\left\|\boldsymbol{\hat{\theta}}_{1:T}(u,v)\right\|_2 + C_3\log T.
\label{form2}
\end{align}
We then add and subtract a term in both line 3 and line 4  of equation (\ref{form2}). In doing so, we have:
{\small
\begin{align}
& \frac{1}{\hat\tau}\left\|\X_{1:\hat\tau}(u) - \sum_{v\in V\backslash\{u\}}\X_{1:\hat\tau}(v)\boldsymbol{\hat{\theta}}_{1:\hat\tau}(u,v) \right\|_2^2 + \lambda_{1:\hat\tau}\sum_{v\in V\backslash\{u\}}\left\|\boldsymbol{\hat{\theta}}_{1:\hat\tau}(u,v)\right\|_2 \nonumber \\
+ \quad & \frac{1}{T- \hat\tau}\left\|\X_{\hat\tau:T}(u) - \sum_{v\in V\backslash\{u\}}\X_{\hat\tau:T}(v)\boldsymbol{\hat{\theta}}_{\hat\tau:T}(u,v) \right\|_2^2 + \lambda_{\hat\tau:T}\sum_{v\in V\backslash\{u\}}\left\|\boldsymbol{\hat{\theta}}_{\hat\tau:T}(u,v)\right\|_2 \nonumber \\
- \quad & \frac{1}{T}\left\|\X_{1:\hat\tau}(u) - \sum_{v\in V\backslash\{u\}}\X_{1:\hat\tau}(v)\boldsymbol{\hat{\theta}}_{1:\hat\tau}(u,v) + \sum_{v\in V\backslash\{u\}}\X_{1:\hat\tau}(v)\boldsymbol{\hat{\theta}}_{1:\hat\tau}(u,v)- \sum_{v\in V\backslash\{u\}}\X_{1:\hat\tau}(v)\boldsymbol{\hat{\theta}}_{1:T}(u,v) \right\|_2^2   \nonumber\\
- \quad&  \frac{1}{T}\left\|\X_{\hat\tau :T}(u) - \sum_{v\in V\backslash\{u\}}\X_{\hat\tau:T}(v)\boldsymbol{\hat{\theta}}_{\hat\tau:T}(u,v) + \sum_{v\in V\backslash\{u\}}\X_{\hat\tau:T}(v)\boldsymbol{\hat{\theta}}_{\hat\tau:T}(u,v) - \sum_{v\in V\backslash\{u\}}\X_{\hat\tau :T}(v)\boldsymbol{\hat{\theta}}_{1:T}(u,v) \right\|_2^2\nonumber \\
- \quad & \lambda_{1:T}\sum_{v\in V\backslash\{u\}}\left\|\boldsymbol{\hat{\theta}}_{1:T}(u,v)\right\|_2 + C_3\log T.
\label{form3}
\end{align}
}
From which we have that:
{\small
\begin{align}
\text{equation (\ref{form3})} \nonumber\\
\geq \quad & \frac{1}{\hat\tau}\left\|\X_{1:\hat\tau}(u) - \sum_{v\in V\backslash\{u\}}\X_{1:\hat\tau}(v)\boldsymbol{\hat{\theta}}_{1:\hat\tau}(u,v) \right\|_2^2 \nonumber \\
+ \quad & \frac{1}{T- \hat\tau}\left\|\X_{\hat\tau:T}(u) - \sum_{v\in V\backslash\{u\}}\X_{\hat\tau:T}(v)\boldsymbol{\hat{\theta}}_{\hat\tau:T}(u,v) \right\|_2^2 \nonumber \\
- \quad & \frac{1}{T}\left\|\X_{1:\hat\tau}(u) - \sum_{v\in V\backslash\{u\}}\X_{1:\hat\tau}(v)\boldsymbol{\hat{\theta}}_{1:\hat\tau}(u,v) \right\|_2^2 \nonumber \\
- \quad & \frac{1}{T}\left \| \sum_{v\in V\backslash\{u\}}\X_{1:\hat\tau}(v)\boldsymbol{\hat{\theta}}_{1:\hat\tau}(u,v)- \sum_{v\in V\backslash\{u\}}\X_{1:\hat\tau}(v)\boldsymbol{\hat{\theta}}_{1:T}(u,v) \right\|_2^2   \nonumber\\
-\quad& \frac{2}{T}\left(\left\|\X_{1:\hat\tau}(u) - \sum_{v\in V\backslash\{u\}}\X_{1:\hat\tau}(v)\boldsymbol{\hat{\theta}}_{1:\hat\tau}(u,v) \right\|_2 \right.\nonumber \\
&\times \left. \left\| \sum_{v\in V\backslash\{u\}}\X_{1:\hat\tau}(v)\boldsymbol{\hat{\theta}}_{1:\hat\tau}(u,v)- \sum_{v\in V\backslash\{u\}}\X_{1:\hat\tau}(v)\boldsymbol{\hat{\theta}}_{1:T}(u,v) 	\right\|_2 \right) \nonumber\\
- \quad&  \frac{1}{T}\left\|\X_{\hat\tau :T}(u) - \sum_{v\in V\backslash\{u\}}\X_{\hat\tau:T}(v)\boldsymbol{\hat{\theta}}_{\hat\tau:T}(u,v) \right \|_2^2 \nonumber\\ 
- \quad &  \frac{1}{T}\left\| \sum_{v\in V\backslash\{u\}}\X_{\hat\tau:T}(v)\boldsymbol{\hat{\theta}}_{\hat\tau:T}(u,v) - \sum_{v\in V\backslash\{u\}}\X_{\hat\tau :T}(v)\boldsymbol{\hat{\theta}}_{1:T}(u,v) \right\|_2^2\nonumber \\
- \quad & \frac{2}{T}\left(\left\|\X_{\hat\tau :T}(u) - \sum_{v\in V\backslash\{u\}}\X_{\hat\tau:T}(v)\boldsymbol{\hat{\theta}}_{\hat\tau:T}(u,v) \right \|_2 \right. \nonumber \\
& \times \left. \left\|\sum_{v\in V\backslash\{u\}}\X_{\hat\tau:T}(v)\boldsymbol{\hat{\theta}}_{\hat\tau:T}(u,v) - \sum_{v\in V\backslash\{u\}}\X_{\hat\tau :T}(v)\boldsymbol{\hat{\theta}}_{1:T}(u,v) \right\|_2 \right) \nonumber \\
+ \quad &  \lambda_{\hat\tau:T}\sum_{v\in V\backslash\{u\}}\left\|\boldsymbol{\hat{\theta}}_{\hat\tau:T}(u,v)\right\|_2+  \lambda_{1:\hat\tau}\sum_{v\in V\backslash\{u\}}\left\|\boldsymbol{\hat{\theta}}_{1:\hat\tau}(u,v)\right\|_2 - \lambda_{1:T}\sum_{v\in V\backslash\{u\}}\left\|\boldsymbol{\hat{\theta}}_{1:T}(u,v)\right\|_2 \nonumber \\
+ \quad & C_3\log T.
\label{form4}
\end{align}
}
Under assumptions (\ref{A1}) to (\ref{A5}), \cite{bach2008consistency} reformulated the group lasso penalized likelihood (\ref{pl}) as:
\begin{align}
PL_I =  \hat{\boldsymbol\Sigma}_{\X(u)\X(u)} - 2\hat{\boldsymbol\Sigma}_{\X(-u)\X(u)}^\prime\boldsymbol{\theta} + \boldsymbol{\theta}^\prime\boldsymbol{\hat{\Sigma}}_{\X(-u)\X(-u)}\boldsymbol\theta + \lambda_I \sum_{v\in V\backslash\{u\}}\left\|\boldsymbol{\theta}(u,v)\right\|_2
\label{alternateform}
\end{align}
where $\hat{\boldsymbol\Sigma}_{\X(u)\X(u)} = \frac{1}{|I|}\X(u)^\prime \Pi_{|I|}\X(u)$, $\hat{\boldsymbol\Sigma}_{\X(-u) \X(u)} = \frac{1}{|I|}\X(-u)^\prime\Pi_{|I|}\X(u)$ and\\ $\boldsymbol\theta^\prime\hat{\boldsymbol\Sigma}_{\X(-u)\X(-u)}\boldsymbol\theta = \frac{1}{|I|}\X(-u)^\prime\Pi_{|I|}\X(-u)$ are the empirical covariance matrices with $\Pi_{|I|}$ defined as $\Pi_{|I|} = \mathbf{I}_{|I|}-\frac{1}{|I|}\mathbf{1}_{|I|}\mathbf{1}_{|I|}^\prime$ and showed that the group lasso estimator $\boldsymbol{\hat\theta}$ converges in probability to $\boldsymbol{\theta}$. Using expression in (\ref{alternateform}) and collecting similar terms, we could then rewrite (\ref{form4}) as:
{\fontsize{8}{8}
\begin{align}
& \frac{T-\hat\tau}{T}\left\{\hat{\boldsymbol\Sigma}_{\X_{1:\hat\tau}(u) \X_{1:\hat\tau}(u)} - 2\hat{\boldsymbol\Sigma}_{\X_{1:\hat\tau}(-u) \X_{1:\hat\tau}(u)} \hat{\boldsymbol\theta}_{1:\hat\tau} +  \hat{\boldsymbol\theta}_{{1:\hat\tau}}^\prime \hat{\boldsymbol\Sigma}_{\X_{1:\hat\tau}(-u)\X_{1:\hat\tau}(-u)} \hat{\boldsymbol\theta}_{1:\hat\tau} \right\} \nonumber\\ 
+\quad & \frac{\hat\tau}{T}\left\{\hat{\boldsymbol\Sigma}_{\X_{\hat\tau:T}(u) \X_{\hat\tau:T}(u)} - 2\hat{\boldsymbol\Sigma}_{\X_{\hat\tau:T}(-u) \X_{\hat\tau:T}(u)} \hat{\boldsymbol\theta}_{\hat\tau:T} +  \hat{\boldsymbol\theta}_{{\hat\tau:T}}^\prime \hat{\boldsymbol\Sigma}_{\X_{\hat\tau:T}(-u)\X_{1:\hat\tau}(-u)} \hat{\boldsymbol\theta}_{\hat\tau:T} \right\} \label{partt1}\\
- \quad & \left\| \hat{\boldsymbol\Sigma}^{1/2}_{\X_{1:\hat\tau}(-u) \X_{1:\hat\tau}(-u)} \left( \hat{\boldsymbol\theta}_{1:\hat\tau} -  \hat{\boldsymbol\theta}_{1:T} \right)\right\|_2^2
-  \left\| \hat{\boldsymbol\Sigma}^{1/2}_{\X_{\hat\tau:T}(-u) \X_{\hat\tau:T}(-u)} \left( \hat{\boldsymbol\theta}_{\hat\tau:T} -  \hat{\boldsymbol\theta}_{1:T} \right)\right\|_2^2\label{partt2} \\ 
- \quad & \frac{2}{T}\left(\left\|\X_{1:\hat\tau}(u) - \sum_{v\in V\backslash\{u\}}\X_{1:\hat\tau}(v)\boldsymbol{\hat{\theta}}_{1:\hat\tau}(u,v) \right\|_2 \left \| \sum_{v\in V\backslash\{u\}}\X_{1:\hat\tau}(v)\left(\boldsymbol{\hat{\theta}}_{1:\hat\tau}(u,v)- \boldsymbol{\hat{\theta}}_{1:T}(u,v)\right) \right\|_2\right) \label{partt4}\\
- \quad & \frac{2}{T}\left(\left\|\X_{\hat\tau:T}(u) - \sum_{v\in V\backslash\{u\}}\X_{\hat\tau:T}(v)\boldsymbol{\hat{\theta}}_{\hat\tau:T}(u,v) \right\|_2 \left \| \sum_{v\in V\backslash\{u\}}\X_{\hat\tau:T}(v)\left(\boldsymbol{\hat{\theta}}_{\hat\tau:T}(u,v)- \boldsymbol{\hat{\theta}}_{1:T}(u,v)\right) \right\|_2\right) \label{partt5}\\
+ \quad &\lambda_{1:\hat\tau}\sum_{v\in V\backslash\{u\}}\left\|\hat{\boldsymbol\theta}_{1:\hat\tau}(u,v)\right\|_2 +\lambda_{\hat\tau:T}\sum_{v\in V\backslash\{u\}}\left\|\hat{\boldsymbol\theta}_{\hat\tau:T}(u,v)\right\|_2 - \lambda_{1:T}\sum_{v\in V\backslash\{u\}}\left\|\hat{\boldsymbol\theta}_{1:T}(u,v)\right\|_2 + C_3\log T\label{partt3}.
\end{align}
}

Note that in the previous expression, the first two lines are by definition non-negative. The expression in the last line is composed of a collection of penalty terms. They are the group lasso penalties, and all of them converge to zero asymptotically assuming $\lambda_{(\cdot)} \longrightarrow 0$ and $\lambda_{(\cdot)} N \longrightarrow 0$. 

Since $\hat{\boldsymbol\theta}_{1:\hat\tau} \stackrel{P}{\longrightarrow} \boldsymbol\theta$ , $\hat{\boldsymbol\theta}_{\hat\tau:T} \stackrel{P}{\longrightarrow} \boldsymbol\theta$ and $\hat{\boldsymbol\theta}_{1:T} \stackrel{P}{\longrightarrow} \boldsymbol\theta$, $\hat{\boldsymbol\theta}_{1:\hat\tau} - \hat{\boldsymbol\theta}_{1:T} \stackrel{P}{\longrightarrow} 0$ and $X$'s have finite moments up to order 4, each term in \eqref{partt2}, \eqref{partt4} and \eqref{partt5} converges to 0 in probability.

Putting everything together, we then complete the proof of the first part of the theorem:
\begin{align*}
\mathbb{P}_{H_0}(\hat{PL}_{1:T} \leq \hat{PL}_{1:\hat\tau_i} + \hat{PL}_{\hat\tau:T} + C_3\log T) \longrightarrow 1.
\end{align*}
\noindent
\textit{Part 2}\\
Suppose $H_1$ is true. We denote the estimated change point by $\hat\tau$. We show that $\hat{PL}_{1:\hat\tau} + \hat{PL}_{\hat\tau:T}$ is minimized at $\hat\tau = \tau$. Assume we have a competing estimator $\tilde\tau$ with change point detected at time $\tilde\tau = s$ with $s \neq \tau$. We show that
\begin{align}
\hat{PL}_{1:\hat\tau} + \hat{PL}_{\hat\tau:T} \leq \hat{PL}_{1:s} +  \hat{PL}_{s:T} 
\label{object}
\end{align}
holds with high probability under $H_1$. 
Without loss of generality, we assume that $\tau - s = \delta$, for some $\delta > 0$ as shown in figure \ref{fig.2}. For the case that $s > \tau$, a similar argument holds.
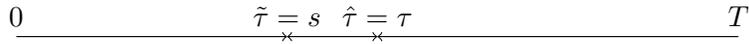
\begin{figure}[H]
\centering
\begin{tikzpicture}[scale=.6]
\draw [->] (0,0) -- (6,0)
node[pos=0,above] {$0$};
\draw [<->] (6,0) -- (8,0)
node[pos=0,above] {$\tilde\tau = s$};
\draw [<-] (8,0) -- (16,0)
node[pos=0,above] {$\hat\tau = \tau$}
node[pos=1,above] {$T$};
\end{tikzpicture}
\caption{Relative position of two detected change points}
\label{fig.2}
\end{figure}
Denote by $\boldsymbol{\hat\theta}_{1:\hat\tau}$ and $\boldsymbol{\hat\theta}_{\hat\tau:T}$ the estimated coefficients that minimize the penalized likelihoods, given that $I = \{t: t \in [1,\hat\tau)\}$ and $I = \{t: t \in [\hat\tau, T]\}$. We also define $\boldsymbol{\hat\theta}_{1:s}$ and $\boldsymbol{\hat\theta}_{s:T}$ to be the estimated coefficients that minimize the penalized likelihoods in \ref{pl}, given that $I = \{t: t \in [1,s)\}$ and $I = \{t: t \in [s, T]\}$.
The key idea is that $\hat{\boldsymbol\theta}_{1:\hat\tau}$ and $\hat{\boldsymbol\theta}_{\hat\tau:T}$ are consistent estimators of $\boldsymbol\theta_{1:\tau}$ and $\boldsymbol\theta_{\tau:T}$ but $\hat{\boldsymbol\theta}_{s:T}$ is not a consistent estimator of $\boldsymbol\theta_{1:\tau}$ nor $\boldsymbol\theta_{\tau:T}$ due to the mis-specification error. Therefore, one of the estimators from $\boldsymbol{\hat\theta}_{1:s}$ and $\boldsymbol{\hat\theta}_{s:T}$ such that $s < \tau$ is not a consistent estimator on the corresponding intervals. Formally, we have that
{\normalsize
\begin{align}
&\hat{PL}_{1:s} + \hat{PL}_{s:T} \nonumber \\ 
&= \dfrac{1}{s}\left\|\X_{1:s}(u) - \sum_{v\in V\backslash\{u\}}\X_{1:s}(v)\boldsymbol{\hat\theta}_{1:s}(u,v) \right\|_2^2 + \lambda_{1:s}\sum_{v\in V\backslash\{u\}}\left\|\boldsymbol{\hat\theta}_{1:s}(u,v)\right\|_2 \nonumber \\
&+ \dfrac{1}{T-s}\left\|\X_{s:T}(u) - \sum_{v\in V\backslash\{u\}}\X_{s:T}(v)\boldsymbol{\hat\theta}_{s:T}(u,v) \right\|_2^2 + \lambda_{s:T}\sum_{v\in V\backslash\{u\}}\left\|\boldsymbol{\hat\theta}_{s:T}(u,v)\right\|_2 \nonumber \\
&= \dfrac{1}{s}\left\|\X_{1:s}(u) - \sum_{v\in V\backslash\{u\}}\X_{1:s}(v)\boldsymbol{\hat\theta}_{1:s}(u,v) \right\|_2^2 + \lambda_{1:s}\sum_{v\in V\backslash\{u\}}\left\|\boldsymbol{\hat\theta}_{1:s}(u,v)\right\|_2 \nonumber\\
&+ \frac{1}{T-s}\left\|\X_{s:\tau}(u) - \sum_{v\in V\backslash\{u\}}\X_{s:\tau}(v)\boldsymbol{\hat\theta}_{s:T}(u,v) \right\|_2^2 + \frac{\delta \lambda_{s:T}}{T-s}\sum_{v\in V\backslash\{u\}}\left\|\boldsymbol{\hat\theta}_{s:T}(u,v)\right\|_2 \label{p1}\\
&+ \frac{1}{T-s}\left\|\X_{\tau:T}(u) - \sum_{v\in V\backslash\{u\}}\X_{\tau:T}(v)\boldsymbol{\hat\theta}_{s:T}(u,v) \right\|_2^2 + \frac{(T-s-\delta)\lambda_{s:T}}{T-s}\sum_{v\in V\backslash\{u\}}\left\|\boldsymbol{\hat\theta}_{1:s}(u,v)\right\|_2 \label{p2}
\end{align}
}
and
\begin{align*}
&\hat{PL}_{1:\hat\tau} + \hat{PL}_{\hat\tau:T} \\ 
&= \dfrac{1}{\tau}\left\|\X_{1:\hat\tau}(u) - \sum_{v\in V\backslash\{u\}}\X_{1:\hat\tau}(v)\boldsymbol{\hat\theta}_{1:\hat\tau}(u,v) \right\|_2^2 + \lambda_{1:\hat\tau}\sum_{v\in V\backslash\{u\}}\left\|\boldsymbol{\hat\theta}_{1:\hat\tau}(u,v)\right\|_2 \\
&+ \dfrac{1}{T-\tau}\left\|\X_{\hat\tau:T}(u) - \sum_{v\in V\backslash\{u\}}\X_{\hat\tau:T}(v)\boldsymbol{\hat\theta}_{\hat\tau:T}(u,v) \right\|_2^2 + \lambda_{\hat\tau:T}\sum_{v\in V\backslash\{u\}}\left\|\boldsymbol{\hat\theta}_{\hat\tau:T}(u,v)\right\|_2 
\end{align*}
We write expression (\ref{p1}) as $\hat{PL}_{1:s} + \hat{PL}_{s:\hat\tau}$, and expression (\ref{p2}), as $\tilde{PL}_{s:T}$. We show (\ref{object}) holds by first showing that $\hat{PL}_{1:s} + \hat{PL}_{s:\hat\tau} \geq \hat{PL}_{1:\hat\tau}$, and then showing $\tilde{PL}_{s:T} \geq \hat{PL}_{\hat\tau:T}$.
We first compute $\hat{PL}_{1:s} + \hat{PL}_{s:\hat\tau} - \hat{PL}_{1:\hat\tau}$:
\begin{align*}
&= \dfrac{1}{s}\left\|\X_{1:s}(u) - \sum_{v\in V\backslash\{u\}}v_{1:s}(v)\boldsymbol{\hat\theta}_{1:s}(u,v) \right\|_2^2 + \lambda_{1:s}\sum_{\X\in V\backslash\{u\}}\left\|\boldsymbol{\hat\theta}_{1:s}(u,v)\right\|_2 \\
&+ \frac{1}{T-s}\left\|\X_{s:\hat\tau}(u) - \sum_{v\in V\backslash\{u\}}\X_{s:\hat\tau}(v)\boldsymbol{\hat\theta}_{s:T}(u,v) \right\|_2^2 + \frac{\delta \lambda_{s:T}}{T-s}\sum_{v\in V\backslash\{u\}}\left\|\boldsymbol{\hat\theta}_{s:T}(u,v)\right\|_2 \\
&-\dfrac{1}{\tau}\left\|\X_{1:\hat\tau}(u) - \sum_{v\in V\backslash\{u\}}\X_{1:\hat\tau}(v)\boldsymbol{\hat\theta}_{1:\hat\tau}(u,v) \right\|_2^2 - \lambda_{1:\hat\tau}\sum_{v\in V\backslash\{u\}}\left\|\boldsymbol{\hat\theta}_{1:\hat\tau}(u,v)\right\|_2.
\end{align*}
Assuming there is another group-lasso estimator defined on the the interval between $s$ and $\hat\tau$, which is given by
\begin{align*}
\boldsymbol{\hat\theta}_{s:\hat\tau} = \argmin_{\boldsymbol{\theta}} \frac{1}{\hat\tau-s}\left\|\X_{s:\hat\tau}(u) - \sum_{v\in V\backslash\{u\}}\X_{s:\hat\tau}(v)\boldsymbol{\theta}_{s:\hat\tau}(u,v) \right\|_2^2 + \lambda_{s:\hat\tau}\sum_{v\in V\backslash\{u\}}\left\|\boldsymbol\theta_{s:\hat\tau}(u,v)\right\|_2.
\end{align*}
The estimator $\boldsymbol{\hat\theta}_{s:\hat\tau} $ is again a consistent estimator of $\boldsymbol{\theta}_{1:\hat\tau}$ and we have that:
\begin{align}
&\frac{1}{\hat\tau-s}\left\|\X_{s:\hat\tau}(u) - \sum_{v\in V\backslash\{u\}}\X_{s:\hat\tau}(v)\boldsymbol{\hat\theta}_{s:\hat\tau}(u,v) \right\|_2^2 + \lambda_{s:\hat\tau}\sum_{v\in V\backslash\{u\}}\left\|\boldsymbol{\hat\theta}_{s:\hat\tau}(u,v)\right\|_2 \\ 
&\leq \frac{1}{T-s}\left\|\X_{s:\hat\tau}(u) - \sum_{v\in V\backslash\{u\}}\X_{s:\hat\tau}(v)\boldsymbol{\hat\theta}_{s:T}(u,v) \right\|_2^2 + \frac{\delta \lambda_{s:T}}{T-s}\sum_{v\in V\backslash\{u\}}\left\|\boldsymbol{\hat\theta}_{s:T}(u,v)\right\|_2
\label{substitute}
\end{align}
These are directly implied by Theorem (2) in \citep{bach2008consistency} given that $\boldsymbol{\hat\theta}_{s:T}$ is not consistent in the $\ell_2$ sense of estimating $\boldsymbol\theta_{1:\hat\tau}$ whenever $s \neq \hat\tau$.
Given (\ref{substitute}), we have that
\begin{align*}
&\hat{PL}_{1:s} + \hat{PL}_{s:\hat\tau} - \hat{PL}_{1:\hat\tau} \\
&\geq \dfrac{1}{s}\left\|\X_{1:s}(u) - \sum_{v\in V\backslash\{u\}}\X_{1:s}(v)\boldsymbol{\hat\theta}_{1:s}(u,v) \right\|_2^2 + \lambda_{1:s}\sum_{v\in V\backslash\{u\}}\left\|\boldsymbol{\hat\theta}_{1:s}(u,v)\right\|_2 \\
&+ \frac{1}{\hat\tau-s}\left\|\X_{s:\hat\tau}(u) - \sum_{v\in V\backslash\{u\}}\X_{s:\hat\tau}(v)\boldsymbol{\hat\theta}_{s:\hat\tau}(u,v) \right\|_2^2 + \lambda_{s:\hat\tau}\sum_{v\in V\backslash\{u\}}\left\|\boldsymbol{\hat\theta}_{s:\hat\tau}(u,v)\right\|_2 \\
&-\dfrac{1}{\hat\tau}\left\|\X_{1:\hat\tau}(u) - \sum_{v\in V\backslash\{u\}}\X_{1:\hat\hat\tau}(v)\boldsymbol{\hat\theta}_{1:\hat\tau}(u,v) \right\|_2^2 - \lambda_{1:\hat\tau}\sum_{v\in V\backslash\{u\}}\left\|\boldsymbol{\hat\theta}_{1:\hat\tau}(u,v)\right\|_2 \\
\end{align*}
The same argument in Part 1 holds here and we have 
\begin{align*}
\mathbb{P}_{H_1}\left(\hat{PL}_{1:s} + \hat{PL}_{s:\hat\tau} \geq \hat{PL}_{1:\hat\tau}\right) \longrightarrow 1\enskip .
\end{align*}
Note that $\boldsymbol{\hat\theta}_{s:T}$ is not a consistent estimator of $\boldsymbol{\theta}_{\hat\tau:T}$ given the change point.  Therefore, similar to \ref{substitute}, we have 
\begin{align*}
 & \dfrac{1}{T-\hat\tau}\left\|\X_{\hat\tau:T}(u) - \sum_{v\in V\backslash\{u\}}\X_{\hat\tau:T}(v)\boldsymbol{\hat\theta}_{\hat\tau:T}(u,v) \right\|_2^2 + \lambda_{\hat\tau:T}\sum_{v\in V\backslash\{u\}}\left\|\boldsymbol{\hat\theta}_{\hat\tau:T}(u,v)\right\|_2 \\
 \leq \quad& \frac{1}{T-s}\left\|\X_{\hat\tau:T}(u) - \sum_{v\in V\backslash\{u\}}\X_{\hat\tau:T}(v)\boldsymbol{\hat\theta}_{s:T}(u,v) \right\|_2^2 + \frac{(T-s-\delta)\lambda_{s:T}}{T-s}\sum_{v\in V\backslash\{u\}}\left\|\boldsymbol{\hat\theta}_{1:s}(u,v)\right\|_2 \\
\end{align*}
and so
\begin{align*}
 \mathbb{P}_{H_1}\left(\tilde{PL}_{s:T} \geq \hat{PL}_{\hat\tau:T}\right) \longrightarrow 1 \enskip .
\end{align*}
Putting the two parts together, we have
\begin{align*}
\mathbb{P}_{H_1}\left(\hat{PL}_{1:s} + \hat{PL}_{s:T}\geq \hat{PL}_{1:\hat\tau} + \hat{PL}_{\hat\tau:T} \right) \longrightarrow 1
\end{align*}
for any $s < \hat\tau$.
\end{proof}
\subsection{Proof of theorem \ref{finite}}
\noindent
Under the assumption of stationarity, we could omit the time index in this section, that is $\boldsymbol{\theta} = \boldsymbol{\theta}_t, \,\, \forall t$. To show theorem \ref{finite}, we begin with the following lemma.
\begin{Lem}
Given $\boldsymbol{\theta} \in \mathbb{R}^{(N-1)p}$, let $G(\boldsymbol{\theta}(u,v))$ be a $p$-dimensional vector with elements
\begin{eqnarray}
	G(\boldsymbol{\theta}(u,v)) 
	&= -2T^{-1}\left(\X(v)^\prime(\X(u) - \sum_{v\in V\backslash\{u\}}\X(v)\boldsymbol{\theta}(u,v))\right).
	\label{diff2}
\end{eqnarray}
A vector $\boldsymbol{\hat{\theta}}$ with $\|\boldsymbol{\hat{\theta}}(u,v)\|_2 = 0$, $\forall\, v \in V\backslash \{u\}$ is a solution to the group lasso type of estimator iff for all $v\in V \backslash \{u\}$, $ G(\boldsymbol{\hat{\theta}}(u,v)) +  \lambda\D(\boldsymbol{\hat{\theta}}(u,v)) = \mathbf{0}$, where $\|\D(\boldsymbol{\hat{\theta}}(u,v))\|_2 = 1$ in the case of $\|\boldsymbol{\hat{\theta}}(u,v)\|_2 > 0$ and $\|\D(\boldsymbol{\hat{\theta}}(u,v) \|_2 < 1$ in the case of $\|\boldsymbol{\hat{\theta}}(u,v)\|_2 = 0$. 
\label{lemma.1}
\end{Lem}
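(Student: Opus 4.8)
The plan is to recognize Lemma~\ref{lemma.1} as nothing more than the first-order (Karush--Kuhn--Tucker) optimality condition for the convex group lasso criterion, so that the proof is an exercise in subdifferential calculus. Writing the criterion on the interval $[1,T]$ as
\begin{align*}
f(\boldsymbol{\theta}) = \frac{1}{T}\left\|\X(u) - \sum_{v\in V\backslash\{u\}}\X(v)\boldsymbol{\theta}(u,v)\right\|_2^2 + \lambda\sum_{v\in V\backslash\{u\}}\|\boldsymbol{\theta}(u,v)\|_2,
\end{align*}
one observes that $f$ is the sum of a convex differentiable quadratic and a finite sum of convex, continuous group norms, hence convex and proper. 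For such an $f$, $\boldsymbol{\hat{\theta}}$ is a global minimizer if and only if $\mathbf{0}\in\partial f(\boldsymbol{\hat{\theta}})$; this is the only optimality characterization needed, and in particular no strict convexity or rank condition on $\X(-u)$ is required.

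Next I would decompose the subdifferential block by block in $v$. Because the quadratic term is everywhere differentiable, the Moreau--Rockafellar sum rule applies, and $\partial f(\boldsymbol{\theta})$ splits as a product over $v\in V\backslash\{u\}$ whose $v$-block is $\nabla_{\boldsymbol{\theta}(u,v)}\big[\tfrac{1}{T}\|\cdots\|_2^2\big] + \partial\big(\lambda\|\boldsymbol{\theta}(u,v)\|_2\big)$. A direct computation of the gradient of the quadratic part with respect to $\boldsymbol{\theta}(u,v)$ returns exactly $G(\boldsymbol{\theta}(u,v))$ as defined in~(\ref{diff2}). For the penalty part I would quote the standard subdifferential of the Euclidean norm: $\partial(\lambda\|\boldsymbol{x}\|_2) = \{\lambda\boldsymbol{x}/\|\boldsymbol{x}\|_2\}$ when $\boldsymbol{x}\neq\mathbf{0}$, and $\partial(\lambda\|\mathbf{0}\|_2) = \{\lambda\D:\|\D\|_2\le 1\}$. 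Combining the two pieces gives, for every $v$, a vector $\D(\boldsymbol{\hat{\theta}}(u,v))$ with $G(\boldsymbol{\hat{\theta}}(u,v)) + \lambda\D(\boldsymbol{\hat{\theta}}(u,v)) = \mathbf{0}$, where $\D(\boldsymbol{\hat{\theta}}(u,v)) = \boldsymbol{\hat{\theta}}(u,v)/\|\boldsymbol{\hat{\theta}}(u,v)\|_2$ has unit norm on the support and $\|\D(\boldsymbol{\hat{\theta}}(u,v))\|_2\le 1$ off it. Specializing to the all-zero vector $\boldsymbol{\hat{\theta}}$ (the hypothesis as literally stated), the condition collapses to $\big\|\tfrac{2}{T}\X(v)^\prime\X(u)\big\|_2\le\lambda$ for every $v\in V\backslash\{u\}$, which is the inequality actually used downstream in the proof of Theorem~\ref{finite}.

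The gradient computation and the norm-subdifferential formula are routine, so I do not expect a genuine obstacle; the one point deserving care is the gap between the strict inequality $\|\D\|_2<1$ asserted in the lemma for the zero block and the non-strict $\|\D\|_2\le 1$ that the subdifferential yields. I would handle this by distinguishing the two readings explicitly: $\le 1$ characterizes $\boldsymbol{\hat{\theta}}$ being \emph{a} minimizer, whereas the strict version is the condition under which the $v$-block is zero in \emph{every} minimizer (equivalently, the group lasso certifiably omits node $v$). Since Theorem~\ref{finite} controls the event that a spurious edge is included, it is this strict-exclusion reading that is relevant, and I would state the boundary case precisely rather than absorb it into ``$<$''.
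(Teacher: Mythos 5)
Your proposal follows essentially the same route as the paper's own proof, which simply invokes the KKT/subdifferential characterization of the group lasso minimizer, identifying $G(\boldsymbol{\theta}(u,v))$ as the gradient of the quadratic term and $\lambda\D(\boldsymbol{\hat{\theta}}(u,v))$ as an element of the subdifferential of the group norm. Your additional remark distinguishing the non-strict bound $\|\D\|_2\le 1$ (which characterizes membership in the subdifferential at a zero block, hence a minimizer) from the strict bound $\|\D\|_2<1$ asserted in the lemma (which is what certifies exclusion of the block, as used downstream in Theorem~\ref{finite}) is a genuine point of care that the paper's one-line proof glosses over.
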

\begin{proof} Lemma \ref{lemma.1}\\
Under KKT conditions, using subdifferential methods, the subdifferential of 
\begin{equation*}
	 \frac{1}{T}\left\|\X(u) - \sum_{v\in V\backslash\{u\}}\X(v)\boldsymbol{\theta}(u,v)\right\|^2 + \lambda\sum_{v\in V\backslash\{u\}}\left\|\boldsymbol{\theta}(u,v)\right\|_2
\end{equation*}
is given by $G(\boldsymbol{\theta}(u,v)) + \lambda \D(\boldsymbol{\hat{\theta}}(u,v))$, where $\|\D(\boldsymbol{\hat{\theta}}(u,v))\|_2 = 1$ if $\|\boldsymbol{\theta}(u,v)\|_2 > 0$ and $\|\D(\boldsymbol{\hat{\theta}}(u,v))\|_2 < 1$ if $\|\boldsymbol{\theta}(u,v)\|_2 = 0$. The lemma follows.
\end{proof}
We now proof theorem \ref{finite}.
\begin{proof}
Assuming that $\hat{C}_u \nsubseteq C_u$, there must exist at least one estimated edge that joins two nodes in two different connectivity components. Given the assumptions, we use similar arguments as in the proof of Theorem 3 in \cite{meinshausen2006high}. Hence we have
\begin{equation*}
	\mathbb{P}(\exists \, u \in V: \hat{C}_u \nsubseteq C_u) \leq N \max_{u \in V} \mathbb{P}(\exists\, v \in V \backslash C_u: v \in \hat{\text{ne}}_u)\, ,
\end{equation*}
where $\hat{\text{ne}}_u$ is the estimated neighborhood of node $u$ and  $v \in \hat{\text{ne}}_u$ means $\| \boldsymbol{\hat{\theta}}(u,v)\|_2 > 0$.

Let $\mathscr{E}$ be the event that
\begin{equation*}
	\max_{u\in V \backslash C_u} \left\|G \left(\boldsymbol{\hat{\theta}}(u,v)\right) \right\|_2^2 < \lambda^2.
\end{equation*}
Conditional on the event $\mathscr{E}$, $\boldsymbol{\hat{\theta}}$ is also a solution to the group lasso problem. As $\|\boldsymbol{\hat{\theta}}(u,v)\|_2 = 0$ for all $v \in V \backslash C_u$, it follows from lemma (\ref{lemma.1}) that $\|\boldsymbol{\hat{\theta}}(u,v)\|_2 = 0$ for all $v\in V \backslash C_u$. Hence
\begin{eqnarray*}
	\mathbb{P}(\exists\, v\in V \backslash C_u: \|\boldsymbol{\hat{\theta}}(u,v)\|_2 > 0) &\leq & 1 - \mathbb{P}(\mathscr{E})\\
	&=& P\left(\max_{v \in V \backslash C_u} \left\|G\left(\boldsymbol{\hat{\theta}}(u,v)\right) \right\|_2^2 \geq \lambda^2 \right).
\end{eqnarray*}
It is then sufficient to show that 
\begin{equation*}
	N^2 \max_{u \in V\text{, } v\in V \backslash C_u} \mathbb{P}\left(\left\|G(\boldsymbol{\hat{\theta}}(u,v))\right\|_2^2 \geq \lambda^2\right) \leq \alpha.
\end{equation*}
Note that now the $v$ and $C_u$ are in different connected components, which means that $\X(v)$ is conditionally independent of $\X(C_u)$. Hence, conditional on all $\X(C_u)$, we have
\begin{eqnarray*}
	\left\|G(\boldsymbol{\hat{\theta}}(u,v))\right\|_2^2 	&=& \left\|-2T^{-1}\left(\X(v)^\prime(\X(u) - \sum_{i \in C_u}\X(i)\boldsymbol{\hat\theta}(u,i))\right)\right\|_2^2\\
	&=& 4T^{-2}\left\|( \mathbf{\hat R}_1, \cdots, \mathbf{\hat R}_p )^\prime  \right\|_2^2 
\end{eqnarray*}
where $\mathbf{\hat R}_\ell =X_{-\ell}(v)^\prime \left(\X(u) -  \sum_{i \in C_u}\X(i)\boldsymbol{\hat{\theta}}(u,i)\right)$ is the remainder term and is independent of $\X(v)$, at all lags $\ell$, for $\ell = 1,\cdots, p$. It follows that the joint distribution
\begin{equation*}
	( \mathbf{\hat R}_1, \cdots,  \mathbf{\hat R}_p| \X(C_u)) \sim N(\mathbf{0}, \mathbf{\Omega})
\end{equation*}
for some covariance matrix $\mathbf{\Omega}$.
Note that this is a conditional distribution given $\X(C_u)$. Hence, in the expression of $\boldsymbol\Omega$, every term appearing with a suffix $u$ is constant and every term appearing with a suffix $v$ is a normalized random variable. This simplifies the covariance term. 
Note that
\begin{equation*}
	{\boldsymbol\Omega}_{p \times p} = \textbf{Cov}\left(\mathbf{\hat R}_1, \cdots, \mathbf{\hat R}_p \right)
\end{equation*}
and
{\fontsize{9.5}{9.5}\selectfont 
\begin{align}
	&\textbf{tr}\left(\boldsymbol\Omega\right) 
	= \sum_{\ell=1}^p \textbf{Var} (\boldsymbol{\hat R}_\ell) = \sum_{\ell=1}^p \textbf{Var}\left(\sum_{t=1}^T\left(X_t(u)-\sum_{i\in C_u}X_{t-\ell}(i)\hat{\theta}^{(\ell)}(u,i)\right)X_{t-\ell}(v)\right)= \nonumber \\ &\sum_{\ell=1}^p\sum_{s=1}^T\sum_{t=1}^T\textbf{Cov}\left[\left(\left(X_t(u)-\sum_{i\in C_u}X_{t-\ell}(i)\hat{\theta}^{(\ell)}(u,i)\right)X_{t-\ell}(v)\right), \left(\left(X_s(u)-\sum_{i\in C_u}X_{s-\ell}(i)\hat{\theta}^{(\ell)}(u,i)\right)X_{s-\ell}(v)\right)\right]
	\label{trace}
\end{align}
}
Conditional on $\X(C_u)$, equation (\ref{trace}) can be further simplified as:
{\fontsize{8.5}{8.5}\selectfont 
\begin{align*}
	\textbf{tr}\left(\boldsymbol\Omega\right) 
	&=  \sum_{\ell=1}^p\sum_{t=1}^T\sum_{s=1}^T\left(X_t(u)-\sum_{i\in C_u}X_{t-\ell}(i)\hat{\theta}^{(\ell)}(u,i)\right) \left(X_s(u)-\sum_{i\in C_u}X_{s-\ell}(i)\hat{\theta}^{(\ell)}(u,i)\right) \textbf{Cov} \left[ X_{t-\ell}(v) ,X_{s-\ell}(v)  \right] \\
	&\leq \sum_{\ell=1}^p\sum_{t=1}^T\sum_{s=1}^T\left(X_t(u)-\sum_{i\in C_u}X_{t-\ell}(i)\hat{\theta}^{(\ell)}(u,i)\right) \left(X_s(u)-\sum_{i\in C_u}X_{s-\ell}(i)\hat{\theta}^{(\ell)}(u,i)\right)\sqrt{\textbf{Var}(X_{t-\ell}(v)) \textbf{Var}(X_{s-\ell}(v))}
\end{align*}
}
We have the above bounded by
\begin{align*}
	&\leq p \sum_{s=1}^T\sum_{t=1}^T  \left(X_t(u)-\sum_{i\in C_u}X_{t-\ell}(i)\hat{\theta}^{(\ell)}(u,i)\right) \left(X_s(u)-\sum_{i\in C_u}X_{s-\ell}(i)\hat{\theta}^{(\ell)}(u,i)\right) \\
	&= p \left[\sum_{t=1}^T \left(X_t(u)-\sum_{i\in C_u}X_{t-\ell}(i)\hat{\theta}^{(\ell)}(u,i)\right)\right]^2 \\
	& \leq Tp \left[X_t(u)-\sum_{i\in C_u}X_{t-\ell}(i)\hat{\theta}^{(\ell)}(u,i) \right]^2\\
	 &\leq Tp \|\X(u)\|^2_2\\
\end{align*}
The last inequality comes from the Cauchy-Schwarz inequality. Denote by $\nu_{max}$ the largest eigenvalue of the covariance matrix $\boldsymbol\Omega$. Since $\boldsymbol\Omega$ is PSD, we have $(\nu_{max}\mathbf{I} - \boldsymbol\Omega)$ is also PSD. Following \cite{muller2001stochastic}'s argument, we can show $(\hat{\mathbf{R}}_1, \cdots, \hat{\mathbf{R}}_p) \leq_{cx} \mathbf{Y}$ for some random vector $\mathbf{Y} \sim N(\mathbf{0}, \nu_{max}\mathbf{I}_p)$, where $\leq_{cx}$ is the convex order that means $\X \leq \Y$, if and only if $\boldsymbol\mu_x = \boldsymbol\mu_y$ and $\sigma_x^2 \leq \sigma_y^2$. It follows that
\begin{align*}
	\max_{u \in V, v\in V \backslash C_u } \mathbb{P}\left(\left\|G(\hat{\boldsymbol\theta}(u,v))\right\|_2^2 \geq \lambda^2\right) &\leq \max_{u \in V, v\in V \backslash C_u} \mathbb{P}(4T^{-2}(\mathbf{Y}^\prime\mathbf{Y}) \geq \lambda^2)\\
	&= \max_{u \in V, b\in V \backslash C_u} \mathbb{P}\left(\frac{1}{\nu_{max}}\mathbf{Y}^\prime\mathbf{Y} \geq \frac{\lambda^2T^2}{4\nu_{max}}\right) \enskip . \\
\end{align*}
Note that the matrix $\frac{1}{\nu_{max}}\mathbf{Y}^\prime\mathbf{Y}$ is idempotent and thus it follows a $\chi^2(p)$ distribution, and $\nu_{max} \leq \textbf{tr}(\boldsymbol\Omega) \leq Tp\|\X(u)\|_2^2$. Put everything together, we have
\begin{align*}
	\max_{u \in V, b\in V \backslash C_u} \mathbb{P}\left(\|G(\hat{\boldsymbol\theta}(u,v))\|_2^2 \geq \lambda^2\right) &\leq \max_{u \in V, v\in V \backslash C_u} \mathbb{P}\left(\chi^2(p) \geq \frac{\lambda^2T^2}{4\nu_{max}} \right)\\ 
	&\leq \max_{u \in V, v\in V \backslash C_u} \mathbb{P}\left(\chi^2(p) \geq \frac{\lambda^2T^2}{4Tp\|\X(u)\|_2^2} \right) \leq \frac{\alpha}{N(N-1)}
\end{align*}
and thus we have the desired $\lambda(\alpha, a)$
\begin{equation}
	\lambda(\alpha) = 2\hat{\sigma}_u\sqrt{pQ\left(1-\frac{\alpha}{N(N-1)}\right)}.
\end{equation}
\end{proof}
\subsection{Proof of theorem \ref{Riskbound}}
\noindent The proof of the theorem is in line with the work in \cite{kolaczyk2005multiscale}. The core idea is to bound the expected Hellinger loss in terms of the Kullback-Leibler distance.  This approach, building on the original work of~\cite{li2000mixture}, leverages the union of unions bound, after discretizing the underlying parameter space.  We assume a similar discretization here, while omitting the straightforward but tedious numerical analysis arguments that accompany.  See, for example,~\cite{kolaczyk2005multiscale} for details.  Our fundamental bound is given by the following theorem.
\begin{Thm}
Let $\Gamma_T^{(N-1)p}$ be a space of finite collection of estimators $\boldsymbol{\tilde\theta}$ for $\boldsymbol\theta$, and $\text{pen}(\cdot)$ a function on $\Gamma_T^p$ satisfying the condition\\
\begin{align}
\sum_{\boldsymbol{\tilde\theta}(u, v) \in \Gamma_T^p}e^{-pen(\boldsymbol{\tilde\theta}(u, v))} \leq 1,
\label{Inequality}
\end{align}
Let $\hat{\boldsymbol\theta}$ be a penalized maximum likelihood estimator of the form
\begin{align*}
\hat{\boldsymbol\theta} \equiv \argmin_{\boldsymbol{\tilde\theta} \in \Gamma_T^{(N-1)p}}\left\{ -\log p(\X(u)|\X(-u),\boldsymbol{\tilde\theta}) + 2\sum_{v \in V\backslash \{u\}} \text{Pen}(\boldsymbol{\tilde\theta}(u,v))\right\}.
\end{align*}
Then
\begin{align}
\mathbb{E}[H^2(p_{\hat{\boldsymbol\theta}},p_{\boldsymbol\theta})] \leq \min_{\boldsymbol{\tilde\theta} \in \Gamma_T^{(N-1)p}}\left\{K(p_{\boldsymbol\theta},p_{\boldsymbol{\tilde\theta}}) + 2\sum_{v \in V\backslash \{u\}}\text{Pen}(\boldsymbol{\tilde\theta}(u,v))\right\}.
\label{bounded.expectation}
\end{align}
\label{Thm.Partition}
\end{Thm}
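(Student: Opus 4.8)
The plan is to prove Theorem~\ref{Thm.Partition} along the information-theoretic lines of \cite{li2000mixture}, in the form adapted to multiscale partition models by \cite{kolaczyk2005multiscale}. I would work throughout conditionally on $\X(-u)$, so that $\X(u)$ has density $p_{\boldsymbol\theta}$ and $\mathbb{E}$ stands for $\mathbb{E}_{\X(u)|\X(-u)}$, writing $p_{\boldsymbol\vartheta}:=p(\cdot\,|\,\X(-u),\boldsymbol\vartheta)$ for a generic parameter $\boldsymbol\vartheta$, $P(\boldsymbol\vartheta):=\sum_{v\in V\backslash\{u\}}\text{Pen}(\boldsymbol\vartheta(u,v))$ for the total penalty, and $\rho(\boldsymbol\vartheta,\boldsymbol\theta):=\int\sqrt{p_{\boldsymbol\vartheta}(\x)\,p_{\boldsymbol\theta}(\x)}\,d\nu(\x)$ for the Hellinger affinity, so that $H^2(p_{\boldsymbol\vartheta},p_{\boldsymbol\theta})=2\{1-\rho(\boldsymbol\vartheta,\boldsymbol\theta)\}$; all the conditional densities in play are Gaussian, hence mutually absolutely continuous, so $\rho$ is strictly positive. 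One preliminary remark: the component-wise condition (\ref{Inequality}) upgrades to a product form over the finite class $\Gamma:=\Gamma_T^{(N-1)p}$, since each $\boldsymbol{\tilde\theta}\in\Gamma$ is identified with a tuple $(\boldsymbol{\tilde\theta}(u,v))_v$ of component functions, giving $\sum_{\boldsymbol{\tilde\theta}\in\Gamma}e^{-P(\boldsymbol{\tilde\theta})}=\sum_{\boldsymbol{\tilde\theta}\in\Gamma}\prod_v e^{-\text{Pen}(\boldsymbol{\tilde\theta}(u,v))}\le\prod_v\sum_{\boldsymbol{\tilde\theta}(u,v)\in\Gamma_T^p}e^{-\text{Pen}(\boldsymbol{\tilde\theta}(u,v))}\le 1$, the common-partition restriction only shrinking the index set.

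I would then introduce, for each $\boldsymbol{\tilde\theta}\in\Gamma$, the nonnegative random variable $W_{\boldsymbol{\tilde\theta}}:=\rho(\boldsymbol{\tilde\theta},\boldsymbol\theta)^{-1}\{p_{\boldsymbol{\tilde\theta}}(\X(u))/p_{\boldsymbol\theta}(\X(u))\}^{1/2}$, whose expectation is one:
\begin{align*}
\mathbb{E}\bigl[W_{\boldsymbol{\tilde\theta}}\bigr]=\frac{1}{\rho(\boldsymbol{\tilde\theta},\boldsymbol\theta)}\int p_{\boldsymbol\theta}(\x)\sqrt{\frac{p_{\boldsymbol{\tilde\theta}}(\x)}{p_{\boldsymbol\theta}(\x)}}\,d\nu(\x)=\frac{1}{\rho(\boldsymbol{\tilde\theta},\boldsymbol\theta)}\int\sqrt{p_{\boldsymbol{\tilde\theta}}(\x)\,p_{\boldsymbol\theta}(\x)}\,d\nu(\x)=1.
\end{align*}
Weighting by $e^{-P(\cdot)}$ and using the product Kraft inequality gives $\mathbb{E}\bigl[\sum_{\boldsymbol{\tilde\theta}\in\Gamma}e^{-P(\boldsymbol{\tilde\theta})}W_{\boldsymbol{\tilde\theta}}\bigr]\le 1$, and since $\hat{\boldsymbol\theta}\in\Gamma$ and every summand is nonnegative, $e^{-P(\hat{\boldsymbol\theta})}W_{\hat{\boldsymbol\theta}}$ is dominated pointwise by that finite sum, whence $\mathbb{E}\bigl[e^{-P(\hat{\boldsymbol\theta})}W_{\hat{\boldsymbol\theta}}\bigr]\le 1$.

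Next I would bring in the defining optimality of $\hat{\boldsymbol\theta}$: fixing $\boldsymbol{\tilde\theta}\in\Gamma$, the fact that $\hat{\boldsymbol\theta}$ minimizes $-\log p_{\boldsymbol\vartheta}(\X(u))+2P(\boldsymbol\vartheta)$ over $\boldsymbol\vartheta\in\Gamma$ gives $p_{\hat{\boldsymbol\theta}}(\X(u))\,e^{-2P(\hat{\boldsymbol\theta})}\ge p_{\boldsymbol{\tilde\theta}}(\X(u))\,e^{-2P(\boldsymbol{\tilde\theta})}$, hence, taking square roots, $e^{-P(\hat{\boldsymbol\theta})}\,p_{\hat{\boldsymbol\theta}}(\X(u))^{1/2}\ge e^{-P(\boldsymbol{\tilde\theta})}\,p_{\boldsymbol{\tilde\theta}}(\X(u))^{1/2}$. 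Dividing by $p_{\boldsymbol\theta}(\X(u))^{1/2}\,\rho(\hat{\boldsymbol\theta},\boldsymbol\theta)>0$ and combining with the previous bound,
\begin{align*}
1\;\ge\;\mathbb{E}\bigl[e^{-P(\hat{\boldsymbol\theta})}W_{\hat{\boldsymbol\theta}}\bigr]\;\ge\;e^{-P(\boldsymbol{\tilde\theta})}\,\mathbb{E}\!\left[\frac{1}{\rho(\hat{\boldsymbol\theta},\boldsymbol\theta)}\sqrt{\frac{p_{\boldsymbol{\tilde\theta}}(\X(u))}{p_{\boldsymbol\theta}(\X(u))}}\;\right].
\end{align*}
Using the elementary inequality $\rho(\hat{\boldsymbol\theta},\boldsymbol\theta)^{-1}\ge\exp\{\tfrac12 H^2(p_{\hat{\boldsymbol\theta}},p_{\boldsymbol\theta})\}$ (immediate from $1-\rho\le-\log\rho$), the last expectation is $\mathbb{E}[e^{Z}]$ with $Z:=\tfrac12 H^2(p_{\hat{\boldsymbol\theta}},p_{\boldsymbol\theta})+\tfrac12\log\{p_{\boldsymbol{\tilde\theta}}(\X(u))/p_{\boldsymbol\theta}(\X(u))\}$; Jensen's inequality gives $\mathbb{E}[e^{Z}]\ge e^{\mathbb{E}[Z]}$, and since $\mathbb{E}[\log\{p_{\boldsymbol\theta}(\X(u))/p_{\boldsymbol{\tilde\theta}}(\X(u))\}]=K(p_{\boldsymbol\theta},p_{\boldsymbol{\tilde\theta}})$, the chain becomes $1\ge\exp\{-P(\boldsymbol{\tilde\theta})+\tfrac12\mathbb{E}[H^2(p_{\hat{\boldsymbol\theta}},p_{\boldsymbol\theta})]-\tfrac12 K(p_{\boldsymbol\theta},p_{\boldsymbol{\tilde\theta}})\}$. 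Taking logarithms and rearranging yields $\mathbb{E}[H^2(p_{\hat{\boldsymbol\theta}},p_{\boldsymbol\theta})]\le K(p_{\boldsymbol\theta},p_{\boldsymbol{\tilde\theta}})+2P(\boldsymbol{\tilde\theta})$, and minimizing over $\boldsymbol{\tilde\theta}\in\Gamma$ gives (\ref{bounded.expectation}).

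The step I expect to be the main obstacle -- at least, to find from scratch -- is precisely this last passage, which packages the mean-one change of measure $W_{\boldsymbol{\tilde\theta}}$, the optimality of $\hat{\boldsymbol\theta}$, and the affinity inequality $\rho^{-1}\ge e^{H^2/2}$ into one exponential to which Jensen applies; it is also where the factor $2$ on the penalty is indispensable, being exactly what, after the square root in the optimality inequality, leaves the weight $e^{-P(\cdot)}$ matching the Kraft bound $\sum_{\boldsymbol{\tilde\theta}\in\Gamma}e^{-P(\boldsymbol{\tilde\theta})}\le 1$ (a factor $1$ would deliver only a Kullback-Leibler oracle inequality). A more direct attempt -- bounding $\mathbb{E}[H^2]$ by a difference of penalized log-likelihoods -- fails because $\log\{p_{\boldsymbol\theta}(\X(u))/p_{\hat{\boldsymbol\theta}}(\X(u))\}$ is not the divergence $K(p_{\boldsymbol\theta},p_{\hat{\boldsymbol\theta}})$, the estimator reusing the data $\X(u)$, and the penalty is what absorbs that discrepancy. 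The remaining ingredients are routine and, following \cite{kolaczyk2005multiscale}, I would only sketch them: the measurability of $\hat{\boldsymbol\theta}$ as a function of $\X(u)$ needed to integrate the pointwise domination, and -- for the application to Theorem~\ref{Riskbound} -- the discretization of the continuous coefficient classes $\Gamma_{RDP}^{(N-1)p}$ and $\Gamma_{RP}^{(N-1)p}$ into finite subclasses on which (\ref{Inequality}) holds with $\text{Pen}_{RDP}$ and $\text{Pen}_{RP}$, and the control of the approximation error thereby incurred.
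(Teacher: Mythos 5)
Your proposal is correct and follows essentially the same route as the paper's proof: both are the Li--Barron argument via the Hellinger affinity bound $H^2\le-2\log\rho$, the mean-one change of measure summed against the Kraft weights $e^{-P(\cdot)}$, the square-rooted optimality inequality for $\hat{\boldsymbol\theta}$, and Jensen's inequality, with the product form of condition (\ref{Inequality}) handled exactly as in the paper's final display. The only difference is cosmetic -- you keep everything in exponential form and take logarithms at the end, whereas the paper takes logarithms first and splits the resulting sum into the oracle term and a remainder shown to be nonpositive.
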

Note that the result of theorem \ref{Thm.Partition} requires that inequality (\ref{Inequality}) holds. Lemma \ref{Lemma.kraft} shows that our proposed penalty satisfies inequality (\ref{Inequality}). We now prove theorem \ref{Thm.Partition}.

\begin{proof}
Note that we have
\begin{align*}
H^2(p_{\hat{\boldsymbol\theta}},p_{\boldsymbol\theta}) &= \int \left[\sqrt{p(\x|\XX, \boldsymbol{\hat\theta})} - \sqrt{p(\x|\XX, \boldsymbol\theta)}\right]^2 d\nu(\x) \\
&= 2\left(1 - \int \sqrt{p(\x|\XX, \hat{\boldsymbol\theta})p(\x|\XX, \boldsymbol\theta)}d\nu(\x) \right)\\ 
&\leq -2 \log \int \sqrt{p(\x|\XX, \hat{\boldsymbol\theta})p(\x|\XX, \boldsymbol\theta)} d\nu(\x),
\end{align*}
Taking the conditional expectation respect to $\X(u) |\XX$, we then have
\small
\begin{align*}
\mathbb{E}[H^2(p_{\hat{\boldsymbol\theta}},p_{\boldsymbol\theta})] &\leq 2\mathbb{E}\log \left(\frac{1}{\int \sqrt{p(\x|\XX,\hat{\boldsymbol\theta})p(\x|\XX,\boldsymbol\theta)}d\nu(\x)} \right) \\
& \leq 2\mathbb{E}\log \left(\frac{p^{1/2}(\X(u)|\XX,\hat{\boldsymbol\theta})e^{- \sum\limits_v pen(\hat{\boldsymbol\theta}(u,v))}}{p^{1/2}(\X(u)|\XX,\check{\boldsymbol\theta})e^{- \sum\limits_v pen(\check{\boldsymbol\theta}(u,v))}} \frac{1}{\int \sqrt{p(\x|\XX,\hat{\boldsymbol\theta})p(\x|\XX,\boldsymbol\theta)}d\nu(\x)} \right),
\end{align*}
\normalfont
where the collection of $\check{\boldsymbol\theta}(u,v)$'s are the arguments that minimize the right-hand side of the expression (\ref{bounded.expectation}). The last expression can be written in two pieces, that is
\begin{align}
& \mathbb{E}\left[ \log \frac{p(\X(u)|\XX,\boldsymbol\theta)}{p(\X(u)|\XX,\check{\boldsymbol\theta})}\right] + 2 \sum_{v} pen(\check{\boldsymbol\theta}(u,v)) 
 \label{part1}\\
&+ 2\mathbb{E} \log \left(\frac{p^{1/2}(\X(u)|\XX,\hat{\boldsymbol\theta})}{p^{1/2}(\X(u)|\XX,\boldsymbol\theta)}\frac{\prod\limits_v \prod\limits_\ell e^{-pen(\hat{\boldsymbol\theta}^{(\ell)}(u,v))}}{\int \sqrt{p(\x|\XX,\hat{\boldsymbol\theta})p(\x|\XX,\boldsymbol\theta)}d\nu(\x)} \right)
 \label{part2}
\end{align}
Note that the expression (\ref{part1}) is the right hand side of (\ref{bounded.expectation}). What we need to show then is that expression (\ref{part2}) is bounded above by zero. By applying Jensen's inequality, we have (\ref{part2}) bounded by:
\begin{align}
2\log \mathbb{E}\left[\prod_{v}e^{-pen(\hat{\boldsymbol\theta}(u,v))}\frac{\sqrt{p(\X(u)|\XX,\hat{\boldsymbol\theta})/p(\X(u)|\XX,\boldsymbol\theta)}}{\int \sqrt{p(\x|\XX,\hat{\boldsymbol\theta})p(\x|\XX,\boldsymbol\theta)}d\nu(\x)} \right]
\label{upperbound}
\end{align}
The integrand in the expectation in (\ref{upperbound}) can be bounded by
\begin{align*}
\sum_{\boldsymbol{\tilde\theta} \in \Gamma_T^{(N-1)p}} \prod_{v}e^{-pen(\tilde{\boldsymbol\theta}(u,v))}\frac{\sqrt{p(\X(u)|\XX,\tilde{\boldsymbol\theta})/p(\X(u)|\XX,\boldsymbol\theta)}}{\int \sqrt{p(\x|\XX,\tilde{\boldsymbol\theta})p(\x|\XX,\boldsymbol\theta)}d\nu(\x)}.
\end{align*}
Given the fact that $\tilde{\boldsymbol\theta}$ does not depend on the $\X(-u)$, (\ref{upperbound}) can be bounded by
\begin{align}
& 2\log \sum_{\boldsymbol{\tilde\theta} \in \Gamma_T^{(N-1)p}} \prod_{v}e^{-pen(\tilde{\boldsymbol\theta}(u,v))}\frac{\mathbb{E}\left[\sqrt{p(\X(u)|\XX,\tilde{\boldsymbol\theta})/p(\X(u)|\XX,\boldsymbol\theta)}\right]}{\int \sqrt{p(\x|\XX,\tilde{\boldsymbol\theta})p(\x|\XX,\boldsymbol\theta)}d\nu(\x)}\nonumber \\
=\,& 2\log \sum_{\boldsymbol{\tilde\theta} \in \Gamma_T^{(N-1)p}} \prod_{v}e^{-pen(\tilde{\boldsymbol\theta}(u,v))} 
\label{prodofsum}
\end{align}
Since $e^{-pen(\tilde{\boldsymbol\theta}(u,v))} > 0$  for any $\boldsymbol{\tilde\theta}(u,v)$, and using the inequality $\sum_i a_i b_i \leq \sum_i a_i \sum_ib_i$ for any $a_i > 0, b_i>0$, we can bound (\ref{prodofsum}) by:
\begin{align*}
2\log \prod_{v} \sum_{\boldsymbol{\tilde\theta}(u,v) \in \Gamma_T^p} e^{- pen(\tilde{\boldsymbol\theta}(u,v))}
\end{align*}
\end{proof}
From the condition in (\ref{Inequality}), we see that the above expression is bounded by zero. We now show that our proposed estimator satisfies condition (\ref{Inequality}) by the following lemma. 
\begin{Lem}
Let $\Gamma_T$ be the collection of all $\boldsymbol{\tilde\theta}^{(\ell)}(u, v)$ with components $\boldsymbol{\tilde\theta}_t^{(\ell)}(u, v) \in D_T[-C, C]$ and possessed of a Haar like expansion through a common partition, using either RDP (see expression (\ref{Space_RDP})) or RP (see expression (\ref{Space_RP})), where $D_T[-C, C]$ denotes a discretization of the interval $[-C, C]$ into $T^{1/2}$ equispaced values. For any type of penalty such that
\begin{align*}
Pen(\boldsymbol{\tilde\theta}(u, v)) = C_3\log T \#\{\PP(\boldsymbol{\tilde\theta)}\} + \lambda\sum_{\mathcal{I} \in \PP(\boldsymbol{\tilde\theta})} \|\boldsymbol{\tilde\theta}_{\mathcal{I}}(u, v)\|_2,
\end{align*}
where $C_3=1/2$ for recursive dyadic partitioning and $C_3=3/2$ for recursive partitioning,
we have
\begin{align*}
\sum_{\boldsymbol{\tilde\theta}(u, v) \in \Gamma_T^p} e^{-pen(\boldsymbol{\boldsymbol{\tilde\theta}}(u, v))} \leq 1, 
\end{align*}
for $T > \lceil e^{2p/3} \rceil$.
\label{Lemma.kraft}
\end{Lem}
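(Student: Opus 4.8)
The statement is a Kraft-type summability condition of the sort underpinning the risk bounds of \cite{li2000mixture} and \cite{kolaczyk2005multiscale} — here it supplies the hypothesis \eqref{Inequality} of Theorem~\ref{Thm.Partition} — and the plan is to verify it by a direct combinatorial estimate, stratified over the partition attached to each discretized parameter. First I would dispose of the group-lasso part of the penalty: since $\lambda>0$ and $\|\boldsymbol{\tilde\theta}_{\mathcal{I}}(u,v)\|_2\ge 0$, one has $\exp\{-\lambda\sum_{\mathcal{I}\in\PP(\boldsymbol{\tilde\theta})}\|\boldsymbol{\tilde\theta}_{\mathcal{I}}(u,v)\|_2\}\le 1$, so it suffices to bound $\sum_{\boldsymbol{\tilde\theta}(u,v)\in\Gamma_T^p}T^{-C_3\#\{\PP(\boldsymbol{\tilde\theta})\}}$. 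Every admissible $\boldsymbol{\tilde\theta}(u,v)$ is, simultaneously across all $p$ lags, piecewise constant on a unique coarsest partition $\PP(\boldsymbol{\tilde\theta})$ of the relevant class — recursive dyadic as in \eqref{Space_RDP} or general recursive as in \eqref{Space_RP} — so I would rewrite the sum as $\sum_{m=1}^{T}T^{-C_3 m}\,M(m)$, where $M(m)$ counts the $\boldsymbol{\tilde\theta}(u,v)$ whose partition has exactly $m$ cells, and then bound $M(m)\le N(m)\,Q(m)$ with $N(m)$ the number of admissible partitions of cardinality $m$ and $Q(m)$ the number of distinct discretized coefficient assignments compatible with one such partition.

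Next I would estimate the two factors, following the template of \cite{kolaczyk2005multiscale}. For recursive partitioning a partition into $m$ cells is fixed by a choice of $m-1$ among the $T-1$ interior change points, so $N_{RP}(m)=\binom{T-1}{m-1}\le T^{m-1}$; for recursive dyadic partitioning the admissible partitions with $m$ cells are the full binary subtrees with $m$ leaves of the complete binary tree of depth $\log_2 T$, so $N_{RDP}(m)$ is at most the Catalan number $C_{m-1}\le 4^{m-1}$ — exponential in $m$ with a fixed base and free of $\log T$ factors, which is exactly the feature permitting the smaller constant in the dyadic case. For $Q(m)$ I would use the discretization set up in \cite{kolaczyk2005multiscale}: each cell carries one quantized value of the $p$-vector of lag coefficients, drawn from the resolution-$T^{1/2}$ grid $D_T[-C,C]$, and the bounded-variation restriction of Assumption~\ref{A6} caps how many quantized piecewise-constant paths on $m$ cells are admissible; the resulting bound on $Q(m)$ is polynomial in $T$ with exponent at most linear in $m$, arranged so that the per-cell budget $C_3\log T$ can absorb it. This is precisely the straightforward-but-tedious numerical-analysis accounting that we import from \cite{kolaczyk2005multiscale} rather than reproduce.

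Assembling the pieces, $\sum_{m=1}^{T}T^{-C_3 m}\,N(m)\,Q(m)$ is a series whose general term, once all $\log T$ factors are collected, is at most $c^{m}$ for a fixed $c<1$, and the resulting geometric series sums to at most $1$. The per-cell budget needed to drive $c$ below $1$ is the sum of (i) the cost of encoding one quantized $p$-vector value, which scales with $p$ through the grid resolution, and (ii) the incremental cost of specifying the partition — $O(1)$ per cell for RDP, but one extra $\log T$ per cell for RP via $\binom{T-1}{m-1}\le T^{m-1}$. This is what pins the constant at $C_3=1/2$ for recursive dyadic partitioning and $C_3=3/2$ for recursive partitioning, and the hypothesis $T>\lceil e^{2p/3}\rceil$, i.e.\ $\tfrac{3}{2}\log T>p$, is exactly the slack required for $C_3\log T$ to clear the $p$-dependent value-encoding cost uniformly over $1\le m\le T$.

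The main obstacle I expect is this last uniform-in-$m$ balancing: controlling $Q(m)$ sharply enough — through Assumption~\ref{A6} and the $T^{1/2}$ grid — that the series closes at exactly $1$ rather than merely up to a universal constant, and tracking the $p$-dependence closely enough to land on the stated threshold $\lceil e^{2p/3}\rceil$. An equivalent and arguably cleaner route would be to exhibit an explicit prefix code on $\Gamma_T^p$ — first describe the partition, then the quantized value on each cell — check that its codelength never exceeds $Pen(\boldsymbol{\tilde\theta}(u,v))$, and deduce \eqref{Inequality} from the classical Kraft inequality; the accounting is identical, merely repackaged.
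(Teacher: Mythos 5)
Your overall strategy matches the paper's: discard the nonnegative group-lasso term, stratify the sum by the cardinality of the associated partition, count partitions of cardinality $m$ via $\binom{T-1}{m-1}$, and close with a series estimate that is $\le 1$ once $T^{-3/2}e^{p}\le 1$, i.e.\ $T>\lceil e^{2p/3}\rceil$. Two deviations are worth recording. First, the paper does not keep the $p$-vector of lag coefficients together on each cell as you do; it factorizes the sum over $\Gamma_T^{p}$ into a product over lags $\ell=1,\dots,p$, assigning each lag a per-cell budget of $(3/2p)\log T$, and bounds each factor by $\sum_{d'}\binom{T-1}{d'}T^{-(d'+1)(3/2p)}$ using $\binom{T-1}{d'}\le (T-1)^{d'}/d'!$ and the exponential series. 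The factor $e^{p}$ in the final bound $T^{-3/2}e^{p}$ --- and hence the threshold $T>\lceil e^{2p/3}\rceil$ --- arises from this product over lags, not from the cost of encoding quantized values as you surmise in your third paragraph, so your account of where the threshold comes from is not the paper's.

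Second, and more substantively: your factor $Q(m)$ is a step the paper's written proof does not contain. The paper identifies the count of elements of $\Gamma_T^{(d_\ell)}$ with the number of partitions $\binom{T-1}{d_\ell-1}$ and never multiplies by the number of grid assignments, relegating the discretization entirely to the ``straightforward but tedious numerical analysis arguments'' it declares omitted (following \cite{kolaczyk2005multiscale}). If you do carry $Q(m)$ along, your claim that it is polynomial in $T$ with exponent linear in $m$ and absorbable by the per-cell budget $C_3\log T$ does not close at the stated constants: with $T^{1/2}$ grid points per coordinate and $p$ coordinates per cell, $Q(m)\sim T^{pm/2}$, so the per-cell cost for RP is $(1+p/2)\log T$, which exceeds $(3/2)\log T$ whenever $p>1$. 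You must either follow the paper and count partitions only (accepting the same deferral of the discretization accounting), or recognize that the value multiplicity would have to be paid for by the $\lambda\sum_{\mathcal{I}}\|\boldsymbol{\tilde\theta}_{\mathcal{I}}(u,v)\|_2$ term you discarded at the outset; it cannot be absorbed by $C_3\log T$ per cell as your sketch asserts. Your Catalan-number bound for $N_{RDP}(m)$ is a nice explicit version of the RDP case, which the paper dispatches with ``the argument follows analogously.''
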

\begin{proof}
We prove Lemma \ref{Lemma.kraft} for the case of recursive partitioning. We write $\Gamma_T = \bigcup_{d_\ell=1}^T \Gamma_T^{(d_\ell)}$ where $\Gamma_T^{(d_\ell)}$ is the subset of values $\boldsymbol{\tilde\theta}_t^{(\ell)}(u, v)$ that is composed of $d_\ell$ constant valued sequences. For example, $\Gamma_T^{(d_\ell)}$ consists of all length $T$ sequences such that there are exactly $d_\ell$ alternating sequences of zero and nonzero elements. So, for example, $(0, 0, 4, 0, 0)$ and $(2, 0, 1, 1, 1)$ might be two such sequences in $\Gamma_5^{(3)}$. Then we have
\begin{align*}
\sum_{\boldsymbol{\tilde\theta}(u, v) \in \Gamma_T^p} e^{-pen(\boldsymbol{\tilde\theta}(u, v))} &= \sum_{\boldsymbol{\tilde\theta}(u, v) \in \Gamma_T^p} e^{-(3/2)\log T\{\#\PP(\boldsymbol{\tilde\theta)}\} - \lambda\sum\limits_{\mathcal{I}\in \PP(\boldsymbol{\tilde\theta})}\|\boldsymbol{\tilde\theta}_\mathcal{I}(u, v) \|_2} \\
& \leq \sum_{\boldsymbol{\tilde\theta}(u, v) \in \Gamma_T^p} e^{-(3/2)\log T\{\#\PP(\boldsymbol{\tilde\theta)}\}}\\
& \leq \prod_{\ell=1}^p  \sum_{\boldsymbol{\tilde\theta}^{(\ell)}(u, v) \in \Gamma_T} e^{-(3/2p)\log T\{\#\PP(\boldsymbol{\tilde\theta)}\}}\\
& = \prod_{\ell=1}^p \sum_{d_\ell=1}^T\binom{T-1}{d_\ell-1} e^{-d_\ell(3/2p)\log T}\\
& = \prod_{\ell=1}^p  \sum_{d^{\ell^\prime}=0}^{T-1}\binom{T-1}{d^{\ell^\prime}}e^{-(d^{\ell^\prime}+1)(3/2p)\log T} \\
& = \prod_{\ell=1}^p  \sum_{d\prime=0}^{T-1} \frac{(T-1)!}{d^{\ell^\prime} ! (T-d^{\ell^\prime}-1)!} T^{-(d^{\ell^\prime}+1)(3/2p)}\\
& \leq \prod_{\ell=1}^p T^{-(3/2p)} \sum_{d^{\ell^\prime}=0}^{T-1} \frac{(T-1)^{d^{\ell^\prime}}}{d^{\ell^\prime}!}\frac{1}{T^{(3/2p)d^{\ell^\prime}}}\\
& \leq T^{-(3/2)} e^p
\end{align*}
which is bounded by $1$ for any $T > \lceil e^{2p/3} \rceil$. The argument follows analogously for the case of  recursive dyadic partitioning.
\end{proof}

Using the loss function and the corresponding risk function we defined before, recovering the neighborhood of node $u$ is essentially a univariate Gaussian time series problem, and thus the KL divergence of the conditional likelihood function takes the form:
\begin{align*}
K(p_{\boldsymbol\theta},p_{\boldsymbol{\tilde\theta}}) = \mathbb{E}\left\{\log \frac{p_{\boldsymbol\theta}(\x)}{p_{\boldsymbol{\tilde\theta}}(\x)}\right\} = \mathbb{E}\left\{\sum_{t=1}^T \log \frac{p_{\boldsymbol\theta}(X_t(u))}{p_{\boldsymbol{\tilde\theta}}(X_t(u))}\right\} = \sum_{t=1}^T (\tilde\mu_t-\mu_t)^2 / (2\sigma^2)
\end{align*}
where each $\mu_t$ is the mean of $X_t(u)$, and $\tilde\mu_t$ is an approximation/estimate thereof, for a given estimator $\boldsymbol{\tilde\theta}$. Since these means in turn are based on linear combinations of all neighborhood observations, over $p$ lags, we have:
\begin{align*}
\tilde\mu_t - \mu_t = \sum_{v\in V \backslash \{u\}}\sum_{\ell=1}^p  X_{t-\ell}(v)[\tilde\theta_t^{(\ell)}(u, v) - \theta_t^{(\ell)}(u, v)]
\end{align*}
So the KL divergence for each neighborhood problem involves values at other nodes.

Assume without loss of generality that $\sigma \equiv 1$. From (\ref{bounded.expectation}) and the fact that the K-L divergence in the Gaussian case is simply proportional to a squared $\ell_2$-norm, the risk of estimating $\boldsymbol{\theta}$ by $\boldsymbol{\hat\theta}$ should be in the form:
{\small
\begin{align*}
\mathbb{R}(\hat{\boldsymbol\theta}, \boldsymbol\theta) &\leq \min_{\boldsymbol{\tilde\theta} \in \Gamma_T^{(N-1)p}}\left\{\frac{1}{T}K(p_{\boldsymbol\theta},p_{\boldsymbol{\tilde\theta}}) + \frac{2}{T}\sum_{v=1}^{N-1}  Pen(\boldsymbol{\tilde\theta}(u,v))\right\} \\
&\leq \min_{\boldsymbol{\tilde\theta} \in \Gamma_T^{(N-1)p}} \left\{\frac{1}{2T}\left\|\boldsymbol{\tilde\mu}-\boldsymbol\mu\right\|_2^2 + \frac{\lambda}{T}\sum_{\mathcal{I} \in \PP(\boldsymbol{\tilde\theta})}\sum_{v=1}^{N-1}\|\boldsymbol{\tilde\theta}_\mathcal{I}(u,v)\|_2 + \frac{2}{T}\sum_{v=1}^{N-1}(3/2)\log T \#\{\PP(\boldsymbol{\tilde\theta})\}\right\}\\
\end{align*}}

From Cauchy-Schwarz, we have that
{\small
\begin{align}
\mathbb{R}(\hat{\boldsymbol\mu}, \boldsymbol\mu) 
&\leq  \min_{\boldsymbol{\tilde\theta} \in \Gamma_T^{(N-1)p}}\left\{\frac{1}{2T}\|{\X(-u)}^\prime\X(-u)\|_2 \sum_{t=1}^T\sum_{v=1}^{N-1}\sum_{\ell=1}^p\left(\tilde\theta_t^{(\ell)}(u,v) - \theta_t^{(\ell)}(u,v)\right)^2 \right.\nonumber\\
&\quad\quad +\left.
 \frac{\lambda}{T}\sum_{\mathcal{I} \in \PP(\boldsymbol{\tilde\theta})}\sum_{v=1}^{N-1}\|\boldsymbol{\tilde\theta}_{\mathcal{I}}(u,v)\|_2 + 3(N-1)\frac{\log T}{T}\#\{\PP(\boldsymbol{\tilde\theta})\}\right\} \nonumber\\
&\leq \min_{\boldsymbol{\tilde\theta} \in \Gamma_T^{(N-1)p}}\left\{\frac{1}{2}\Lambda \sum_{v=1}^{N-1}\sum_{\ell=1}^p\left\|\boldsymbol{\tilde\theta}_t^{(\ell)}(u,v) - \boldsymbol{\theta}_t^{(\ell)}(u,v)\right\|_2^2 \right.\nonumber\\
&\quad\quad \left. + \frac{\lambda}{T}\sum_{\mathcal{I} \in \PP(\boldsymbol{\tilde\theta})}\sum_{v=1}^{N-1}\|\boldsymbol{\tilde\theta}_{\mathcal{I}}(u,v)\|_2 + 3(N-1)\frac{\log T}{T}\#\{\PP(\boldsymbol{\tilde\theta})\}\right\} \enskip .
\label{bound}
\end{align}

The minimization of the expression (\ref{bound}) tries to find the optimal balancing of bias and variance. To bound it, the following $L_2$ result from \cite{donoho1993unconditional} plays the core role.
 
\begin{Lem}\label{Lemma.donoho}
Let $\theta_{(\cdot)}^{(\ell)}(u,v) \in BV(C)$. Define ${\theta_{bd}}_{(\cdot)}^{(\ell)}(u,v)$ to be the best $d$-term approximant to $\theta_{(\cdot)}^{(\ell)}(u,v)$ in the dyadic Haar basis for $L_2([0, 1])$. Then $\|{\theta_{bd}}^{(\ell)}(u,v) - \theta^{(\ell)}(u,v)\|_{L_2} = \mathcal{O}(d^{-1})$.
\end{Lem}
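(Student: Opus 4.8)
Lemma~\ref{Lemma.donoho} is the classical $L_2$ best-$n$-term (nonlinear) approximation bound for functions of bounded variation in the Haar system, going back to~\cite{donoho1993unconditional}, so the plan is to recall its proof. Fixing $\ell,u,v$ and abbreviating $\theta = \theta^{(\ell)}_{(\cdot)}(u,v)$, I would work in the $L_2([0,1])$-orthonormal dyadic Haar system $\{h_I\}$, $I$ dyadic, together with the constant function (its single coefficient affects neither the construction nor the rate). The starting observation is that in an orthonormal basis the best $d$-term approximant $\theta_{bd}$ retains the $d$ largest-magnitude coefficients, so by Parseval $\|\theta_{bd}-\theta\|_{L_2}^2 = \sum_{n>d}(\theta^*_n)^2$, where $(\theta^*_n)_{n\ge1}$ is the non-increasing rearrangement of $\{|\langle\theta,h_I\rangle|\}$; everything reduces to estimating this rearrangement.

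\emph{Step 1 (scale-wise $\ell_1$ control of the coefficients).} For a dyadic interval $I$ of length $2^{-j}$ with halves $I_L,I_R$ one has $\langle\theta,h_I\rangle = \tfrac12|I|^{1/2}\bigl(\mathrm{avg}_{I_L}\theta - \mathrm{avg}_{I_R}\theta\bigr)$, and the difference of the two averages is at most the oscillation of $\theta$ on $I$, hence at most its total variation $V(\theta;I)$; thus $|\langle\theta,h_I\rangle|\le\tfrac12\,2^{-j/2}V(\theta;I)$. Summing over the $2^j$ dyadic intervals at scale $j$ and using that total variation is superadditive over a partition, $\sum_{|I|=2^{-j}}V(\theta;I)\le V(\theta;(0,1])\le C$, gives the key bound $A_j := \sum_{|I|=2^{-j}}|\langle\theta,h_I\rangle|\le \tfrac12\,C\,2^{-j/2}$ for all $j\ge0$.

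\emph{Step 2 (from scale-wise bounds to weak-type decay of $\theta^*_n$).} This is the step I expect to carry the weight. For $\delta>0$ the number of Haar coefficients at scale $j$ with $|\langle\theta,h_I\rangle|>\delta$ is at most $\min\bigl(2^j,\,A_j/\delta\bigr)\le\min\bigl(2^j,\,\tfrac12 C 2^{-j/2}/\delta\bigr)$; the two bounds cross near $2^{3j/2}\asymp C/\delta$, so summing $2^j$ over the coarser scales below the crossover and $A_j/\delta$ over the finer scales above it — each a geometric sum controlled by its crossover term — yields $\#\{n:\theta^*_n>\delta\}\le c\,(C/\delta)^{2/3}$ for an absolute constant $c$, equivalently $\theta^*_n\le c'C\,n^{-3/2}$. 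This balancing is precisely why one cannot merely truncate at a fixed scale: retaining all coefficients up to a common scale $J$ (a linear scheme) gives, via Step~1, only error $\le(\sum_{j\ge J}A_j^2)^{1/2}\lesssim C\,2^{-J/2}$, i.e.\ $\mathcal{O}(d^{-1/2})$ for $d\asymp 2^J$ terms; the rate $\mathcal{O}(d^{-1})$ needs the largest coefficients kept irrespective of scale, and at fine scale $j$ only $\mathcal{O}(2^{-j/2})$ of the $2^j$ coefficients are large enough to survive.

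\emph{Step 3 (conclusion).} Inserting $\theta^*_n\le c'C\,n^{-3/2}$ into the Parseval identity from the first paragraph gives $\|\theta_{bd}-\theta\|_{L_2}^2\le (c'C)^2\sum_{n>d}n^{-3}\le (c'C)^2/(2d^2)$, hence $\|\theta_{bd}-\theta\|_{L_2}=\mathcal{O}(d^{-1})$ with constant proportional to $C$. Since the only inputs are $V(\theta;(0,1])\le C$ and the elementary Haar identity, this holds for each coefficient function $\theta^{(\ell)}_{(\cdot)}(u,v)$ under Assumption~\ref{A6}, which is exactly the form in which the lemma is used in the proof of Theorem~\ref{Riskbound}.
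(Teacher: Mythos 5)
The paper does not prove this lemma at all---it is stated as an imported result with a pointer to \cite{donoho1993unconditional}---and your argument is precisely the standard proof behind that citation: the scale-wise bound $\sum_{|I|=2^{-j}}|\langle\theta,h_I\rangle|\lesssim C2^{-j/2}$ for BV functions, the resulting weak-$\ell^{2/3}$ decay $\theta^*_n\lesssim Cn^{-3/2}$ of the rearranged coefficients, and the Parseval tail sum giving the $\mathcal{O}(d^{-1})$ rate. Each step is correct, so your proposal correctly fills in the proof the paper delegates to the reference, with no substantive deviation from the classical route.
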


Define ${\boldsymbol{\theta}_{bd}}^{(\ell)}(u,v)$ to be the average sampling of ${\theta_{bd}}^{(\ell)}(u,v)$ on the interval $I_i$, that is ${\boldsymbol{\theta}_{bd}}^{(\ell)}(u,v) = T\int_{I_i}{\theta_{bd}}^{(\ell)}(u,v)(t) dt$. Then let ${\boldsymbol{\tilde\theta}_{bd}}^{(\ell)}(u,v)$ be the result of discretizing the elements of ${\boldsymbol{\theta}_{bd}}^{(\ell)}(u,v)$ to the set $D_T[-C, C]$, where $C$ is the radius of the bounded variation ball defined in Assumption \ref{A6}. We have the following by triangle inequality:
\begin{align}
\left\|\boldsymbol{\tilde\theta}^{(\ell)}(u,v) - \boldsymbol{\theta}^{(\ell)}(u,v)\right\|_{\ell_2}^2 
&\leq  \left\|{\boldsymbol{\theta}_{bd}}^{(\ell)}(u,v) - \boldsymbol{\theta}^{(\ell)}(u,v)\right\|_{\ell_2}^2 + \left\|\boldsymbol{\tilde\theta}^{(\ell)}(u,v) - {\boldsymbol{\theta}_{bd}}^{(\ell)}(u,v)\right\|_{\ell_2}^2 \nonumber\\
&+ 2\left\|{\boldsymbol{\theta}_{bd}}^{(\ell)}(u,v) - \boldsymbol{\theta}^{(\ell)}(u,v)\right\|_{\ell_2}\left\|\boldsymbol{\tilde\theta}^{(\ell)}(u,v) - \boldsymbol{\theta}^{(\ell)}(u,v)\right\|_{\ell_2}.
\label{complete.approx}
\end{align}
For sequence ${\boldsymbol{\theta}_{bd}}^{(\ell)}(u,v)$ and ${\boldsymbol{\tilde\theta}_{bd}}^{(\ell)}(u,v)$ obtained from average sampling, a simple argument relating Haar function on the discrete set $D_T[-C, C]$ to the functions on the interval $[0, 1]$ is to show that 
\begin{align*}
\frac{1}{T}\left\|{\boldsymbol{\tilde\theta}_{bd}}^{(\ell)}(u,v) - \boldsymbol{\theta}^{(\ell)}(u,v)\right\|_{\ell_2}^2 \leq \left\|\theta_{bd}^{(\ell)}(u,v) - \theta^{(\ell)}(u,v) \right\|_{L2}^2 \enskip .
\end{align*}
See equation (27) of \cite{kolaczyk2005multiscale}. On the right hand side of (\ref{complete.approx}), the first resulting squared term will be of order $\mathcal{O}(Td^{-2})$. The second term is a discretization error and by lemma (\ref{Lemma.donoho}) is of order $\mathcal{O}(1)$. The third cross-term is therefore of order $\mathcal{O}(T^{1/2}d^{-1})$.

Given these results, we have the following bound of equation (\ref{bound}) by bounding the bias term over each $\Gamma_T^{(d)}$, where $d = \bigcup_i d_i$, for each $d_i$ and $i = 1,\cdots, (N-1)p$. We then we optimize for $d$:
{\small
\begin{align}
&\min_{\boldsymbol{\tilde\theta} \in {\Gamma_T^{(N-1)p}}^{(d)}}
\left\{\frac{1}{2}\Lambda \sum_{v=1}^{N-1}\sum_{\ell=1}^p\left\|\boldsymbol{\tilde\theta}^{(\ell)}(u,v) - \boldsymbol\theta^{(\ell)}(u,v)\right\|_2^2 \right.\nonumber\\
&\quad\quad \left. + \frac{\lambda}{T}\sum_{\mathcal{I} \in \PP(\boldsymbol{\tilde\theta})}\sum_{v=1}^{N-1}\|\boldsymbol{\tilde\theta}_{\mathcal{I}}(u,v)\|_2 + 3(N-1)\frac{\log T}{T}\#\{\PP(\boldsymbol{\tilde\theta})\}\right\}
\label{discrete.bound}
\end{align}
}
The first term is dominated by the first part of expression (\ref{complete.approx}) and is of order $\mathcal{O}(\Lambda Td^{-2})$. In the second term, we have $\frac{\lambda}{T}\sum_{\mathcal{I} \in \PP(\boldsymbol{\tilde\theta})}\sum_{v=1}^{N-1}\|\boldsymbol{\tilde\theta}_{\mathcal{I}}(u,v)\|_2$, which are the group lasso terms. Given the fact that $\theta_{(\cdot)}^{(\ell)}(u,v)$ is of $BV(C)$, we have that $1/(T^{1/2})\|\boldsymbol{\tilde\theta}_{\mathcal{I}}(u,v)\|_2$ is of order $\mathcal{O}(C + d^{-1})$. Note that $\lambda$ is of order $T^{-1/2}$ and the number of interval $\#\{\PP(\boldsymbol{\tilde\theta})\}$ is proportional to $d$. So the second term is of order $\mathcal{O}(T^{-1} *d * (C + d^{-1}) )$. The third term is of order $\mathcal{O}(dT^{-1}\log T)$. Combining the above results, we have that: 
\begin{align*}
& \min_{\boldsymbol{\tilde\theta} \in {\Gamma_T^{(N-1)p}}^{(d)}}
\left\{\frac{1}{2}\Lambda \sum_{v=1}^{N-1}\sum_{\ell=1}^p\left\|\boldsymbol{\tilde\theta}^{(\ell)}(u,v) - \boldsymbol\theta^{(\ell)}(u,v)\right\|_2^2 \right.\nonumber\\
&\quad\quad \left. + \frac{\lambda}{T}\sum_{\mathcal{I} \in \PP(\boldsymbol{\tilde\theta})}\sum_{v=1}^{N-1}\|\boldsymbol{\tilde\theta}_{\mathcal{I}}(u,v)\|_2 + 3(N-1)\frac{\log T}{T}\#\{\PP(\boldsymbol{\tilde\theta})\}\right\}\\
&\leq \mathcal{O}(\Lambda Td^{-2}) + \mathcal{O}(T^{-1} *d * (C + d^{-1}) ) + \mathcal{O}(dT^{-1}\log T),
\end{align*}
which is minimized for $d \sim (\Lambda T^2/\log T)^{1/3}$. Substitution then yields the result that the risk is bounded by a quantity of order $\mathcal{O}((\Lambda\log ^2T/T)^{1/3})$. For estimation via recursive dyadic partitioning, where $\#\{\PP(\tilde\theta)\}$ is proportional to $d\log T$, the expression is minimized at $d \sim (\Lambda T^2 /\log^2 T)^{1/3}$, which gives the bound of the risk of order $\mathcal{O}(\Lambda \log^4T/T)^{1/3}$.
\end{document}